\PassOptionsToPackage{dvipsnames}{xcolor}
\documentclass[11pt,letterpaper]{article}

\usepackage[utf8]{inputenc}
\usepackage{tabularx}
\usepackage[table]{xcolor}

\usepackage{wrapfig}

\usepackage[margin=1in]{geometry}
\usepackage{setspace}

\usepackage{tikz}
\usetikzlibrary{arrows}

\usepackage{mathtools}

\usepackage{subfigure}
\usepackage{float}
\usepackage[justification=centering]{caption}
\usepackage{graphicx}
\usepackage{amsmath,amsthm,amssymb,amsfonts}
\usepackage{algorithm}
\usepackage{listings}
\usepackage[colorlinks,linkcolor=blue]{hyperref}
\usepackage{url}

\usepackage{colortbl}
\usepackage{authblk}

\newtheorem{theorem}{Theorem}
\newtheorem{lemma}{Lemma}
\newtheorem{definition}{Definition}

\newtheorem{problem}{Problem}

\lstset{
   basicstyle=\footnotesize\selectfont\ttfamily,
   numbers=left,
   numbersep=4pt,
   xleftmargin=1.2em,
   numberstyle=\scriptsize\color{Gray},
   keywordstyle=\color{BlueViolet}\textbf,
   morekeywords={struct, var, val, fun, CAS, FAA, GetAndSet,in, inner, func, or, and, not, parallel, DCSS, DCSS_Read, LL, SC, override, repeat, atomic, private, lambda, is, when, constructor, for, int, float, new, for, while, break, return, if, else, null}
}

\newcommand{\col}[2]{\cellcolor[HTML]{#1}#2}
\newcommand{\pc}[0]{\col{b7e1cd}{}}
\newcommand{\npc}[0]{\col{f4c7c3}{}}
\newcommand{\qc}[0]{\col{fce8b2}{?}}

\renewcommand{\leq}{\leqslant} 
 
\renewcommand{\geq}{\geqslant}

\title{On the Complexity of the Virtual Network Embedding \\ in Specific Tree Topologies}

\author[1]{Sergey Pankratov}
\affil[1]{ITMO University, Russia}

\author[2]{Vitaly Aksenov}
\affil[2]{City, University of London, United Kingdom}

\author[3]{Stefan Schmid}
\affil[3]{TU Berlin, Germany}

\date{}

\begin{document}

\singlespacing

\maketitle

\begin{abstract}
Virtual networks are an innovative abstraction that extends cloud computing concepts to the network:
by supporting bandwidth reservations between compute nodes (e.g., virtual machines), virtual networks can provide a predictable performance to distributed and communication-intensive cloud applications.
However, in order to make the most efficient use of the shared resources, the Virtual Network Embedding (VNE) problem has to be solved:
a virtual network should be mapped onto the given physical network so that resource reservations are minimized.
The problem has been studied intensively already and is known to be NP-hard in general.
In this paper, we revisit this problem and consider it on specific topologies, as they often arise in practice. 

To be more precise, we study the weighted version of the VNE problem: we consider a virtual weighted network of a specific topology which we want to embed onto a weighted network with capacities and specific topology.
As for topologies, we consider most fundamental and commonly used ones: line, star, $2$-tiered star, oversubscribed $2$-tiered star, and tree, in addition to also considering arbitrary topologies.
We show that typically the VNE problem is NP-hard even in more specialized cases, however, sometimes there exists a polynomial algorithm: 
for example, an embedding
of the oversubscribed $2$-tiered star onto the tree is polynomial while an embedding of an arbitrary $2$-tiered star is not.

\end{abstract}

\newpage

\section{Introduction}

In the era of data-intensive applications and AI, the network has become a performance bottleneck.
Accordingly, over the last years, the Cloud computing paradigm is extended to include the network interconnecting distributed applications:
by providing bandwidth guarantees between compute nodes (e.g., processes, virtual machines, or containers), the resulting virtual networks promise a predictable application performance.
In order to optimally benefit from resource allocation flexibility and sharing, for example, a virtual network operator needs to efficiently map such a virtual network (VN) on a given physical network (PN).
This combinatorial optimization problem is known as the virtual network embedding (VNE) problem and is NP-hard in the general case.

The main objective of VNE is to map the networks in a way that all resource requirements are satisfied while optimizing the cost.
The typical resource requirements are link bandwidths.
In our work, we consider satisfying the bandwidth requirements and optimizing the cost of VNE on weighted networks both simultaneously and separately.

It is known that VNE of a generic virtual network onto a generic physical network is typically NP-complete~\cite{RostS20_complexity}.
That means, there is no polynomial time algorithm that solves VNE in the most generic case, unless P $=$ NP.
One of the ways to simplify it is to consider topology restrictions on the virtual and physical networks as done in~\cite{RostFS15_stars, Wu20_path, FigielKNR0Z21_Tree}.
In our work, we revisit the commonly-used topologies such as line~\cite{DiazPS02_survey}, star~\cite{RostFS15_stars}, 2-Tiered star or $2$-star~\cite{BallaniCKR11_virtualNet}, and tree~\cite{Leiserson85_fat}.
In particular, we study the important case when these topologies have weights on top of them.

We use the following formulation of the VNE. 
At first, we ask the physical and virtual networks to have the same number of computational nodes and there should be a bijection between them.
As for the resource requirements, the capacity of each link in the physical network should not be exceeded by the bandwidth demand of edges in the virtual network mapped on them.
As for the cost, each edge in the physical network has cost per request, thus, our goal is to optimize the total cost of all requests.
We consider three types of the problem: 1)~minimizing the total cost (weight) with unlimited capacities of the physical network, wVNE; 2)~only capacity requirements without the cost minimization, cVNE; and 3)~both capacity requirements and minimization together, wcVNE or, simply, VNE.
Our goal is to understand which topology requirements lead to polynomial algorithms in these variations while proving NP-completeness for the rest.
Some of our algorithms and proofs can be generalized to other cost functions, and sometimes they can work without the bijection constraint.

There are a several results that are very close to our work and we want to overview them here.
Ballani et al.~\cite{BallaniCKR11_virtualNet} introduced popular virtual network abstractions: a virtual cluster (a uniform star) and an oversubscribed virtual cluster (an oversubscribed $2$-star).
Rost et al.~\cite{RostFS15_stars} presented a polynomial algorithm of an embedding of an uniform star VN.
Our paper continues their research by considering non-uniform networks and $2$-star topologies.
As for linear topologies, Wu et al.~\cite{Wu20_path} showed that cVNE of a uniform line on an arbitrary graph is NP-hard.
In our work, we specify the topologies of PN by tree ones: an arbitrary tree, a star, and a $2$-star.
In addition, wVNE of a graph on a uniform line is equivalent to the \emph{Minimal Linear Arrangement}~\cite{DiazPS02_survey} problem, which is NP-hard for arbitrary graphs~\cite{Garey74_someNP}.
We continue further by considering wVNE on VN with tree topologies, i.e., an arbitrary tree, a star, and a $2$-star.
More discussion on the related work appears closer to the end of the paper, in Section~\ref{sec:related}.

\begin{table}[ht!]
    
    \centering
    \caption{Our wVNE Results \\
    Red color indicates NP-complete problems, green  color is for problems with a polynomial algorithm, while yellow color denoted an open question.
    An arrow in a cell indicates that the complexity of the variant follows from the version pointed onto.
    }\label{table:wVNE:res}
    \centering
    \begin{tabular}{ |>{\raggedright}l|l|l|l|l|l|l| }
    \hline
    VN & Generic PN & Tree PN & Line PN & Uni. PN & Uni. Tree PN & Uni. Line PN \\
    \hline
    Uni. line 
    & \npc Thm.~\ref{th:ULEP_wVNE_NPC}\phantom{0} & \pc $\S$~\ref{sec:line_on_tree_DP} & \pc Trivial
    & \npc Thm.~\ref{th:ULEP_wVNE_NPC} & \pc $\S$~\ref{sec:line_on_tree_DP} & \pc $\leftarrow$ \\
    \hline
    Line 
    & \npc $\rightarrow$ & \npc $\rightarrow$ & \npc Thm.~\ref{th:WLLEP_NPC}  
    & \npc $\rightarrow$ & \npc Thm.~\ref{th:WLUTEP_NPC} & \pc Trivial \\
    \hline
    Uni. star 
    & \pc \cite{RostFS15_stars} & \pc $\leftarrow$ & \pc $\leftarrow$ 
    & \pc $\leftarrow$ & \pc $\leftarrow$ & \pc $\leftarrow$ \\
    \hline
    Star
    & \pc Thm.~\ref{th:w_star} & \pc $\leftarrow$ & \pc $\leftarrow$  
    & \pc Thm.~\ref{th:w_star} & \pc $\leftarrow$ & \pc $\leftarrow$  \\
    \hline
    Overs. $2$-star 
    & \npc $\S$~\ref{th:oversub_2Star_wVNE_NP} & \pc $\S$~\ref{sec:oversub_on_tree} & \pc $\leftarrow$  
    & \npc $\S$~\ref{th:oversub_2Star_wVNE_NP} & \pc $\S$~\ref{sec:oversub_on_tree} & \pc $\leftarrow$ \\
    \hline
    $2$-star 
    & \npc $\rightarrow$ & \npc $\rightarrow$ & \npc Thm.~\ref{th:W2SEP_NPC}  
    & \npc $\rightarrow$ & \npc Thm.~\ref{th:WU_2SEP_NPC} & \qc \\
    \hline
    \end{tabular}
\end{table}

\begin{table}[ht!]
    
    \centering
    \caption{Our VNE and cVNE Results\\
    Red color indicates NP-complete problems, while green color is for problems with a polynomial algorithm.
    An arrow in a cell indicates that the complexity of the variant follows from the version pointed onto.
    }\label{table:cVNE:res}
    \begin{tabular}{ |>{\raggedright}l|l|l|l|l| }
    \hline
    VN & Generic PN & Tree PN & Line PN & Star PN \\
    \hline
    Uni. line 
    & \npc \cite{Wu20_path} & \pc $\S$~\ref{sec:line_on_tree_DP} & \pc Trivial & \pc $\downarrow$ \\
    \hline
    Line 
    & \npc $\rightarrow$ & \npc $\rightarrow$ & \npc Thm.~\ref{th:WLLEP_NPC} & \pc $\S$~\ref{sec:star_PN_Poly} \\
    \hline
    Uni. star 
    & \pc \cite{RostFS15_stars} & \pc $\leftarrow$ & \pc $\leftarrow$ & \pc $\downarrow$ \\
    \hline
    Star
    & \npc $\rightarrow$ & \npc $\rightarrow$ & \npc Thm.~\ref{th:CSTEP_NPC} & \pc $\S$~\ref{sec:star_PN_Poly} \\
    \hline
    Overs. $2$-star
    & \npc Thm.~\ref{th:oversub_2Star_cVNE_NP} & \pc $\S$~\ref{sec:oversub_on_tree} & \pc $\leftarrow$ & \pc $\downarrow$ \\
    \hline
    $2$-star 
    & \npc $\rightarrow$ & \npc $\rightarrow$ & \npc Thm.~\ref{th:W2SEP_NPC} & \pc $\S$~\ref{sec:star_PN_Poly} \\
    \hline
    \end{tabular}

\end{table}

\subsection{Contributions} \label{sec:contribution}

In this work, we conduct a comprehensive study of the complexity of VNE problem under various topological constraints.
We cover the variations that were previously unknown to be NP-complete or polynomial, thus, these findings can be critical for future work on VNE.
The results for the various combinations of VN and PN topologies for wVNE, cVNE, and VNE are shown in Tables~\ref{table:wVNE:res}~and~\ref{table:cVNE:res}.
In the tables, columns correspond to PN topologies and rows correspond to VN topologies.
Namely, we provide the results for the following VN topologies: (uniform) line, (uniform) star, and (oversubscribed) $2$-star; 
and the following PN topologies: (uniform) generic, (uniform) tree, (uniform) line, star.
Here ``generic'' indicates unrestricted topologies.
Uniform means that the weights or costs of edges in a network are equal to one.
A $2$-star is simply a two-level tree, while an oversubscribed $2$-star is a special case of $2$-star that we introduce later.
Red cells indicate NP-complete problems, while green ones are for the problems with the polynomial algorithm.
All cells are provided with a reference to corresponding proofs.
For polynomial VNE problems, we provide polynomial time algorithms to solve them.
An arrow in a cell indicates that the complexity of the variant follows from the version pointed to.
Note that since the results for cVNE and VNE are identical, we group them in Table~\ref{table:cVNE:res}.

Now, we briefly discuss the obtained results.
At first, we consider wVNE problem.
We find that when both virtual and physical networks are non-uniform, wVNE remains NP-complete even for line-on-line embedding (Theorem~\ref{th:WLLEP_NPC}).
Interestingly, wVNE is polynomial for a star (and uniform star) VN (Theorem~\ref{th:w_star}) continuing the work in~\cite{RostFS15_stars}.
We further show that increasing the depth of a star VN from one to two already makes wVNE NP-complete for any considered version (Theorem~\ref{th:W2SEP_NPC}).
However, under the constraint of being oversubscribed (the special version of the $2$-star) wVNE becomes polynomial for Tree PN (and consequently line PN) (Section~\ref{sec:oversub_on_tree}).
The only remaining non-solved case is the complexity of embedding a $2$-star on a uniform line which is marked with a question mark.

As mentioned, the results for VNE and cVNE could be considered together in Table~\ref{table:cVNE:res}.
When PN does not have a restricted topology only uniform star VN has a polynomial algorithm while the variation with unconstrained star VN does not.
However, when PN becomes a star all considered variations of VNE are polynomial as shown in Section~\ref{th:W2SEP_NPC}.
Finally, when PN has line or tree topology we get similar results: for uniform line, uniform star, and oversubscribed $2$-star we provide the polynomial algorithm, while all other versions are NP-complete.

\subsection{Model and Problem}
\label{subsec:model}
Now we introduce all the necessary definitions.
Let $\mathcal{G}$ be a set of physical networks (or, simply, graphs) with a common \emph{topology} that defines the network arrangement: for example, linear topologies, tree topologies, etc.
For a set of $n$ computational nodes $V_G$, a \emph{physical network (PN)} $G \in \mathcal{G}$ is an undirected connected graph $(V_G, E_G)$.
Likewise, $\mathcal{S}$  denotes a set of virtual networks of a given topology.
A \emph{virtual network} (VN) is an undirected connected graph $S = (V_S, E_S) \in \mathcal{S}$, such that $|V_S| = |V_G| = n$.
Vertices in the VN represent some computation agents. 
Let $W : E_S \rightarrow \mathbb{N}$ be the VN demand function.
Each edge $e = (i, j) \in E_S$ is assigned some demand $w_{e} \geq 0$~--- the number of routing requests from agent $i$ to agent $j$.
Let $C: V_G \cup E_G \rightarrow \mathbb{N}$ be the PN capacity function.
$c_e \geq 0$ denotes the capacity of edge $e$, meaning the maximum number of requests from Virtual Network that can go through the edge $e$ is $c_e$.
Finally, let $T: E_G \rightarrow \mathbb{R}^{+} \cup \{0\}$ be the PN cost function.
$t_{e} \geq 0$ represents the cost of traversing a single request through the edge~$e$.

A \emph{node embedding} function $f_V: V_S \rightarrow V_G $ is a function that maps vertices of VN to vertices of PN.
Any node embedding $f_V$ maintains a bijection between $V_S$ and $V_G$.
In other words, each agent of $S$ is assigned to a single node of $G$ and each computational node of $G$ holds exactly one agent from $S$.
Let $Paths_G$ be a set of all unique paths in $G$ and $Paths_G(i, j)$ be a set of all unique paths from $i$ to $j$ in $G$.
An \emph{edge embedding} function $f_E: E_S \rightarrow Paths_G$ is a function that maps edges of VN to paths of PN.
It is required that $f_E((i, j)) \in Paths_G(f_V(i), f_V(j))$.
Please note that we do not allow to map different requests between the same nodes to different paths.
An \emph{embedding} $f: V_S \cup E_S \rightarrow V_G \cup Paths_G$ is a union of node and edge embeddings $f = f_V \cup f_E$.
$\mathcal{D}(S, G)$ is a set of all possible embeddings of VN $S$ on PN $G$.

\begin{definition}\label{def:Cost}
Given $f \in \mathcal{D}(S, G)$~--- an embedding for some virtual network $S$ and some physical network $G$, the embedding cost of $f$ is:

$$ Cost(f) = \sum_{e \in V_S}w_{e} \cdot len(f(e))$$

In the equation above, $w_{e}$ is the demand of $e$, and $len(f(e))$ is the length of the path $f(e)$.
\end{definition}

\emph{Virtual network embedding} (VNE) is a problem of finding an embedding $f \in \mathcal{D}(S, G)$, such that 1)~PN capacities are satisfied and 2)~the embedding cost is minimized.
To minimize the cost we need to find a minimal constant $C$, such that there exists an embedding $f$ with $Cost(f) \leq C$.
The decision version of VNE is described as follows:

\begin{problem}[Virtual Network Embedding (VNE)]\label{def:wcVNE}
Given VN $S = (V_S, E_S) \in \mathcal{S}$ with demand $W$, PN $G = (V_G, E_G) \in \mathcal{G}$ with capacity $C$ and cost $T$, and constant $\theta$ is the desired cost.

\noindent Question: Is there an embedding $f \in \mathcal{D}(S, G)$, such that

\begin{gather}
    \forall e \in E_G: \sum_{\substack{x \in E_S \\ e \in f(x)}} w_{x}  \leq c_e \label{def:wcVNE:cap} \\
    Cost(f) \leq \theta \label{def:wcVNE:cost}
\end{gather}
\end{problem}

This definition includes both capacity restrictions (\ref{def:wcVNE:cap}) and cost optimization (\ref{def:wcVNE:cost}) metrics.
In this paper, we also consider two additional restricted VNE problems: with one of the two requirements.
VNE with the cost optimization is called \emph{weighted} (wVNE) and is examined in Section~\ref{sec:wVNE}.
VNE with the capacity restrictions is called \emph{capacitated} (cVNE) and is studied in Section~\ref{sec:cVNE}.
VNE with both metrics is considered in Section~\ref{sec:wcVNE} and mostly is referred to as wcVNE for clarity.

\paragraph{Topologies} 

The main focus of the paper is to study the complexity of VNE with different topologies of VN and of PN.
Let VNE \emph{variant} $\mathcal{S}$ topology on $\mathcal{G}$ topology be a special case of VNE with topological restrictions: VN $\in \mathcal{S}$ and PN $\in \mathcal{G}$.
It implies the problem of embedding a VN with topology $\mathcal{S}$ on a PN with topology $\mathcal{G}$.
In this paper we cover a variety of graph topologies, namely line, star, and $2$-star for VN; line and tree for PN.
We assume there is no topological restriction on VN or PN if it is not specified.

\paragraph{Topologies: uniform graphs}

The \emph{uniform graph} can be interpreted as a weighted graph with all weights equal to some constant $B$.
Without the loss of generality, we can assume $B = 1$.

For each graph topology covered in this paper, we also consider the VNE variations with uniform graphs: in PN the length of each edge is $1$ and in VN the weight of each edge is $1$.

\paragraph{Topologies: line}

A \emph{linear graph} (also referred to as \emph{line}) is a graph where vertices are connected sequentially from left to right.
We define an element at position $i$ from the left  as the $i$-th node.

\paragraph{Topologies: star}

A \emph{star graph} is a tree of depth one with a central vertex of degree $n - 1$.
For simplicity, we denote the index of a center as $n$ and $w_i = w_{i, n}$ is the weight of an edge from the $i$-th vertex to the central vertex.

\paragraph{Toplogies: $2$-star}

We call a tree of depth two a \emph{two-tiered star} or simply a $2$-star.
Additionally, we consider an oversubscribed $2$-star~\cite{BallaniCKR11_virtualNet} which is the special case of a $2$-star.
In the oversubscribed $2$-star each subtree on the second level (otherwise, known as \emph{group}) has the same number of leaves $s$. 
The demand of all edges in these subtrees is one.
The connections from these subtrees to the root are oversubscribed: 
the demand of an edge between the root and a subtree is $\frac{s}{o}$, where $o$ is the \emph{oversubscription factor}.
In the case where $o = s$, we get exactly a uniform $2$-star with subtrees of equal sizes.

\paragraph{Roadmap} 

In Section~\ref{sec:wVNE}, we consider wVNE, i.e., VNE without capacity requirements while optimizing the total length of requests.
In Section~\ref{sec:cVNE}, we consider cVNE, i.e., VNE with capacity requirements without the cost minimization.
In Section~\ref{sec:wcVNE}, we consider VNE (or wcVNE) with capacity requirements while minimizing the total length of requests.
In Section~\ref{sec:related} we discuss all the previous work connected to our problem.
We conclude with Section~\ref{sec:conclusion}.

\section{VNE with weights and without capacities}\label{sec:wVNE}

\emph{Weighted virtual network embedding} (wVNE) is a problem of finding an embedding $f \in \mathcal{D}(S, G)$ of minimal cost (without capacity restrictions).
For the wVNE problem we can assume that $f \in \mathcal{D}(S, G)$ always maps edges of VN on the cheapest paths of PN.
This is possible since we omit the capacity restrictions in this problem.
Hence, it is easy to check that the cheapest path is always the optimal.

The general wVNE without topological restrictions is NP-hard~\cite{HouidiLBZ11} and decision wVNE is NP-complete.
Since the generic variant of wVNE is NP-complete, then any restricted wVNE variant is at \textbf{worst} NP-complete.
Thus, it is sufficient to provide NP-hardness proof of wVNE variants to show their NP-completeness.

In addition to that, some restricted variants can be solved in polynomial time.
For example, exploiting a uniform star virtual cluster yields a polynomial algorithm~\cite{RostFS15_stars}.
In this section, we check whether some variants of wVNE are NP-complete or belong to P.
In particular, we deal with tree topologies such as line (Section~\ref{sec:wVNE_line}), star (Section~\ref{sec:wVNE_star}), or $2$-star (Section~\ref{sec:wVNE_2star}).

\subsection{wVNE of Linear VN}\label{sec:wVNE_line}

In this section we consider wVNE with VN being a line.
We start by exploring the case with uniform VN since NP-completeness results will also apply to non-uniform scenario.

\paragraph{Uniform linear VN topologies}

We show NP-completeness of wVNE with uniform linear VN and an arbitrary uniform PN. For that we use the following NP-complete problem~\cite{Karp72_NP}:

\begin{problem}[Hamiltonian path problem (HAM)]
    Given an undirected graph $G_H$, determine whether there exists a path that visits each vertex exactly once.
\end{problem}

Wu et al.~\cite{Wu20_path} proved a theorem, similar to Theorem~\ref{th:ULEP_wVNE_NPC}. 
They reduced Supereulerian graph problem to cVNE of uniform linear VN on a uniform PN.
We provide the same result for wVNE instead of cVNE.
Ignoring details, the reductions are similar.
However for our purposes, it is still necessary to show NP-completeness of embedding a uniform line in wVNE.

\begin{theorem}\label{th:ULEP_wVNE_NPC}
wVNE of a uniform linear VN on a uniform PN is NP-complete. \end{theorem}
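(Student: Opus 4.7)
The plan is to reduce from the Hamiltonian Path problem (HAM). Given an instance $G_H = (V_H, E_H)$ of HAM with $|V_H| = n$, I would build a wVNE instance in which the PN is $G_H$ itself (viewed as uniform: every edge has cost $1$) and the VN is a uniform line $L_n$ on $n$ vertices. The desired cost threshold will be $\theta = n-1$. Membership in NP is immediate: given a candidate embedding $f$, one verifies that $f_V$ is a bijection and evaluates $Cost(f)$ in polynomial time, since shortest paths in a uniform PN are computable by BFS.

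The correctness of the reduction relies on the fact that, under the uniform edge costs on the PN, an embedding of any edge of the VN has cost at least $1$, with equality precisely when the two endpoints are mapped to adjacent vertices in $G_H$. Since $L_n$ has exactly $n-1$ edges, any embedding satisfies $Cost(f) \geq n-1$. I would then argue the two directions:
\begin{itemize}
\item (HAM $\Rightarrow$ wVNE.) If $u_1, u_2, \dots, u_n$ is a Hamiltonian path in $G_H$, define $f_V(v_i) = u_i$ and route each edge $(v_i, v_{i+1})$ of $L_n$ along the corresponding edge $(u_i, u_{i+1}) \in E_H$. This is a valid embedding with $Cost(f) = n-1 \leq \theta$.
\item (wVNE $\Rightarrow$ HAM.) Conversely, if $f$ satisfies $Cost(f) \leq n-1$, then the bound $Cost(f) \geq n-1$ forces each edge of $L_n$ to be mapped to a path of length exactly $1$ in $G_H$, i.e., to an actual edge. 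Combined with the bijectivity of $f_V$, the sequence $f_V(v_1), \dots, f_V(v_n)$ is then a Hamiltonian path in $G_H$.
\end{itemize}

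The reduction is computable in polynomial time (indeed, linear in the size of $G_H$), which completes the proof of NP-hardness, and together with NP-membership yields NP-completeness.

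The main subtlety, such as it is, lies in the lower bound argument: one must invoke the remark made earlier in Section~\ref{sec:wVNE} that in wVNE without capacities edges are always routed along cheapest paths, so the cost contributed by $(v_i,v_{i+1})$ is exactly the distance in $G_H$ between $f_V(v_i)$ and $f_V(v_{i+1})$, which is at least $1$ whenever the two images differ (they do, by the bijection). Everything else is a routine packaging, and the argument closely mirrors the reduction from Supereulerian graphs used by Wu et al.~\cite{Wu20_path} for cVNE, replacing the capacity constraint with a cost threshold.
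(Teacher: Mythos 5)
Your reduction is exactly the one the paper uses for Theorem~\ref{th:ULEP_wVNE_NPC}: take the HAM instance $G_H$ as the uniform PN, a uniform line $L_n$ as the VN, threshold $\theta = n-1$, and argue both directions via the lower bound $Cost(f)\geq n-1$. The proposal is correct and matches the paper's proof in all essentials (you additionally spell out NP-membership and the cheapest-path remark, which the paper leaves implicit).
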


\begin{proof}
    This can be shown by reducing HAM to wVNE.
    Let $G_H$ be the graph with $n$ vertices and $L_n$ be a line of length $n$.
    $G_H$ has a Hamiltonian path if and only if there exists an embedding $f \in \mathcal{D}(L_n, G_H)$ with $Cost(f) = n - 1$.

    We now show the correctness of the reduction.
    
    $\Rightarrow$: First we note that if there is some Hamiltonian path in $G_H$, then there is a correct embedding $f$ of $L_n$ onto $G_H$.
    That is, $f(i)$ is the vertex in $G$ that is $i$-th on the Hamiltonian path.
    Since each vertex is visited exactly once by the path, each edge on the path is also visited exactly once.
    Meaning, the cost of embedding equals the total length of a Hamiltonian path $Cost(f) = n - 1$.
    It cannot be smaller since the number of edges in line is $n - 1$.

    $\Leftarrow$: Now, we show that if there is a correct mapping $f \in \mathcal{D}(L, G_H)$ of cost $n - 1$, then there exists a Hamiltonian path in $G_H$.
    Since $Cost(f) = n - 1$ and $L_n$ contains $n - 1$ edges, each edge of $L$ is placed on a single edge of $G_H$.
    
    Consequently, the mapping forms a non-self-intersecting path of length $n - 1$, which is a Hamiltonian path.
\end{proof}

As for other VNE scenarios with uniform line VN: it can be embedded on a tree PN in polynomial time as discussed further in Section~\ref{sec:line_on_tree_DP}.

\paragraph{Weighted linear VN topologies}

In this paragraph, we discuss the variation of the wVNE problem where we have a linear VN and a linear PN.
In Theorem~\ref{th:WLLEP_NPC} we prove that this problem is NP-complete.

First, we need to introduce another well-known NP-complete problem.

\begin{problem}[Bin packing problem (BPP)]
    Given an array of $n$ positive integers $A = [a_1, a_2, \ldots, a_n]$ and two positive integers $B$ and $K$, such that $\sum a_i \leq B \cdot K$.
    
    \noindent Question: Is it possible to partition $A$ into $K$ disjoint subsets, such that the sum of the elements in each subset does not exceed $B$?
\end{problem}

BPP is strongly NP-complete~\cite{GareyJ79_NP}, meaning that it remains NP-complete even when the numerical inputs are in the unary notation, i.e., $a_i$ is presented with $a_i$ ones.
In other words, it is NP-complete when $\sum a_i$ is a variable.

Now, we can prove the NP-hardness of linear VN on linear PN wVNE problem by reducing the BPP to it.

\begin{figure}[!t]
    \centering

\tikzset{every picture/.style={line width=0.75pt}} %set default line width to 0.75pt        

\begin{tikzpicture}[x=0.75pt,y=0.75pt,yscale=-1,xscale=1]
%uncomment if require: \path (0,448); %set diagram left start at 0, and has height of 448

%Shape: Brace [id:dp6427000859249603] 
\draw   (80.53,252.94) .. controls (80.53,257.61) and (82.86,259.94) .. (87.53,259.94) -- (160.27,259.94) .. controls (166.94,259.94) and (170.27,262.27) .. (170.27,266.94) .. controls (170.27,262.27) and (173.6,259.94) .. (180.27,259.94)(177.27,259.94) -- (253,259.94) .. controls (257.67,259.94) and (260,257.61) .. (260,252.94) ;
%Shape: Brace [id:dp9107162172274574] 
\draw   (280,253) .. controls (280,257.67) and (282.33,260) .. (287,260) -- (360,260) .. controls (366.67,260) and (370,262.33) .. (370,267) .. controls (370,262.33) and (373.33,260) .. (380,260)(377,260) -- (453,260) .. controls (457.67,260) and (460,257.67) .. (460,253) ;
%Shape: Brace [id:dp5991755522572635] 
\draw   (160,293.37) .. controls (160.01,298.04) and (162.34,300.37) .. (167.01,300.36) -- (260.01,300.28) .. controls (266.68,300.27) and (270.01,302.6) .. (270.01,307.27) .. controls (270.01,302.6) and (273.34,300.27) .. (280.01,300.26)(277.01,300.26) -- (373.01,300.18) .. controls (377.68,300.17) and (380.01,297.84) .. (380,293.17) ;
%Straight Lines [id:da2937119284027816] 
\draw [color={rgb, 255:red, 214; green, 39; blue, 40 }  ,draw opacity=1 ]   (90,140) -- (130,140) ;
%Straight Lines [id:da24544858025725502] 
\draw [color={rgb, 255:red, 214; green, 39; blue, 40 }  ,draw opacity=1 ]   (130,140) -- (170,140) ;
%Straight Lines [id:da8274278222660747] 
\draw [color={rgb, 255:red, 214; green, 39; blue, 40 }  ,draw opacity=1 ]   (170,140) -- (210,140) ;
%Straight Lines [id:da8248940995759735] 
\draw [color={rgb, 255:red, 214; green, 39; blue, 40 }  ,draw opacity=1 ]   (210,140) -- (250,140) ;
%Straight Lines [id:da7657986604416247] 
\draw    (250,140) -- (290,140) ;
%Straight Lines [id:da5911665560881354] 
\draw [color={rgb, 255:red, 31; green, 119; blue, 180 }  ,draw opacity=1 ]   (290,140) -- (330,140) ;
%Straight Lines [id:da3308121002881068] 
\draw [color={rgb, 255:red, 31; green, 119; blue, 180 }  ,draw opacity=1 ]   (330,140) -- (370,140) ;
%Straight Lines [id:da08023891943641992] 
\draw    (370,140) -- (410,140) ;
%Straight Lines [id:da047941958763424886] 
\draw    (410,140) -- (450,140) ;
%Shape: Circle [id:dp10711455514812451] 
\draw  [color={rgb, 255:red, 214; green, 39; blue, 40 }  ,draw opacity=1 ][fill={rgb, 255:red, 255; green, 255; blue, 255 }  ,fill opacity=1 ] (80,140) .. controls (80,134.48) and (84.48,130) .. (90,130) .. controls (95.52,130) and (100,134.48) .. (100,140) .. controls (100,145.52) and (95.52,150) .. (90,150) .. controls (84.48,150) and (80,145.52) .. (80,140) -- cycle ;
%Shape: Circle [id:dp6418519855912053] 
\draw  [color={rgb, 255:red, 214; green, 39; blue, 40 }  ,draw opacity=1 ][fill={rgb, 255:red, 255; green, 255; blue, 255 }  ,fill opacity=1 ] (120,140) .. controls (120,134.48) and (124.48,130) .. (130,130) .. controls (135.52,130) and (140,134.48) .. (140,140) .. controls (140,145.52) and (135.52,150) .. (130,150) .. controls (124.48,150) and (120,145.52) .. (120,140) -- cycle ;
%Shape: Circle [id:dp46794066249480615] 
\draw  [color={rgb, 255:red, 214; green, 39; blue, 40 }  ,draw opacity=1 ][fill={rgb, 255:red, 255; green, 255; blue, 255 }  ,fill opacity=1 ] (160,140) .. controls (160,134.48) and (164.48,130) .. (170,130) .. controls (175.52,130) and (180,134.48) .. (180,140) .. controls (180,145.52) and (175.52,150) .. (170,150) .. controls (164.48,150) and (160,145.52) .. (160,140) -- cycle ;
%Shape: Circle [id:dp49030204774790653] 
\draw  [color={rgb, 255:red, 214; green, 39; blue, 40 }  ,draw opacity=1 ][fill={rgb, 255:red, 255; green, 255; blue, 255 }  ,fill opacity=1 ] (200,140) .. controls (200,134.48) and (204.48,130) .. (210,130) .. controls (215.52,130) and (220,134.48) .. (220,140) .. controls (220,145.52) and (215.52,150) .. (210,150) .. controls (204.48,150) and (200,145.52) .. (200,140) -- cycle ;
%Shape: Circle [id:dp14365257999626846] 
\draw  [color={rgb, 255:red, 214; green, 39; blue, 40 }  ,draw opacity=1 ][fill={rgb, 255:red, 255; green, 255; blue, 255 }  ,fill opacity=1 ] (240,140) .. controls (240,134.48) and (244.48,130) .. (250,130) .. controls (255.52,130) and (260,134.48) .. (260,140) .. controls (260,145.52) and (255.52,150) .. (250,150) .. controls (244.48,150) and (240,145.52) .. (240,140) -- cycle ;
%Shape: Circle [id:dp8769551188489177] 
\draw  [color={rgb, 255:red, 31; green, 119; blue, 180 }  ,draw opacity=1 ][fill={rgb, 255:red, 255; green, 255; blue, 255 }  ,fill opacity=1 ] (280,140) .. controls (280,134.48) and (284.48,130) .. (290,130) .. controls (295.52,130) and (300,134.48) .. (300,140) .. controls (300,145.52) and (295.52,150) .. (290,150) .. controls (284.48,150) and (280,145.52) .. (280,140) -- cycle ;
%Shape: Circle [id:dp2683885697973962] 
\draw  [color={rgb, 255:red, 31; green, 119; blue, 180 }  ,draw opacity=1 ][fill={rgb, 255:red, 255; green, 255; blue, 255 }  ,fill opacity=1 ] (320,140) .. controls (320,134.48) and (324.48,130) .. (330,130) .. controls (335.52,130) and (340,134.48) .. (340,140) .. controls (340,145.52) and (335.52,150) .. (330,150) .. controls (324.48,150) and (320,145.52) .. (320,140) -- cycle ;
%Shape: Circle [id:dp002718882881262541] 
\draw  [color={rgb, 255:red, 31; green, 119; blue, 180 }  ,draw opacity=1 ][fill={rgb, 255:red, 255; green, 255; blue, 255 }  ,fill opacity=1 ] (360,140) .. controls (360,134.48) and (364.48,130) .. (370,130) .. controls (375.52,130) and (380,134.48) .. (380,140) .. controls (380,145.52) and (375.52,150) .. (370,150) .. controls (364.48,150) and (360,145.52) .. (360,140) -- cycle ;
%Shape: Circle [id:dp4956313645937074] 
\draw  [color={rgb, 255:red, 44; green, 160; blue, 44 }  ,draw opacity=1 ][fill={rgb, 255:red, 255; green, 255; blue, 255 }  ,fill opacity=1 ] (400,140) .. controls (400,134.48) and (404.48,130) .. (410,130) .. controls (415.52,130) and (420,134.48) .. (420,140) .. controls (420,145.52) and (415.52,150) .. (410,150) .. controls (404.48,150) and (400,145.52) .. (400,140) -- cycle ;
%Shape: Circle [id:dp5554841684199225] 
\draw  [color={rgb, 255:red, 0; green, 0; blue, 0 }  ,draw opacity=1 ][fill={rgb, 255:red, 255; green, 255; blue, 255 }  ,fill opacity=1 ] (440,140) .. controls (440,134.48) and (444.48,130) .. (450,130) .. controls (455.52,130) and (460,134.48) .. (460,140) .. controls (460,145.52) and (455.52,150) .. (450,150) .. controls (444.48,150) and (440,145.52) .. (440,140) -- cycle ;
%Straight Lines [id:da14261219852736273] 
\draw [color={rgb, 255:red, 214; green, 39; blue, 40 }  ,draw opacity=1 ]   (90,230) -- (130,230) ;
%Straight Lines [id:da1245656857896793] 
\draw [color={rgb, 255:red, 214; green, 39; blue, 40 }  ,draw opacity=1 ]   (130,230) -- (170,230) ;
%Straight Lines [id:da6884244137276765] 
\draw [color={rgb, 255:red, 214; green, 39; blue, 40 }  ,draw opacity=1 ]   (170,230) -- (210,230) ;
%Straight Lines [id:da8736514376811708] 
\draw [color={rgb, 255:red, 214; green, 39; blue, 40 }  ,draw opacity=1 ]   (210,230) -- (250,230) ;
%Straight Lines [id:da6186544829890857] 
\draw    (250,230) -- (290,230) ;
%Straight Lines [id:da18066232650107206] 
\draw [color={rgb, 255:red, 31; green, 119; blue, 180 }  ,draw opacity=1 ]   (290,230) -- (330,230) ;
%Straight Lines [id:da3847811011921374] 
\draw [color={rgb, 255:red, 31; green, 119; blue, 180 }  ,draw opacity=1 ]   (330,230) -- (370,230) ;
%Straight Lines [id:da5441292263327191] 
\draw    (370,230) -- (410,230) ;
%Straight Lines [id:da5282488385967823] 
\draw    (410,230) -- (450,230) ;
%Shape: Circle [id:dp9065286984643457] 
\draw  [color={rgb, 255:red, 214; green, 39; blue, 40 }  ,draw opacity=1 ][fill={rgb, 255:red, 255; green, 255; blue, 255 }  ,fill opacity=1 ] (80,230) .. controls (80,224.48) and (84.48,220) .. (90,220) .. controls (95.52,220) and (100,224.48) .. (100,230) .. controls (100,235.52) and (95.52,240) .. (90,240) .. controls (84.48,240) and (80,235.52) .. (80,230) -- cycle ;
%Shape: Circle [id:dp13217071396653157] 
\draw  [color={rgb, 255:red, 214; green, 39; blue, 40 }  ,draw opacity=1 ][fill={rgb, 255:red, 255; green, 255; blue, 255 }  ,fill opacity=1 ] (120,230) .. controls (120,224.48) and (124.48,220) .. (130,220) .. controls (135.52,220) and (140,224.48) .. (140,230) .. controls (140,235.52) and (135.52,240) .. (130,240) .. controls (124.48,240) and (120,235.52) .. (120,230) -- cycle ;
%Shape: Circle [id:dp6000563280442541] 
\draw  [color={rgb, 255:red, 214; green, 39; blue, 40 }  ,draw opacity=1 ][fill={rgb, 255:red, 255; green, 255; blue, 255 }  ,fill opacity=1 ] (160,230) .. controls (160,224.48) and (164.48,220) .. (170,220) .. controls (175.52,220) and (180,224.48) .. (180,230) .. controls (180,235.52) and (175.52,240) .. (170,240) .. controls (164.48,240) and (160,235.52) .. (160,230) -- cycle ;
%Shape: Circle [id:dp47992763043373343] 
\draw  [color={rgb, 255:red, 214; green, 39; blue, 40 }  ,draw opacity=1 ][fill={rgb, 255:red, 255; green, 255; blue, 255 }  ,fill opacity=1 ] (200,230) .. controls (200,224.48) and (204.48,220) .. (210,220) .. controls (215.52,220) and (220,224.48) .. (220,230) .. controls (220,235.52) and (215.52,240) .. (210,240) .. controls (204.48,240) and (200,235.52) .. (200,230) -- cycle ;
%Shape: Circle [id:dp10424375251390505] 
\draw  [color={rgb, 255:red, 214; green, 39; blue, 40 }  ,draw opacity=1 ][fill={rgb, 255:red, 255; green, 255; blue, 255 }  ,fill opacity=1 ] (240,230) .. controls (240,224.48) and (244.48,220) .. (250,220) .. controls (255.52,220) and (260,224.48) .. (260,230) .. controls (260,235.52) and (255.52,240) .. (250,240) .. controls (244.48,240) and (240,235.52) .. (240,230) -- cycle ;
%Shape: Circle [id:dp41233063209357357] 
\draw  [color={rgb, 255:red, 31; green, 119; blue, 180 }  ,draw opacity=1 ][fill={rgb, 255:red, 255; green, 255; blue, 255 }  ,fill opacity=1 ] (280,230) .. controls (280,224.48) and (284.48,220) .. (290,220) .. controls (295.52,220) and (300,224.48) .. (300,230) .. controls (300,235.52) and (295.52,240) .. (290,240) .. controls (284.48,240) and (280,235.52) .. (280,230) -- cycle ;
%Shape: Circle [id:dp5235795924240954] 
\draw  [color={rgb, 255:red, 31; green, 119; blue, 180 }  ,draw opacity=1 ][fill={rgb, 255:red, 255; green, 255; blue, 255 }  ,fill opacity=1 ] (320,230) .. controls (320,224.48) and (324.48,220) .. (330,220) .. controls (335.52,220) and (340,224.48) .. (340,230) .. controls (340,235.52) and (335.52,240) .. (330,240) .. controls (324.48,240) and (320,235.52) .. (320,230) -- cycle ;
%Shape: Circle [id:dp5594535324487484] 
\draw  [color={rgb, 255:red, 31; green, 119; blue, 180 }  ,draw opacity=1 ][fill={rgb, 255:red, 255; green, 255; blue, 255 }  ,fill opacity=1 ] (360,230) .. controls (360,224.48) and (364.48,220) .. (370,220) .. controls (375.52,220) and (380,224.48) .. (380,230) .. controls (380,235.52) and (375.52,240) .. (370,240) .. controls (364.48,240) and (360,235.52) .. (360,230) -- cycle ;
%Shape: Circle [id:dp9259890188633217] 
\draw  [color={rgb, 255:red, 44; green, 160; blue, 44 }  ,draw opacity=1 ][fill={rgb, 255:red, 255; green, 255; blue, 255 }  ,fill opacity=1 ] (400,230) .. controls (400,224.48) and (404.48,220) .. (410,220) .. controls (415.52,220) and (420,224.48) .. (420,230) .. controls (420,235.52) and (415.52,240) .. (410,240) .. controls (404.48,240) and (400,235.52) .. (400,230) -- cycle ;
%Shape: Circle [id:dp9856817267724911] 
\draw  [color={rgb, 255:red, 0; green, 0; blue, 0 }  ,draw opacity=1 ][fill={rgb, 255:red, 255; green, 255; blue, 255 }  ,fill opacity=1 ] (440,230) .. controls (440,224.48) and (444.48,220) .. (450,220) .. controls (455.52,220) and (460,224.48) .. (460,230) .. controls (460,235.52) and (455.52,240) .. (450,240) .. controls (444.48,240) and (440,235.52) .. (440,230) -- cycle ;
%Straight Lines [id:da25594345520531725] 
\draw [color={rgb, 255:red, 214; green, 39; blue, 40 }  ,draw opacity=1 ]   (140,70) -- (168.46,117.43) ;
\draw [shift={(170,120)}, rotate = 239.04] [fill={rgb, 255:red, 214; green, 39; blue, 40 }  ,fill opacity=1 ][line width=0.08]  [draw opacity=0] (10.72,-5.15) -- (0,0) -- (10.72,5.15) -- (7.12,0) -- cycle    ;
%Straight Lines [id:da4092134037387807] 
\draw [color={rgb, 255:red, 214; green, 39; blue, 40 }  ,draw opacity=1 ]   (170,160) -- (170,207) ;
\draw [shift={(170,210)}, rotate = 270] [fill={rgb, 255:red, 214; green, 39; blue, 40 }  ,fill opacity=1 ][line width=0.08]  [draw opacity=0] (10.72,-5.15) -- (0,0) -- (10.72,5.15) -- (7.12,0) -- cycle    ;
%Straight Lines [id:da8810000162604021] 
\draw [color={rgb, 255:red, 0; green, 0; blue, 0 }  ,draw opacity=1 ] [dash pattern={on 4.5pt off 4.5pt}]  (450,160) -- (450,207) ;
\draw [shift={(450,210)}, rotate = 270] [fill={rgb, 255:red, 0; green, 0; blue, 0 }  ,fill opacity=1 ][line width=0.08]  [draw opacity=0] (10.72,-5.15) -- (0,0) -- (10.72,5.15) -- (7.12,0) -- cycle    ;
%Straight Lines [id:da10173935594651695] 
\draw [color={rgb, 255:red, 31; green, 119; blue, 180 }  ,draw opacity=1 ]   (330,160) -- (330,207) ;
\draw [shift={(330,210)}, rotate = 270] [fill={rgb, 255:red, 31; green, 119; blue, 180 }  ,fill opacity=1 ][line width=0.08]  [draw opacity=0] (10.72,-5.15) -- (0,0) -- (10.72,5.15) -- (7.12,0) -- cycle    ;
%Straight Lines [id:da8493817951526208] 
\draw [color={rgb, 255:red, 44; green, 160; blue, 44 }  ,draw opacity=1 ]   (410,160) -- (410,207) ;
\draw [shift={(410,210)}, rotate = 270] [fill={rgb, 255:red, 44; green, 160; blue, 44 }  ,fill opacity=1 ][line width=0.08]  [draw opacity=0] (10.72,-5.15) -- (0,0) -- (10.72,5.15) -- (7.12,0) -- cycle    ;
%Straight Lines [id:da41752360550805956] 
\draw [color={rgb, 255:red, 31; green, 119; blue, 180 }  ,draw opacity=1 ]   (175,70) -- (327.14,119.08) ;
\draw [shift={(330,120)}, rotate = 197.88] [fill={rgb, 255:red, 31; green, 119; blue, 180 }  ,fill opacity=1 ][line width=0.08]  [draw opacity=0] (10.72,-5.15) -- (0,0) -- (10.72,5.15) -- (7.12,0) -- cycle    ;
%Straight Lines [id:da3524460688540243] 
\draw [color={rgb, 255:red, 44; green, 160; blue, 44 }  ,draw opacity=1 ]   (210,70) -- (407.09,119.27) ;
\draw [shift={(410,120)}, rotate = 194.04] [fill={rgb, 255:red, 44; green, 160; blue, 44 }  ,fill opacity=1 ][line width=0.08]  [draw opacity=0] (10.72,-5.15) -- (0,0) -- (10.72,5.15) -- (7.12,0) -- cycle    ;

% Text Node
\draw (32,61) node [anchor=west] [inner sep=0.75pt]  [font=\large]  {$A\ \ \ \ \ =\ \ \ \ \ \ [\textcolor[rgb]{0.84,0.15,0.16}{5} ,\ \ \ \textcolor[rgb]{0.12,0.47,0.71}{3} ,\ \ \ \textcolor[rgb]{0.17,0.63,0.17}{1}]$};
% Text Node
\draw (171,281) node    {$B$};
% Text Node
\draw (371,281) node    {$B$};
% Text Node
\draw (271,321) node    {$K$};
% Text Node
\draw (110.5,131) node    {$1$};
% Text Node
\draw (150.5,131) node    {$1$};
% Text Node
\draw (190.5,131) node    {$1$};
% Text Node
\draw (230.5,131) node    {$1$};
% Text Node
\draw (270.5,131) node    {$0$};
% Text Node
\draw (310.5,131) node    {$1$};
% Text Node
\draw (350.5,131) node    {$1$};
% Text Node
\draw (390.5,131) node    {$0$};
% Text Node
\draw (430.5,131) node    {$0$};
% Text Node
\draw (110.5,221) node    {$0$};
% Text Node
\draw (150.5,221) node    {$0$};
% Text Node
\draw (190.5,221) node    {$0$};
% Text Node
\draw (230.5,221) node    {$0$};
% Text Node
\draw (270.5,221) node    {$1$};
% Text Node
\draw (310.5,221) node    {$0$};
% Text Node
\draw (350.5,221) node    {$0$};
% Text Node
\draw (390.5,221) node    {$0$};
% Text Node
\draw (430.5,221) node    {$0$};
% Text Node
\draw (33,141) node [anchor=west] [inner sep=0.75pt]  [font=\large] [align=left] {VN:};
% Text Node
\draw (32,231) node [anchor=west] [inner sep=0.75pt]  [font=\large] [align=left] {PN:};

\end{tikzpicture}

    \caption{Reducing BPP to line on line wVNE for $A = [5, 3, 1], B = 5, K = 2$.}
    \label{fig:BPP_to_WLL}
    
\end{figure}
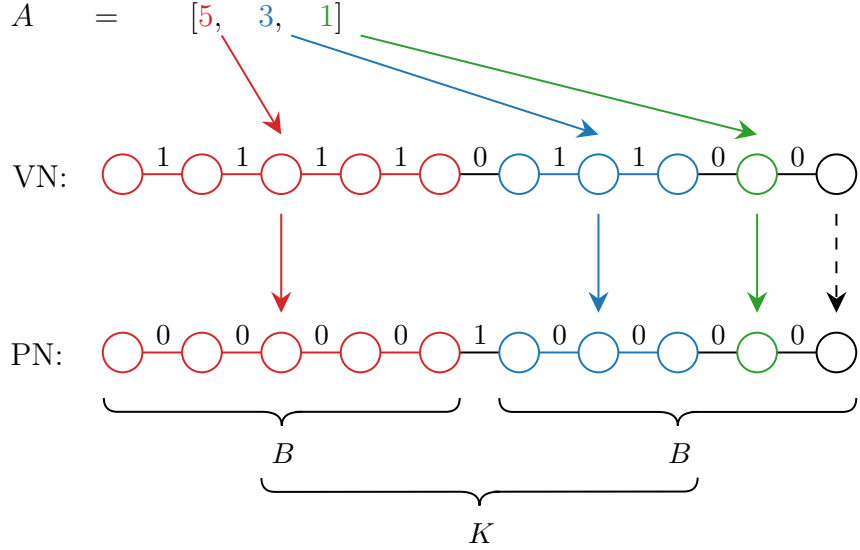

\begin{theorem}\label{th:WLLEP_NPC}
wVNE of a linear VN on a linear PN is NP-complete.
\end{theorem}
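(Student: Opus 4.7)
The plan is to reduce the (strongly NP-complete) Bin Packing Problem to line-on-line wVNE, following the construction illustrated in Figure~\ref{fig:BPP_to_WLL}. Given a BPP instance $(A, B, K)$ with $A = [a_1, \dots, a_n]$ and $\sum a_i \leq BK$, I would build the VN and PN as follows. The VN is a line on $BK$ nodes: for each item $a_i$, allocate a block of $a_i$ consecutive nodes whose $a_i - 1$ internal edges carry weight $1$; concatenate these blocks in order, separating them (and padding to length $BK$) with weight-$0$ edges. The PN is a line on $BK$ nodes grouped into $K$ consecutive segments of $B$ nodes each (the ``bins''); edges inside a segment have cost $0$, and the $K-1$ edges joining consecutive segments have cost $1$. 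The reduction is polynomial because BPP remains NP-complete under unary input, so $BK$ is polynomial in the instance size. Membership of wVNE in NP is immediate: an embedding is a polynomial-size certificate whose cost can be verified in polynomial time.

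The claim I would then establish is that the BPP instance is a YES-instance if and only if there exists $f \in \mathcal{D}(S,G)$ with $Cost(f) \leq 0$, i.e., with $Cost(f) = 0$. For the forward direction, given a valid packing that assigns items to bins with sums at most $B$, I map all $a_i$ nodes of item $i$ (in any order) into the PN segment of its assigned bin, and distribute the padding nodes over remaining slots. Every weight-$1$ VN edge is internal to a single item block, hence its endpoints land in the same PN segment, and the unique path between them in the line PN traverses only weight-$0$ edges. All other VN edges have weight $0$, so $Cost(f) = 0$.

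For the backward direction, suppose $Cost(f) = 0$. Since weights and path lengths are non-negative, every weight-$1$ VN edge must map to a PN path of total cost $0$, which means its endpoints lie in the same PN segment (any path crossing a segment boundary uses a weight-$1$ PN edge). The weight-$1$ edges inside one item block form a connected path through all $a_i$ nodes of the block, so by transitivity all $a_i$ nodes of item $i$ are placed within the same PN segment. This induces a partition of items into $K$ bins whose per-bin totals are bounded by the segment capacity $B$, yielding a BPP solution.

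The only subtle step, and the one I would scrutinize most carefully, is the backward direction: I must be sure that once an item block lives in a single segment of the PN, that segment still has room for the other items assigned to it, which is exactly the inequality $\sum_{i \in \text{bin}} a_i \leq B$ guaranteed because the segment holds only $B$ nodes and the item blocks are node-disjoint. The rest is routine. Together with the observation that any further topological restriction inherits NP-hardness from this case, this establishes Theorem~\ref{th:WLLEP_NPC}.
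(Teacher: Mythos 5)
Your proposal is correct and follows essentially the same route as the paper's own proof: the identical reduction from unary Bin Packing with weight-$1$ item blocks and weight-$0$ separators in the VN, $K$ zero-cost segments of $B$ nodes joined by cost-$1$ edges in the PN, and the equivalence of a valid packing with a cost-$0$ embedding argued in both directions the same way. The only cosmetic difference is terminology (your ``padding nodes'' are the paper's ``singleton sections'').
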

\begin{proof}
    The theorem is proven by the reduction of BPP with inputs in the unary notation to line-on-line wVNE as follows.
    
    Suppose we have some instance of BPP:
    
    $$\langle A = [a_1, a_2, \ldots, a_n], B, K \rangle$$
    
    For each $a_i \in A$, a linear graph $s_i$ with $a_i$ vertices is constructed. These $s_i$ graphs are called \emph{``element sections''}.
    Additionally, to match VN and PN sizes we create $B \cdot K - \sum a_i$ ``singleton sections'' that contain only one vertex each.
    All edges in ``element sections'', if there are any, have weight $1$. 
    To build a required VN $S$, all ``element sections'' are connected sequentially by $0$ weighted edges.
    (We allow the edges to have weight $0$ as defined in Section~\ref{subsec:model}).
    In other words, the last vertex of section $s_i$ is connected to the first vertex of section $s_{i+1}$ for $1 = 1, 2, \ldots, n - 1$.
    The size of VN $|S| = \sum a_i + (B \cdot K - \sum a_i) = B \cdot K$ by construction.

    PN $G$ consists of $K$ \emph{``bin sections''} $L_1 \ldots L_K$. Each ``bin section'' is a linear graph with $B$ vertices each.
    All edges in ``bin sections'' have $0$ weights.
    The ``bin sections'' are connected with $1$-weighted edges sequentially.
    
    For the clarification, please, see the example on Figure~\ref{fig:BPP_to_WLL}.

    There is a positive solution for BPP if and only if there is wVNE solution of $0$ cost for the graphs $S$ and $G$ constructed above.

    $\Rightarrow$: Assume there is some partition of $A$ into $K$ disjoint subsets $I_1, I_2, \ldots I_K$. 
    Then, there exists an embedding $f$, such that for all $j \leq K$ it maps all vertices of $s_i$ for all $a_i \in I_j$ to vertices of $L_j$. 
    That is possible since $|L_j| = B \geq |I_j|$. 
    Due to this mapping, all $1$-weighted edges from $S$ are mapped to $0$-weighted paths.
    Where we map ``singleton sections'' does not matter, since the demand of all the adjacent edges of ``singleton sections'' is $0$.
    Hence, $Cost(f) = 0$.

    $\Leftarrow$: Now, assume that for some input of BPP $\langle A, B, K \rangle$ there exist an embedding $f \in \mathcal{D}(S, G)$, such that $Cost(f) = 0$.
    That is only possible when every $1$-weighted edge from $S$ is mapped to a $0$-weighted path from $G$. 
    Mapping of all vertices of ``element section'' $s_i$ to a single ``bin section'' $L_j$ for all $i \leq n$ follows (otherwise there exist  $1$ weighted edge from $s_i$, that is not mapped to a $0$ weighted path).
    If ``element section'' $s_i$ is mapped to ``bin section'' $L_j$, element $a_i$ should be put in the bin $I_j$.

    The given reduction of BPP to line-on-line wVNE is $O(B \cdot K + \sum a_i)$ in time, which is polynomial in the length of the input.
    Note that the input is given in the unary notation.
\end{proof}

\paragraph{Weighted linear VN on uniform PN}

Knowing that the embedding of a uniform line on a graph with an arbitrary topology is NP-complete, the reasonable question to ask is whether it is NP-complete to embed a line on a uniform graph or at least on graphs with some restricted topology.
Embedding a line on a uniform line is straightforward.
So, we consider a more generic problem of embedding a line on a uniform tree.

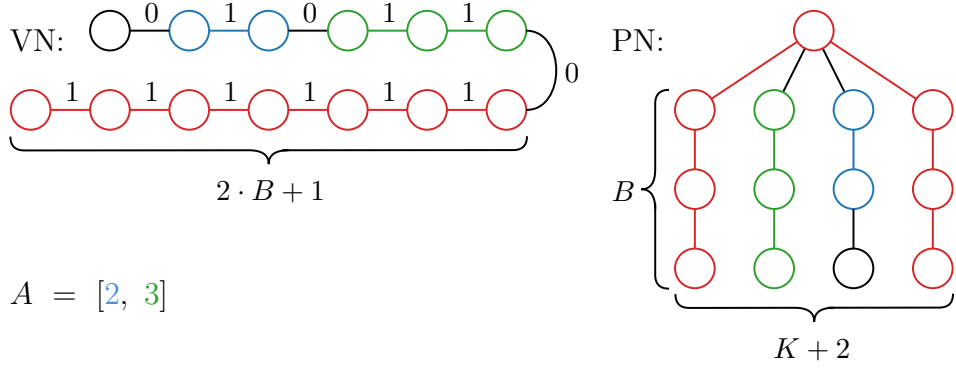
\begin{figure}[!t]
    \centering

\tikzset{every picture/.style={line width=0.75pt}} %set default line width to 0.75pt        

\begin{tikzpicture}[x=0.75pt,y=0.75pt,yscale=-1,xscale=1]
%uncomment if require: \path (0,448); %set diagram left start at 0, and has height of 448

%Shape: Circle [id:dp46656107367612476] 
\draw  [color={rgb, 255:red, 44; green, 160; blue, 44 }  ,draw opacity=1 ][fill={rgb, 255:red, 255; green, 255; blue, 255 }  ,fill opacity=1 ] (470,210) .. controls (470,204.48) and (474.48,200) .. (480,200) .. controls (485.52,200) and (490,204.48) .. (490,210) .. controls (490,215.52) and (485.52,220) .. (480,220) .. controls (474.48,220) and (470,215.52) .. (470,210) -- cycle ;
%Shape: Circle [id:dp6381307744727864] 
\draw  [color={rgb, 255:red, 44; green, 160; blue, 44 }  ,draw opacity=1 ][fill={rgb, 255:red, 255; green, 255; blue, 255 }  ,fill opacity=1 ] (470,250) .. controls (470,244.48) and (474.48,240) .. (480,240) .. controls (485.52,240) and (490,244.48) .. (490,250) .. controls (490,255.52) and (485.52,260) .. (480,260) .. controls (474.48,260) and (470,255.52) .. (470,250) -- cycle ;
%Straight Lines [id:da6844042741348644] 
\draw [color={rgb, 255:red, 44; green, 160; blue, 44 }  ,draw opacity=1 ]   (480,180) -- (480,200) ;
%Straight Lines [id:da06251825202546746] 
\draw    (500,130) -- (480,170) ;
%Shape: Circle [id:dp06909632095081109] 
\draw  [color={rgb, 255:red, 31; green, 119; blue, 180 }  ,draw opacity=1 ][fill={rgb, 255:red, 255; green, 255; blue, 255 }  ,fill opacity=1 ] (510,210) .. controls (510,204.48) and (514.48,200) .. (520,200) .. controls (525.52,200) and (530,204.48) .. (530,210) .. controls (530,215.52) and (525.52,220) .. (520,220) .. controls (514.48,220) and (510,215.52) .. (510,210) -- cycle ;
%Shape: Circle [id:dp868588197134923] 
\draw  [fill={rgb, 255:red, 255; green, 255; blue, 255 }  ,fill opacity=1 ] (510,250) .. controls (510,244.48) and (514.48,240) .. (520,240) .. controls (525.52,240) and (530,244.48) .. (530,250) .. controls (530,255.52) and (525.52,260) .. (520,260) .. controls (514.48,260) and (510,255.52) .. (510,250) -- cycle ;
%Straight Lines [id:da19158494420175076] 
\draw [color={rgb, 255:red, 31; green, 119; blue, 180 }  ,draw opacity=1 ]   (520,180) -- (520,200) ;
%Shape: Circle [id:dp07817078977947056] 
\draw  [color={rgb, 255:red, 214; green, 39; blue, 40 }  ,draw opacity=1 ][fill={rgb, 255:red, 255; green, 255; blue, 255 }  ,fill opacity=1 ] (550,210) .. controls (550,204.48) and (554.48,200) .. (560,200) .. controls (565.52,200) and (570,204.48) .. (570,210) .. controls (570,215.52) and (565.52,220) .. (560,220) .. controls (554.48,220) and (550,215.52) .. (550,210) -- cycle ;
%Shape: Circle [id:dp936208503911002] 
\draw  [color={rgb, 255:red, 214; green, 39; blue, 40 }  ,draw opacity=1 ][fill={rgb, 255:red, 255; green, 255; blue, 255 }  ,fill opacity=1 ] (550,250) .. controls (550,244.48) and (554.48,240) .. (560,240) .. controls (565.52,240) and (570,244.48) .. (570,250) .. controls (570,255.52) and (565.52,260) .. (560,260) .. controls (554.48,260) and (550,255.52) .. (550,250) -- cycle ;
%Straight Lines [id:da20103285129654602] 
\draw [color={rgb, 255:red, 214; green, 39; blue, 40 }  ,draw opacity=1 ]   (560,180) -- (560,200) ;
%Shape: Circle [id:dp21481103372875476] 
\draw  [color={rgb, 255:red, 214; green, 39; blue, 40 }  ,draw opacity=1 ][fill={rgb, 255:red, 255; green, 255; blue, 255 }  ,fill opacity=1 ] (430,210) .. controls (430,204.48) and (434.48,200) .. (440,200) .. controls (445.52,200) and (450,204.48) .. (450,210) .. controls (450,215.52) and (445.52,220) .. (440,220) .. controls (434.48,220) and (430,215.52) .. (430,210) -- cycle ;
%Shape: Circle [id:dp9093063035680622] 
\draw  [color={rgb, 255:red, 214; green, 39; blue, 40 }  ,draw opacity=1 ][fill={rgb, 255:red, 255; green, 255; blue, 255 }  ,fill opacity=1 ] (430,250) .. controls (430,244.48) and (434.48,240) .. (440,240) .. controls (445.52,240) and (450,244.48) .. (450,250) .. controls (450,255.52) and (445.52,260) .. (440,260) .. controls (434.48,260) and (430,255.52) .. (430,250) -- cycle ;
%Straight Lines [id:da8678902388127658] 
\draw [color={rgb, 255:red, 214; green, 39; blue, 40 }  ,draw opacity=1 ]   (440,180) -- (440,200) ;
%Straight Lines [id:da8027814134171805] 
\draw    (500,130) -- (520,170) ;
%Straight Lines [id:da22852745014153752] 
\draw [color={rgb, 255:red, 214; green, 39; blue, 40 }  ,draw opacity=1 ]   (500,130) -- (560,170) ;
%Shape: Circle [id:dp8461541437314795] 
\draw  [color={rgb, 255:red, 44; green, 160; blue, 44 }  ,draw opacity=1 ][fill={rgb, 255:red, 255; green, 255; blue, 255 }  ,fill opacity=1 ] (470,170) .. controls (470,164.48) and (474.48,160) .. (480,160) .. controls (485.52,160) and (490,164.48) .. (490,170) .. controls (490,175.52) and (485.52,180) .. (480,180) .. controls (474.48,180) and (470,175.52) .. (470,170) -- cycle ;
%Shape: Circle [id:dp5513098684337396] 
\draw  [color={rgb, 255:red, 31; green, 119; blue, 180 }  ,draw opacity=1 ][fill={rgb, 255:red, 255; green, 255; blue, 255 }  ,fill opacity=1 ] (510,170) .. controls (510,164.48) and (514.48,160) .. (520,160) .. controls (525.52,160) and (530,164.48) .. (530,170) .. controls (530,175.52) and (525.52,180) .. (520,180) .. controls (514.48,180) and (510,175.52) .. (510,170) -- cycle ;
%Shape: Circle [id:dp5487216655065015] 
\draw  [color={rgb, 255:red, 214; green, 39; blue, 40 }  ,draw opacity=1 ][fill={rgb, 255:red, 255; green, 255; blue, 255 }  ,fill opacity=1 ] (550,170) .. controls (550,164.48) and (554.48,160) .. (560,160) .. controls (565.52,160) and (570,164.48) .. (570,170) .. controls (570,175.52) and (565.52,180) .. (560,180) .. controls (554.48,180) and (550,175.52) .. (550,170) -- cycle ;
%Shape: Brace [id:dp2540654036555654] 
\draw   (430.2,262.97) .. controls (430.21,267.64) and (432.54,269.97) .. (437.21,269.96) -- (490.21,269.88) .. controls (496.88,269.87) and (500.21,272.2) .. (500.22,276.87) .. controls (500.21,272.2) and (503.54,269.86) .. (510.21,269.85)(507.21,269.86) -- (563.21,269.78) .. controls (567.88,269.77) and (570.21,267.44) .. (570.2,262.77) ;
%Shape: Brace [id:dp3735644341530733] 
\draw   (427,159.89) .. controls (422.33,159.89) and (420,162.22) .. (420,166.89) -- (420,200.06) .. controls (420,206.73) and (417.67,210.06) .. (413,210.06) .. controls (417.67,210.06) and (420,213.39) .. (420,220.06)(420,217.06) -- (420,253.24) .. controls (420,257.91) and (422.33,260.24) .. (427,260.24) ;
%Straight Lines [id:da7039078302551318] 
\draw [color={rgb, 255:red, 214; green, 39; blue, 40 }  ,draw opacity=1 ]   (500,130) -- (440,170) ;
%Shape: Circle [id:dp22002498813583493] 
\draw  [color={rgb, 255:red, 214; green, 39; blue, 40 }  ,draw opacity=1 ][fill={rgb, 255:red, 255; green, 255; blue, 255 }  ,fill opacity=1 ] (430,170) .. controls (430,164.48) and (434.48,160) .. (440,160) .. controls (445.52,160) and (450,164.48) .. (450,170) .. controls (450,175.52) and (445.52,180) .. (440,180) .. controls (434.48,180) and (430,175.52) .. (430,170) -- cycle ;
%Shape: Circle [id:dp9972281233676612] 
\draw  [color={rgb, 255:red, 214; green, 39; blue, 40 }  ,draw opacity=1 ][fill={rgb, 255:red, 255; green, 255; blue, 255 }  ,fill opacity=1 ] (490,130) .. controls (490,124.48) and (494.48,120) .. (500,120) .. controls (505.52,120) and (510,124.48) .. (510,130) .. controls (510,135.52) and (505.52,140) .. (500,140) .. controls (494.48,140) and (490,135.52) .. (490,130) -- cycle ;
%Shape: Circle [id:dp16038955617689976] 
\draw  [color={rgb, 255:red, 214; green, 39; blue, 40 }  ,draw opacity=1 ][fill={rgb, 255:red, 255; green, 255; blue, 255 }  ,fill opacity=1 ] (214.96,170) .. controls (214.96,164.48) and (219.44,160) .. (224.96,160) .. controls (230.48,160) and (234.96,164.48) .. (234.96,170) .. controls (234.96,175.52) and (230.48,180) .. (224.96,180) .. controls (219.44,180) and (214.96,175.52) .. (214.96,170) -- cycle ;
%Shape: Circle [id:dp385983529100955] 
\draw  [color={rgb, 255:red, 214; green, 39; blue, 40 }  ,draw opacity=1 ][fill={rgb, 255:red, 255; green, 255; blue, 255 }  ,fill opacity=1 ] (254.96,170) .. controls (254.96,164.48) and (259.44,160) .. (264.96,160) .. controls (270.48,160) and (274.96,164.48) .. (274.96,170) .. controls (274.96,175.52) and (270.48,180) .. (264.96,180) .. controls (259.44,180) and (254.96,175.52) .. (254.96,170) -- cycle ;
%Shape: Circle [id:dp5701981144823003] 
\draw  [color={rgb, 255:red, 214; green, 39; blue, 40 }  ,draw opacity=1 ][fill={rgb, 255:red, 255; green, 255; blue, 255 }  ,fill opacity=1 ] (294.96,170) .. controls (294.96,164.48) and (299.44,160) .. (304.96,160) .. controls (310.48,160) and (314.96,164.48) .. (314.96,170) .. controls (314.96,175.52) and (310.48,180) .. (304.96,180) .. controls (299.44,180) and (294.96,175.52) .. (294.96,170) -- cycle ;
%Shape: Circle [id:dp920766038387278] 
\draw  [color={rgb, 255:red, 214; green, 39; blue, 40 }  ,draw opacity=1 ][fill={rgb, 255:red, 255; green, 255; blue, 255 }  ,fill opacity=1 ] (334.96,170) .. controls (334.96,164.48) and (339.44,160) .. (344.96,160) .. controls (350.48,160) and (354.96,164.48) .. (354.96,170) .. controls (354.96,175.52) and (350.48,180) .. (344.96,180) .. controls (339.44,180) and (334.96,175.52) .. (334.96,170) -- cycle ;
%Straight Lines [id:da8777392665357682] 
\draw [color={rgb, 255:red, 214; green, 39; blue, 40 }  ,draw opacity=1 ]   (234.96,170) -- (254.96,170) ;
%Straight Lines [id:da6456894906249651] 
\draw [color={rgb, 255:red, 214; green, 39; blue, 40 }  ,draw opacity=1 ]   (274.96,170) -- (294.96,170) ;
%Straight Lines [id:da5430429468861075] 
\draw [color={rgb, 255:red, 214; green, 39; blue, 40 }  ,draw opacity=1 ]   (314.96,170) -- (334.96,170) ;
%Shape: Circle [id:dp5009056844664495] 
\draw  [color={rgb, 255:red, 31; green, 119; blue, 180 }  ,draw opacity=1 ][fill={rgb, 255:red, 255; green, 255; blue, 255 }  ,fill opacity=1 ] (214.96,130) .. controls (214.96,124.48) and (219.44,120) .. (224.96,120) .. controls (230.48,120) and (234.96,124.48) .. (234.96,130) .. controls (234.96,135.52) and (230.48,140) .. (224.96,140) .. controls (219.44,140) and (214.96,135.52) .. (214.96,130) -- cycle ;
%Straight Lines [id:da07547870613493135] 
\draw [color={rgb, 255:red, 0; green, 0; blue, 0 }  ,draw opacity=1 ]   (234.96,130) -- (254.96,130) ;
%Shape: Circle [id:dp09062595502335546] 
\draw  [color={rgb, 255:red, 31; green, 119; blue, 180 }  ,draw opacity=1 ][fill={rgb, 255:red, 255; green, 255; blue, 255 }  ,fill opacity=1 ] (175,130) .. controls (175,124.48) and (179.48,120) .. (185,120) .. controls (190.52,120) and (195,124.48) .. (195,130) .. controls (195,135.52) and (190.52,140) .. (185,140) .. controls (179.48,140) and (175,135.52) .. (175,130) -- cycle ;
%Shape: Circle [id:dp45119887339122844] 
\draw  [color={rgb, 255:red, 44; green, 160; blue, 44 }  ,draw opacity=1 ][fill={rgb, 255:red, 255; green, 255; blue, 255 }  ,fill opacity=1 ] (254.96,130) .. controls (254.96,124.48) and (259.44,120) .. (264.96,120) .. controls (270.48,120) and (274.96,124.48) .. (274.96,130) .. controls (274.96,135.52) and (270.48,140) .. (264.96,140) .. controls (259.44,140) and (254.96,135.52) .. (254.96,130) -- cycle ;
%Shape: Circle [id:dp5616890096433476] 
\draw  [color={rgb, 255:red, 44; green, 160; blue, 44 }  ,draw opacity=1 ][fill={rgb, 255:red, 255; green, 255; blue, 255 }  ,fill opacity=1 ] (294.96,130) .. controls (294.96,124.48) and (299.44,120) .. (304.96,120) .. controls (310.48,120) and (314.96,124.48) .. (314.96,130) .. controls (314.96,135.52) and (310.48,140) .. (304.96,140) .. controls (299.44,140) and (294.96,135.52) .. (294.96,130) -- cycle ;
%Shape: Circle [id:dp9859708548246227] 
\draw  [color={rgb, 255:red, 44; green, 160; blue, 44 }  ,draw opacity=1 ][fill={rgb, 255:red, 255; green, 255; blue, 255 }  ,fill opacity=1 ] (334.96,130) .. controls (334.96,124.48) and (339.44,120) .. (344.96,120) .. controls (350.48,120) and (354.96,124.48) .. (354.96,130) .. controls (354.96,135.52) and (350.48,140) .. (344.96,140) .. controls (339.44,140) and (334.96,135.52) .. (334.96,130) -- cycle ;
%Straight Lines [id:da6271431237364993] 
\draw [color={rgb, 255:red, 31; green, 119; blue, 180 }  ,draw opacity=1 ]   (195,130) -- (215,130) ;
%Straight Lines [id:da306797632217253] 
\draw [color={rgb, 255:red, 44; green, 160; blue, 44 }  ,draw opacity=1 ]   (274.96,130) -- (294.96,130) ;
%Straight Lines [id:da40220112024855936] 
\draw [color={rgb, 255:red, 44; green, 160; blue, 44 }  ,draw opacity=1 ]   (314.96,130) -- (334.96,130) ;
%Curve Lines [id:da22769587230517363] 
\draw [color={rgb, 255:red, 0; green, 0; blue, 0 }  ,draw opacity=1 ]   (354.96,170) .. controls (375.08,170.19) and (375.21,130.19) .. (354.96,130) ;
%Straight Lines [id:da8189371397486274] 
\draw [color={rgb, 255:red, 44; green, 160; blue, 44 }  ,draw opacity=1 ]   (480,220) -- (480,240) ;
%Straight Lines [id:da18059014304159682] 
\draw    (520,220) -- (520,240) ;
%Straight Lines [id:da5087100852476256] 
\draw [color={rgb, 255:red, 214; green, 39; blue, 40 }  ,draw opacity=1 ]   (560,220) -- (560,240) ;
%Straight Lines [id:da6248220447268225] 
\draw [color={rgb, 255:red, 214; green, 39; blue, 40 }  ,draw opacity=1 ]   (440,220) -- (440,240) ;
%Shape: Circle [id:dp06311939830866464] 
\draw  [color={rgb, 255:red, 214; green, 39; blue, 40 }  ,draw opacity=1 ][fill={rgb, 255:red, 255; green, 255; blue, 255 }  ,fill opacity=1 ] (95,170) .. controls (95,164.48) and (99.48,160) .. (105,160) .. controls (110.52,160) and (115,164.48) .. (115,170) .. controls (115,175.52) and (110.52,180) .. (105,180) .. controls (99.48,180) and (95,175.52) .. (95,170) -- cycle ;
%Shape: Circle [id:dp34238899040356685] 
\draw  [color={rgb, 255:red, 214; green, 39; blue, 40 }  ,draw opacity=1 ][fill={rgb, 255:red, 255; green, 255; blue, 255 }  ,fill opacity=1 ] (135,170) .. controls (135,164.48) and (139.48,160) .. (145,160) .. controls (150.52,160) and (155,164.48) .. (155,170) .. controls (155,175.52) and (150.52,180) .. (145,180) .. controls (139.48,180) and (135,175.52) .. (135,170) -- cycle ;
%Shape: Circle [id:dp33903802245419157] 
\draw  [color={rgb, 255:red, 214; green, 39; blue, 40 }  ,draw opacity=1 ][fill={rgb, 255:red, 255; green, 255; blue, 255 }  ,fill opacity=1 ] (175,170) .. controls (175,164.48) and (179.48,160) .. (185,160) .. controls (190.52,160) and (195,164.48) .. (195,170) .. controls (195,175.52) and (190.52,180) .. (185,180) .. controls (179.48,180) and (175,175.52) .. (175,170) -- cycle ;
%Straight Lines [id:da538733188309547] 
\draw [color={rgb, 255:red, 214; green, 39; blue, 40 }  ,draw opacity=1 ]   (115,170) -- (135,170) ;
%Straight Lines [id:da8202135632031016] 
\draw [color={rgb, 255:red, 214; green, 39; blue, 40 }  ,draw opacity=1 ]   (155,170) -- (175,170) ;
%Straight Lines [id:da14559715862215072] 
\draw [color={rgb, 255:red, 214; green, 39; blue, 40 }  ,draw opacity=1 ]   (195,170) -- (215,170) ;
%Shape: Circle [id:dp17895417594782947] 
\draw  [color={rgb, 255:red, 0; green, 0; blue, 0 }  ,draw opacity=1 ][fill={rgb, 255:red, 255; green, 255; blue, 255 }  ,fill opacity=1 ] (135,130) .. controls (135,124.48) and (139.48,120) .. (145,120) .. controls (150.52,120) and (155,124.48) .. (155,130) .. controls (155,135.52) and (150.52,140) .. (145,140) .. controls (139.48,140) and (135,135.52) .. (135,130) -- cycle ;
%Straight Lines [id:da42604939583698087] 
\draw    (155,130) -- (175,130) ;
%Shape: Brace [id:dp5394846885803] 
\draw   (94.8,183.47) .. controls (94.8,188.14) and (97.13,190.47) .. (101.8,190.46) -- (215.07,190.38) .. controls (221.74,190.38) and (225.07,192.71) .. (225.08,197.38) .. controls (225.07,192.71) and (228.4,190.38) .. (235.07,190.37)(232.07,190.37) -- (348.34,190.29) .. controls (353.01,190.28) and (355.34,187.95) .. (355.33,183.28) ;

% Text Node
\draw (499,291) node    {$K+2$};
% Text Node
\draw (396.55,211) node [anchor=west] [inner sep=0.75pt]    {$B$};
% Text Node
\draw (93,134.6) node [anchor=west] [inner sep=0.75pt]  [font=\large] [align=left] {VN:};
% Text Node
\draw (396.6,134.6) node [anchor=west] [inner sep=0.75pt]  [font=\large] [align=left] {PN:};
% Text Node
\draw (226,211) node    {$2\cdot B+1$};
% Text Node
\draw (93,254.4) node [anchor=north west][inner sep=0.75pt]  [font=\large]  {$A\ =\ [\textcolor[rgb]{0.29,0.56,0.89}{2} ,\ \textcolor[rgb]{0.17,0.63,0.17}{3}]$};
% Text Node
\draw (166,121) node    {$0$};
% Text Node
\draw (206,121) node    {$1$};
% Text Node
\draw (246,121) node    {$0$};
% Text Node
\draw (286,121) node    {$1$};
% Text Node
\draw (326,121) node    {$1$};
% Text Node
\draw (166,161) node    {$1$};
% Text Node
\draw (206,161) node    {$1$};
% Text Node
\draw (246,161) node    {$1$};
% Text Node
\draw (286,161) node    {$1$};
% Text Node
\draw (326,161) node    {$1$};
% Text Node
\draw (126,161) node    {$1$};
% Text Node
\draw (373,151) node [anchor=west] [inner sep=0.75pt]    {$0$};

\end{tikzpicture}

    \caption{%
    Reducing BPP to line on uniform tree wVNE $A = [2, 3], B = 3, K = 2$.
    The placement of the red section is forced and the remaining $K$ subtrees of the PN form $K$ bins.}
    \label{fig:octopus}
    
\end{figure}

\begin{theorem}\label{th:WLUTEP_NPC}
wVNE of linear VN on uniform tree PN is NP-complete.
\end{theorem}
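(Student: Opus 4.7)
The plan is to reduce the unary Bin Packing Problem (BPP) to wVNE of a weighted line on a uniform tree, in direct analogy with Theorem~\ref{th:WLLEP_NPC}. The conceptual change from the line-on-line case is that, since the PN is now uniform (all edges of weight $1$), we can no longer delimit bins by cheap PN edges; instead, we exploit the tree's branching and let the $K$ bins correspond to $K$ disjoint subtrees of the PN.

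Given an instance $\langle A = [a_1,\dots,a_n], B, K\rangle$, build the PN as a uniform ``spider'' tree: a central hub $h$ to which we attach $K$ identical bin subtrees (each a line on $B$ vertices) together with an additional ``forcing subtree'' $F$ of a prescribed shape. All PN edges have weight $1$. Build the VN as a single line obtained by concatenating, via $0$-weight connector edges, (i) a heavy ``forcing prefix'' $P$ shaped to match $F$ whose internal edges carry a large weight $W$, and (ii) the $n$ element sections, where the $i$-th section is a line on $a_i$ vertices with internal edges of weight $1$. Singleton vertices separated by $0$-weight edges are added to make $|V_S| = |V_G|$, and the target cost is $\theta = \sum_i (a_i - 1)$.

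For the forward direction, given a valid partition $I_1,\dots,I_K$ of $A$ into bins of sum at most $B$, embed $P$ isometrically onto $F$ and, for each $j \leq K$, place the element sections assigned to $I_j$ consecutively inside the $j$-th bin subtree; the $0$-weight connectors are routed through $h$ and incur no cost since their demand is $0$. Every weight-$1$ edge inside an element section then maps to a single PN edge, yielding total cost exactly $\theta$. For the reverse direction, choose $W > \theta$, so that in any embedding of cost $\leq\theta$ each edge of $P$ must be stretched over exactly one PN edge; this pins $P$ onto an isometric copy inside $F$, and by designing $F$ so this placement is essentially unique, pins the hub $h$. Every element section is then a line of weight-$1$ edges hanging off $h$ that must lie entirely inside a single bin subtree---otherwise some weight-$1$ edge would span at least two PN edges and push the cost above $\theta$---and because each bin subtree contains only $B$ vertices, the induced assignment gives a valid BPP partition.

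The main obstacle is choosing $F$ and $W$ so that $P$ has an essentially unique minimum-stretch placement ending at $h$ (a sufficiently long forcing path, with $W$ large enough to dominate any rerouting, is the natural choice) and verifying that the entire construction is polynomial in the unary BPP size $\sum a_i + B + K$; in particular both $W$ and $|F|$ must be polynomial in that size. Once these parameters are fixed, NP-completeness follows: membership in NP is witnessed by the embedding, and the two directions above reduce BPP to wVNE of a linear VN on a uniform tree PN.
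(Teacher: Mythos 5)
Your overall strategy is the same as the paper's: reduce unary BPP, use a spider/``octopus'' PN whose $K$ legs of length $B$ play the role of bins, represent items as weight-$1$ paths joined by $0$-weight connectors, and add a forcing gadget whose only purpose is to occupy the hub so that no element section can run through it and straddle two bins. However, there are two genuine problems at the crux. First, your budget is inconsistent with your own gadget: if the forcing prefix $P$ has internal edges of weight $W>\theta$ and every PN edge has weight $1$, then any embedding pays at least $W$ for a single edge of $P$, so no embedding whatsoever has cost $\leq\theta=\sum_i(a_i-1)$ and the reduction maps every BPP instance to a ``no'' instance. You would need $\theta = W\cdot|E(P)| + \sum_i(a_i-1)$ and then $W>\sum_i(a_i-1)$ to make one unit of extra stretch on a $P$-edge blow the budget. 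Second, and more importantly, the forcing gadget itself is left as an acknowledged ``main obstacle,'' and the version you sketch does not deliver what you need: pinning $P$ onto an isometric copy \emph{inside} $F$ does not occupy the hub $h$, and if $h$ is free then an element section of size up to $2B+1$ can be embedded isometrically across two bin legs through $h$ (with $h$ itself hosting one of its vertices), destroying the correspondence with bins. A single long leg $F$ with $|P|>|F|$ also fails cleanly, because $P$ then spills into a bin leg and corrupts that bin's capacity.

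The paper resolves exactly this point without any large weights: the PN is a root with $K+2$ legs of length $B$, the VN contains one ``long section'' of $2B+1$ vertices with weight-$1$ edges, and the budget $\theta=\sum_i(a_i-1)+2B$ equals the total number of weight-$1$ edges, i.e.\ the absolute minimum. Tightness alone then forces every weight-$1$ edge onto a single PN edge, and a $(2B+1)$-vertex isometric path in that tree must consist of two full legs plus the root, so the long section necessarily occupies the root and consumes the two spare legs, leaving exactly $K$ legs of $B$ vertices for the element sections. If you replace your unspecified $(F,P,W)$ by this two-legs-plus-root gadget (and correct $\theta$ accordingly, or just drop $W$ and use weight $1$), your argument goes through; as written, the key step is missing.
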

\begin{proof}
    We reduce BPP to our problem.
    Given some instance of BPP with $n$ elements and $K$ bins of size $B$, we construct VN and PN similar to the proof of Theorem~\ref{th:WLLEP_NPC}.

    We start with constructing VN.
    For each element $a_i$, a linear graph $s_i$ with $a_i$ vertices is constructed. These linear graphs are called \emph{``element sections''}.
    Additionally, we create $B \cdot K - \sum a_i$ ``singleton sections'' that contain only one vertex each.
    We also create a single ``long section''~--- a linear graph of $2 \cdot B + 1$ vertices that is crucial for this proof.
    All edges in the described sections have weight $1$ and sections are connected sequentially with zero-weighted edges to form linear VN $S$.
    $|S| = B \cdot (K + 2) + 1$ by construction.
    
    Now, we construct PN.
    It consists of a root and $K + 2$ subtrees.
    Each subtree forms a line with $B$ vertices (i.e., the root is the only branching node).
    Such a graph can be imagined as an ``octopus'' with $K + 2$ legs of lengths $B$ (see Figure~\ref{fig:octopus}).
    All edges have weight $1$ since PN is uniform.
    
    We show that BPP has a solution if and only if there is an embedding of cost $\theta = \sum (a_i - 1) + 2 \cdot B$, i.e., the number of $1$-edges in VN.
    Please, note that $\theta$ is the minimum possible cost.
    
    So, the goal is to find an embedding where each $1$-weighted edge in VN is transformed into a path of one edge in PN.
    From that, the placement of $2 \cdot B + 1$ section is forced since there is only one way to place the ``long section'' excluding the symmetry:
    it must take two legs of the octopus and the root. 
    This section is highlighted with red on Figure~\ref{fig:octopus}.
    That leaves the remaining ``element sections'' to be placed on $K$ legs.
    Note that an ``element section'' cannot be placed on more than one leg, since the path through the root is at least two and the root itself is already mapped to.
    Additionally, ``singleton sections'' placement is free.
    
    The proof of that construction is similar to the one in the previous Theorem~\ref{th:WLLEP_NPC}.
\end{proof}

\subsection{wVNE of Star VN}\label{sec:star_wVNE}\label{sec:wVNE_star}

The problem of embedding a uniform star VN was explored in detail by authors of~\cite{RostFS15_stars}.
They showed it to be polynomial, even with limited edge capacity of PN (which we further explore in Section~\ref{sec:star_cVNE}).
However, to the best of our knowledge, the complexity of non-uniform star VN was not considered previously.
In this subsection we analyze the algorithms of wVNE for a star VN: the weighted star can be embedded on any weighted graph in polynomial time.

\begin{theorem}\label{th:w_star}
For a star VN $S$ and an arbitrary weighted PN $G$, minimal wVNE can be computed in $O(n^2 + T)$ time, where $T$ is the complexity of finding all-pairs shortest paths in $G$.
\end{theorem}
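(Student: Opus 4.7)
The plan is to reduce the problem to $n$ subproblems (one per candidate image of the center) and solve each by the rearrangement inequality. Let $c$ be the center of the star $S$, and let $w_1,\dots,w_{n-1}$ be the weights of the $n-1$ edges incident to $c$. Since wVNE has no capacity constraints, the remark preceding the theorem allows us to route every edge of $S$ along a cheapest path of $G$. Hence, if $f_V(c)=v$ and the leaf $i$ is mapped to $u\in V_G\setminus\{v\}$, the contribution to the cost is $w_i\cdot d_G(v,u)$, where $d_G$ denotes shortest-path distance in $G$. In particular, the total cost depends only on the image $v$ of the center and on the bijection from leaves to $V_G\setminus\{v\}$.

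First, I would precompute the all-pairs shortest-path distances in $G$ in time $O(T)$, and sort the leaf-weight multiset $\{w_1,\dots,w_{n-1}\}$ once in $O(n\log n)$ time. Next, I would iterate over each candidate image $v\in V_G$ of the center (there are $n$ such candidates). For each fixed $v$, the remaining task is to choose a bijection $g$ from the $n-1$ leaves onto $V_G\setminus\{v\}$ minimizing
$$\sum_{i=1}^{n-1} w_i \cdot d_G(v, g(i)).$$
By the classical rearrangement inequality (or by a direct swap argument: any inversion where a larger weight is paired with a larger distance can be swapped without increasing the sum), the minimum is attained by pairing the largest weight with the smallest distance, the second largest with the second smallest, and so on. Finally, one returns the minimum over all $n$ candidate images of~$c$, together with the corresponding embedding.

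For the complexity: assuming APSP is computed via single-source shortest-path from every vertex (e.g.\ Dijkstra), the distances from each fixed $v$ can be obtained in sorted order as a byproduct of the APSP computation, and the sorted weight list is reused across all $n$ iterations; thus the per-center work reduces to an $O(n)$ linear scan that multiplies the sorted sequences and accumulates. Summing over the $n$ choices of $v$ yields $O(n^2)$ on top of the $O(T)$ shortest-paths precomputation, giving the claimed $O(n^2+T)$ bound.

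The main conceptual step is the rearrangement-inequality argument, and the only subtlety worth double-checking is the bijection constraint: for each choice of $v$, the leaves must be matched to distinct vertices of $V_G\setminus\{v\}$, but since the rearrangement produces exactly such a matching (a pairing of two sequences of equal length $n-1$), the constraint is automatically satisfied. Handling ties in weights or distances is harmless, as the inequality holds weakly. No further complication arises from the weighted structure of $G$ because shortest-path distances absorb it entirely.
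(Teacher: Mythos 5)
Your proposal is correct and follows essentially the same route as the paper: fix the image of the center, apply the rearrangement inequality to pair sorted weights with sorted distances, and take the minimum over all $n$ center placements after an all-pairs shortest-path precomputation. Your additional care about obtaining the distances in sorted order (so the per-center work stays $O(n)$) is a detail the paper glosses over, but it does not change the argument.
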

\begin{proof}
    Consider some mapping where each leaf $i$ of the star is placed at the node $f(i)$ and the central node $n$ is placed at the node $f(n) = j$.
    $d_{i} = d(f(n), f(i)) = d(j, f(i))$~--- the shortest distance from $j$, the image of $n$, to $f(i)$. 
    In that case, we know that the embedding cost is:

    \[Cost(f) = \sum_{i = 1}^{n-1} w_{i} \cdot d_{i}. \]

    In other words, the demand from each non-central node $w_i$ is multiplied by $d_{i}$ and added to the sum exactly once. 
    By the rearrangement inequality we know that this sum is minimized, when elements are greedily arranged such that the ``heavier'' nodes are closer to the center $j$ and ``lighter'' nodes are further:

    \[w_1 \leq w_2 \leq \ldots \leq w_{n-1}\]
    \[d_{n-1} \leq d_{n-2} \leq \ldots \leq d_1\]

    Now, we calculate $Cost(f)$ for every placement of the central node $j = f(n)$ and choose minimum.
    Overall, the total complexity of this approach is $O(n^2 + T)$, where $T$ is the complexity of the all-pairs shortest paths problem on a graph. 
\end{proof}

If we use Floyd–Warshall~\cite{Floyd62_path} algorithm with $T = n^3$, the complexity becomes $O(n^3)$.
However, there exist algorithms that provide better complexity on different types of graphs, e.g., if the number of edges in $G$ is $O(n)$, the algorithm by Thorup~\cite{Thorup99_path} can be used, resulting in $O(n^2)$ time complexity.

The results from Theorem~\ref{th:w_star} can be further improved with either VN or PN being uniform.

\subsection{wVNE of $2$-star VN}\label{sec:2star_wVNE}\label{sec:wVNE_2star}

Since we can solve an embedding problem for a weighted star VN, there is a reason to argue whether or not an embedding of a weighted $2$-star VN is polynomial.
In this section, we answer this negatively and prove that this problem is NP-hard.

\begin{theorem}\label{th:W2SEP_NPC}
    $2$-star VN on $2$-star PN wVNE is NP-complete.
\end{theorem}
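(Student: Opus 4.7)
The plan is to reduce unary Bin Packing to 2-star-on-2-star wVNE, reusing the ``element/bin section'' template from Theorems~\ref{th:WLLEP_NPC} and~\ref{th:WLUTEP_NPC}. Membership in NP is routine: any candidate embedding can be verified in polynomial time by summing demands times path costs.

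Given a BPP instance $\langle [a_1,\ldots,a_n], B, K\rangle$ with $\sum a_i \leq BK$, I would build the VN $S$ as a 2-star whose root $r_S$ has $n$ \emph{element middles} $m_i$ (each carrying $a_i - 1$ leaves) together with $BK - \sum a_i$ leafless \emph{padding middles}; every middle-to-leaf VN edge is assigned demand $1$ and every root-to-middle VN edge demand $0$. The PN $G$ is a 2-star whose root $r_G$ has $K$ \emph{bin middles} $M_1,\ldots,M_K$, each with $B-1$ leaves; every root-to-middle PN edge has cost $1$ and every middle-to-leaf PN edge has cost $0$. Both networks contain $1 + BK$ vertices, the minimum conceivable embedding cost is $0$, and I would set $\theta = 0$.

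The forward direction is straightforward: a BPP partition $I_1,\ldots,I_K$ can be realized by mapping $r_S \to r_G$, packing all $\sum_{i \in I_j} a_i \leq B$ vertices of the elements of $I_j$ into the $B$-slot subtree of $M_j$, and distributing the padding middles into whatever leaf slots remain. Every demand-$1$ VN edge then lies inside a single bin subtree and maps either to a single middle-leaf edge or to a length-two leaf-middle-leaf path; both have cost $0$, so the total cost is $0$.

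The backward direction is where I expect the main (still modest) obstacle. A cost-$0$ embedding forces every demand-$1$ VN edge onto a PN path built exclusively from cost-$0$ (middle-leaf) edges, which cannot leave one bin subtree or pass through $r_G$. Hence for every element with $a_i \geq 2$ the middle $m_i$ and all of its $a_i - 1$ leaves must land inside a common subtree of some $M_j$, and in particular $f(m_i) \neq r_G$. A short swap argument with padding middles (which carry no demand) lets me further assume $r_S \to r_G$ and that singleton elements also sit inside a bin subtree. Declaring $I_j$ to be the set of elements whose middle lies in the subtree of $M_j$ then yields a BPP partition, since each such subtree has only $B$ non-root slots. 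Because the entire construction is polynomial in the unary-encoded input, strong NP-hardness of BPP transfers to NP-completeness of 2-star-on-2-star wVNE.
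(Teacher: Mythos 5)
Your proposal is correct and follows essentially the same route as the paper: a reduction from unary Bin Packing in which each item becomes a depth-one VN subtree with $a_i$ vertices and unit internal demand, zero-demand edges attach these subtrees (and the padding vertices) to the VN root, the PN is a $2$-star with $K$ bins of $B$ vertices whose internal edges cost $0$ and whose root edges cost $1$, and the target cost is $\theta=0$. The only difference is presentational: you spell out the backward-direction bookkeeping for singleton elements and the root placement, which the paper compresses into a reference to the argument of Theorem~\ref{th:WLLEP_NPC}.
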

\begin{proof}
    To prove NP-hardness we reduce BPP to this problem.
    The proof is very similar to the proof of Theorem~\ref{th:WLLEP_NPC}, so, here we present only the general idea.

    For each item $a_i$ from BPP we create a subtree of the root with $a_i$ vertices in the VN (see Figure~\ref{fig:BPP_to_W2S}): a root and $a_i - 1$ children.
    Additionally, we create $B \cdot K - \sum a_i$ leaves connected to the root.
    All edges in each such subtree have $1$-weight while all edges from the root to subtrees have $0$-weight.
    
    Similarly, for PN, we have a root with subtrees of size $B$ with $0$-edges for bins.
    In PN, the weights of edges from the root are one and the weights of all other edges are zero.
    The desired cost of wVNE is $C = 0$.

    The rest of the proof follows the proof of the Theorem~\ref{th:WLLEP_NPC}.
    If there is an arrangement of items into buckets we can easily place $1$-weighted subtrees of VN into dedicated $0$-weighted subtrees of PN.
    All adjacent edges of other vertices have demand $0$, so these vertices can be placed anywhere.
    On the other hand, the placement of a weighted subtree of VN on multiple $0$-weighted subtrees of PN is impossible since in PN any path connecting two subtrees has non-zero weight.
\end{proof}

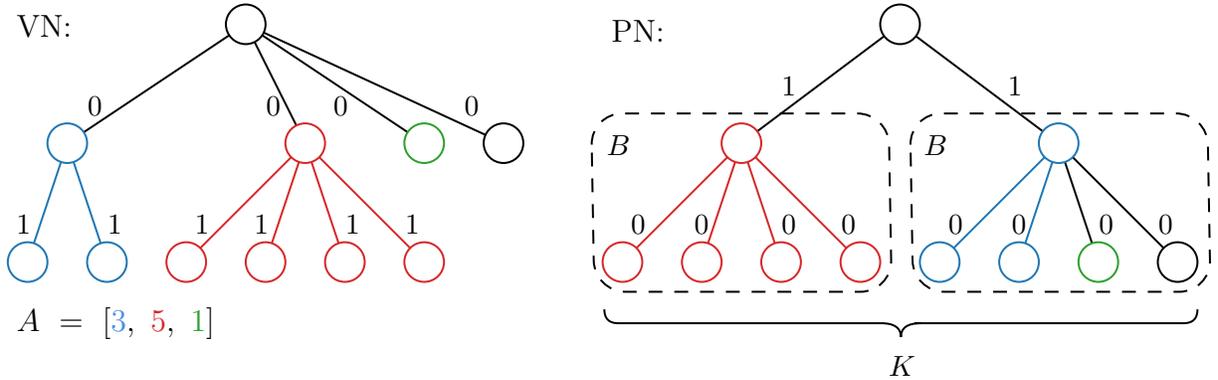
\begin{figure}[!ht]
    \centering

\tikzset{every picture/.style={line width=0.75pt}} %set default line width to 0.75pt        

\begin{tikzpicture}[x=0.75pt,y=0.75pt,yscale=-1,xscale=1]
%uncomment if require: \path (0,300); %set diagram left start at 0, and has height of 300

%Shape: Brace [id:dp07578185834393825] 
\draw   (300.87,153.87) .. controls (300.87,158.54) and (303.2,160.87) .. (307.87,160.87) -- (440.43,160.87) .. controls (447.1,160.87) and (450.43,163.2) .. (450.43,167.87) .. controls (450.43,163.2) and (453.76,160.87) .. (460.43,160.87)(457.43,160.87) -- (593,160.87) .. controls (597.67,160.87) and (600,158.54) .. (600,153.87) ;
%Straight Lines [id:da6523017716458905] 
\draw    (120,10) -- (30,70) ;
%Straight Lines [id:da5268906397238033] 
\draw    (120,10) -- (150,70) ;
%Straight Lines [id:da09695791720609148] 
\draw    (120,10) -- (210,70) ;
%Straight Lines [id:da6482752813325467] 
\draw    (120,10) -- (250,70) ;
%Straight Lines [id:da5717119016674215] 
\draw [color={rgb, 255:red, 214; green, 39; blue, 40 }  ,draw opacity=1 ]   (150,70) -- (210,130) ;
%Straight Lines [id:da8884961700012912] 
\draw [color={rgb, 255:red, 214; green, 39; blue, 40 }  ,draw opacity=1 ]   (150,70) -- (170,130) ;
%Straight Lines [id:da45990806285034824] 
\draw [color={rgb, 255:red, 214; green, 39; blue, 40 }  ,draw opacity=1 ]   (150,70) -- (130,130) ;
%Straight Lines [id:da18394274722196946] 
\draw [color={rgb, 255:red, 214; green, 39; blue, 40 }  ,draw opacity=1 ]   (150,70) -- (90,130) ;
%Straight Lines [id:da2569829889907258] 
\draw [color={rgb, 255:red, 31; green, 119; blue, 180 }  ,draw opacity=1 ]   (30,70) -- (50,130) ;
%Straight Lines [id:da43168791743813606] 
\draw [color={rgb, 255:red, 31; green, 119; blue, 180 }  ,draw opacity=1 ]   (30,70) -- (10,130) ;
%Shape: Circle [id:dp375376666685894] 
\draw  [color={rgb, 255:red, 0; green, 0; blue, 0 }  ,draw opacity=1 ][fill={rgb, 255:red, 255; green, 255; blue, 255 }  ,fill opacity=1 ] (110,10) .. controls (110,4.48) and (114.48,0) .. (120,0) .. controls (125.52,0) and (130,4.48) .. (130,10) .. controls (130,15.52) and (125.52,20) .. (120,20) .. controls (114.48,20) and (110,15.52) .. (110,10) -- cycle ;
%Shape: Circle [id:dp6866966431072958] 
\draw  [color={rgb, 255:red, 31; green, 119; blue, 180 }  ,draw opacity=1 ][fill={rgb, 255:red, 255; green, 255; blue, 255 }  ,fill opacity=1 ] (20,70) .. controls (20,64.48) and (24.48,60) .. (30,60) .. controls (35.52,60) and (40,64.48) .. (40,70) .. controls (40,75.52) and (35.52,80) .. (30,80) .. controls (24.48,80) and (20,75.52) .. (20,70) -- cycle ;
%Shape: Circle [id:dp5101620452702942] 
\draw  [color={rgb, 255:red, 31; green, 119; blue, 180 }  ,draw opacity=1 ][fill={rgb, 255:red, 255; green, 255; blue, 255 }  ,fill opacity=1 ] (40,130) .. controls (40,124.48) and (44.48,120) .. (50,120) .. controls (55.52,120) and (60,124.48) .. (60,130) .. controls (60,135.52) and (55.52,140) .. (50,140) .. controls (44.48,140) and (40,135.52) .. (40,130) -- cycle ;
%Shape: Circle [id:dp19540272762918365] 
\draw  [color={rgb, 255:red, 31; green, 119; blue, 180 }  ,draw opacity=1 ][fill={rgb, 255:red, 255; green, 255; blue, 255 }  ,fill opacity=1 ] (0,130) .. controls (0,124.48) and (4.48,120) .. (10,120) .. controls (15.52,120) and (20,124.48) .. (20,130) .. controls (20,135.52) and (15.52,140) .. (10,140) .. controls (4.48,140) and (0,135.52) .. (0,130) -- cycle ;
%Shape: Circle [id:dp6171037037626943] 
\draw  [color={rgb, 255:red, 214; green, 39; blue, 40 }  ,draw opacity=1 ][fill={rgb, 255:red, 255; green, 255; blue, 255 }  ,fill opacity=1 ] (80,130) .. controls (80,124.48) and (84.48,120) .. (90,120) .. controls (95.52,120) and (100,124.48) .. (100,130) .. controls (100,135.52) and (95.52,140) .. (90,140) .. controls (84.48,140) and (80,135.52) .. (80,130) -- cycle ;
%Shape: Circle [id:dp26024160211999714] 
\draw  [color={rgb, 255:red, 214; green, 39; blue, 40 }  ,draw opacity=1 ][fill={rgb, 255:red, 255; green, 255; blue, 255 }  ,fill opacity=1 ] (120,130) .. controls (120,124.48) and (124.48,120) .. (130,120) .. controls (135.52,120) and (140,124.48) .. (140,130) .. controls (140,135.52) and (135.52,140) .. (130,140) .. controls (124.48,140) and (120,135.52) .. (120,130) -- cycle ;
%Shape: Circle [id:dp3631008064786738] 
\draw  [color={rgb, 255:red, 214; green, 39; blue, 40 }  ,draw opacity=1 ][fill={rgb, 255:red, 255; green, 255; blue, 255 }  ,fill opacity=1 ] (160,130) .. controls (160,124.48) and (164.48,120) .. (170,120) .. controls (175.52,120) and (180,124.48) .. (180,130) .. controls (180,135.52) and (175.52,140) .. (170,140) .. controls (164.48,140) and (160,135.52) .. (160,130) -- cycle ;
%Shape: Circle [id:dp8274034862974207] 
\draw  [color={rgb, 255:red, 214; green, 39; blue, 40 }  ,draw opacity=1 ][fill={rgb, 255:red, 255; green, 255; blue, 255 }  ,fill opacity=1 ] (200,130) .. controls (200,124.48) and (204.48,120) .. (210,120) .. controls (215.52,120) and (220,124.48) .. (220,130) .. controls (220,135.52) and (215.52,140) .. (210,140) .. controls (204.48,140) and (200,135.52) .. (200,130) -- cycle ;
%Shape: Circle [id:dp8266128229975525] 
\draw  [color={rgb, 255:red, 214; green, 39; blue, 40 }  ,draw opacity=1 ][fill={rgb, 255:red, 255; green, 255; blue, 255 }  ,fill opacity=1 ] (140,70) .. controls (140,64.48) and (144.48,60) .. (150,60) .. controls (155.52,60) and (160,64.48) .. (160,70) .. controls (160,75.52) and (155.52,80) .. (150,80) .. controls (144.48,80) and (140,75.52) .. (140,70) -- cycle ;
%Shape: Circle [id:dp16201324908637327] 
\draw  [color={rgb, 255:red, 44; green, 160; blue, 44 }  ,draw opacity=1 ][fill={rgb, 255:red, 255; green, 255; blue, 255 }  ,fill opacity=1 ] (200,70) .. controls (200,64.48) and (204.48,60) .. (210,60) .. controls (215.52,60) and (220,64.48) .. (220,70) .. controls (220,75.52) and (215.52,80) .. (210,80) .. controls (204.48,80) and (200,75.52) .. (200,70) -- cycle ;
%Shape: Circle [id:dp22055865303560052] 
\draw  [color={rgb, 255:red, 0; green, 0; blue, 0 }  ,draw opacity=1 ][fill={rgb, 255:red, 255; green, 255; blue, 255 }  ,fill opacity=1 ] (240,70) .. controls (240,64.48) and (244.48,60) .. (250,60) .. controls (255.52,60) and (260,64.48) .. (260,70) .. controls (260,75.52) and (255.52,80) .. (250,80) .. controls (244.48,80) and (240,75.52) .. (240,70) -- cycle ;
%Straight Lines [id:da72741663820055] 
\draw    (370,70) -- (450,10) ;
%Straight Lines [id:da31921803696596873] 
\draw    (530,70) -- (450,10) ;
%Straight Lines [id:da7198626834986166] 
\draw    (550,130) -- (530,70) ;
%Straight Lines [id:da17567518580573238] 
\draw    (590,130) -- (530,70) ;
%Straight Lines [id:da47327022161343946] 
\draw [color={rgb, 255:red, 214; green, 39; blue, 40 }  ,draw opacity=1 ]   (310,130) -- (370,70) ;
%Straight Lines [id:da7309104043434396] 
\draw [color={rgb, 255:red, 214; green, 39; blue, 40 }  ,draw opacity=1 ]   (370,70) -- (430,130) ;
%Straight Lines [id:da3380645860291349] 
\draw [color={rgb, 255:red, 214; green, 39; blue, 40 }  ,draw opacity=1 ]   (370,70) -- (390,130) ;
%Straight Lines [id:da7300810061510237] 
\draw [color={rgb, 255:red, 214; green, 39; blue, 40 }  ,draw opacity=1 ]   (370,70) -- (350,130) ;
%Straight Lines [id:da6451639283512609] 
\draw [color={rgb, 255:red, 31; green, 119; blue, 180 }  ,draw opacity=1 ]   (470,130) -- (530,70) ;
%Straight Lines [id:da2065300144659996] 
\draw [color={rgb, 255:red, 31; green, 119; blue, 180 }  ,draw opacity=1 ]   (510,130) -- (510.01,129.96) -- (530,70) ;
%Shape: Circle [id:dp20301798047809316] 
\draw  [color={rgb, 255:red, 214; green, 39; blue, 40 }  ,draw opacity=1 ][fill={rgb, 255:red, 255; green, 255; blue, 255 }  ,fill opacity=1 ] (300,130) .. controls (300,124.48) and (304.48,120) .. (310,120) .. controls (315.52,120) and (320,124.48) .. (320,130) .. controls (320,135.52) and (315.52,140) .. (310,140) .. controls (304.48,140) and (300,135.52) .. (300,130) -- cycle ;
%Shape: Circle [id:dp4862930546021069] 
\draw  [color={rgb, 255:red, 214; green, 39; blue, 40 }  ,draw opacity=1 ][fill={rgb, 255:red, 255; green, 255; blue, 255 }  ,fill opacity=1 ] (340,130) .. controls (340,124.48) and (344.48,120) .. (350,120) .. controls (355.52,120) and (360,124.48) .. (360,130) .. controls (360,135.52) and (355.52,140) .. (350,140) .. controls (344.48,140) and (340,135.52) .. (340,130) -- cycle ;
%Shape: Circle [id:dp9929882406936223] 
\draw  [color={rgb, 255:red, 214; green, 39; blue, 40 }  ,draw opacity=1 ][fill={rgb, 255:red, 255; green, 255; blue, 255 }  ,fill opacity=1 ] (380,130) .. controls (380,124.48) and (384.48,120) .. (390,120) .. controls (395.52,120) and (400,124.48) .. (400,130) .. controls (400,135.52) and (395.52,140) .. (390,140) .. controls (384.48,140) and (380,135.52) .. (380,130) -- cycle ;
%Shape: Circle [id:dp031115303370782055] 
\draw  [color={rgb, 255:red, 214; green, 39; blue, 40 }  ,draw opacity=1 ][fill={rgb, 255:red, 255; green, 255; blue, 255 }  ,fill opacity=1 ] (420,130) .. controls (420,124.48) and (424.48,120) .. (430,120) .. controls (435.52,120) and (440,124.48) .. (440,130) .. controls (440,135.52) and (435.52,140) .. (430,140) .. controls (424.48,140) and (420,135.52) .. (420,130) -- cycle ;
%Shape: Circle [id:dp597222336924806] 
\draw  [color={rgb, 255:red, 214; green, 39; blue, 40 }  ,draw opacity=1 ][fill={rgb, 255:red, 255; green, 255; blue, 255 }  ,fill opacity=1 ] (360,70) .. controls (360,64.48) and (364.48,60) .. (370,60) .. controls (375.52,60) and (380,64.48) .. (380,70) .. controls (380,75.52) and (375.52,80) .. (370,80) .. controls (364.48,80) and (360,75.52) .. (360,70) -- cycle ;
%Shape: Circle [id:dp9066305569958959] 
\draw  [color={rgb, 255:red, 31; green, 119; blue, 180 }  ,draw opacity=1 ][fill={rgb, 255:red, 255; green, 255; blue, 255 }  ,fill opacity=1 ] (460,130) .. controls (460,124.48) and (464.48,120) .. (470,120) .. controls (475.52,120) and (480,124.48) .. (480,130) .. controls (480,135.52) and (475.52,140) .. (470,140) .. controls (464.48,140) and (460,135.52) .. (460,130) -- cycle ;
%Shape: Circle [id:dp8386480820762181] 
\draw  [color={rgb, 255:red, 31; green, 119; blue, 180 }  ,draw opacity=1 ][fill={rgb, 255:red, 255; green, 255; blue, 255 }  ,fill opacity=1 ] (500,130) .. controls (500,124.48) and (504.48,120) .. (510,120) .. controls (515.52,120) and (520,124.48) .. (520,130) .. controls (520,135.52) and (515.52,140) .. (510,140) .. controls (504.48,140) and (500,135.52) .. (500,130) -- cycle ;
%Shape: Circle [id:dp8755211205219604] 
\draw  [color={rgb, 255:red, 44; green, 160; blue, 44 }  ,draw opacity=1 ][fill={rgb, 255:red, 255; green, 255; blue, 255 }  ,fill opacity=1 ] (540,130) .. controls (540,124.48) and (544.48,120) .. (550,120) .. controls (555.52,120) and (560,124.48) .. (560,130) .. controls (560,135.52) and (555.52,140) .. (550,140) .. controls (544.48,140) and (540,135.52) .. (540,130) -- cycle ;
%Shape: Circle [id:dp4007007829105851] 
\draw  [color={rgb, 255:red, 0; green, 0; blue, 0 }  ,draw opacity=1 ][fill={rgb, 255:red, 255; green, 255; blue, 255 }  ,fill opacity=1 ] (580,130) .. controls (580,124.48) and (584.48,120) .. (590,120) .. controls (595.52,120) and (600,124.48) .. (600,130) .. controls (600,135.52) and (595.52,140) .. (590,140) .. controls (584.48,140) and (580,135.52) .. (580,130) -- cycle ;
%Shape: Circle [id:dp43076117121673896] 
\draw  [color={rgb, 255:red, 31; green, 119; blue, 180 }  ,draw opacity=1 ][fill={rgb, 255:red, 255; green, 255; blue, 255 }  ,fill opacity=1 ] (520,70) .. controls (520,64.48) and (524.48,60) .. (530,60) .. controls (535.52,60) and (540,64.48) .. (540,70) .. controls (540,75.52) and (535.52,80) .. (530,80) .. controls (524.48,80) and (520,75.52) .. (520,70) -- cycle ;
%Shape: Circle [id:dp17217983629932299] 
\draw  [color={rgb, 255:red, 0; green, 0; blue, 0 }  ,draw opacity=1 ][fill={rgb, 255:red, 255; green, 255; blue, 255 }  ,fill opacity=1 ] (440,10) .. controls (440,4.48) and (444.48,0) .. (450,0) .. controls (455.52,0) and (460,4.48) .. (460,10) .. controls (460,15.52) and (455.52,20) .. (450,20) .. controls (444.48,20) and (440,15.52) .. (440,10) -- cycle ;
%Flowchart: Alternative Process [id:dp5558130599050666] 
\draw  [dash pattern={on 5.25pt off 4.5pt}] (294.49,70.75) .. controls (294.34,62.05) and (301.27,55) .. (309.96,55) -- (428.46,55) .. controls (437.16,55) and (444.34,62.05) .. (444.49,70.75) -- (445.51,129.25) .. controls (445.66,137.95) and (438.73,145) .. (430.04,145) -- (311.54,145) .. controls (302.84,145) and (295.66,137.95) .. (295.51,129.25) -- cycle ;
%Flowchart: Alternative Process [id:dp16239341253171125] 
\draw  [dash pattern={on 5.25pt off 4.5pt}] (455.27,70.75) .. controls (455.12,62.05) and (462.05,55) .. (470.75,55) -- (589.25,55) .. controls (597.95,55) and (605.12,62.05) .. (605.27,70.75) -- (606.3,129.25) .. controls (606.45,137.95) and (599.52,145) .. (590.82,145) -- (472.32,145) .. controls (463.62,145) and (456.45,137.95) .. (456.3,129.25) -- cycle ;

% Text Node
\draw (3,161) node [anchor=west] [inner sep=0.75pt]  [font=\large]  {$A\ =\ [\textcolor[rgb]{0.29,0.56,0.89}{3} ,\ \textcolor[rgb]{0.84,0.15,0.16}{5} ,\ \textcolor[rgb]{0.17,0.63,0.17}{1}]$};
% Text Node
\draw (451,188.8) node [anchor=south] [inner sep=0.75pt]    {$K$};
% Text Node
\draw (3,11) node [anchor=west] [inner sep=0.75pt]  [font=\large] [align=left] {VN:};
% Text Node
\draw (303,13.5) node [anchor=west] [inner sep=0.75pt]  [font=\large] [align=left] {PN:};
% Text Node
\draw (44,51) node  [font=\normalsize]  {$0$};
% Text Node
\draw (134,51) node  [font=\normalsize]  {$0$};
% Text Node
\draw (168,51) node  [font=\normalsize]  {$0$};
% Text Node
\draw (234,51) node  [font=\normalsize]  {$0$};
% Text Node
\draw (8,111) node  [font=\normalsize]  {$1$};
% Text Node
\draw (54,111) node  [font=\normalsize]  {$1$};
% Text Node
\draw (98,111) node  [font=\normalsize]  {$1$};
% Text Node
\draw (204,111) node  [font=\normalsize]  {$1$};
% Text Node
\draw (128,111) node  [font=\normalsize]  {$1$};
% Text Node
\draw (174,111) node  [font=\normalsize]  {$1$};
% Text Node
\draw (308,71) node    {$B$};
% Text Node
\draw (468,71) node    {$B$};
% Text Node
\draw (318,111) node  [font=\normalsize]  {$0$};
% Text Node
\draw (350,111) node  [font=\normalsize]  {$0$};
% Text Node
\draw (394,111) node  [font=\normalsize]  {$0$};
% Text Node
\draw (424,111) node  [font=\normalsize]  {$0$};
% Text Node
\draw (478,111) node  [font=\normalsize]  {$0$};
% Text Node
\draw (510,111) node  [font=\normalsize]  {$0$};
% Text Node
\draw (554,111) node  [font=\normalsize]  {$0$};
% Text Node
\draw (584,111) node  [font=\normalsize]  {$0$};
% Text Node
\draw (394,41) node  [font=\normalsize]  {$1$};
% Text Node
\draw (508,41) node  [font=\normalsize]  {$1$};

\end{tikzpicture}

    \caption{Reducing BPP to $2$-star on $2$-star wVNE for $A = [3, 5, 1], B = 5, K = 2$.}
    \label{fig:BPP_to_W2S}
    
\end{figure}

The major purpose of Theorem~\ref{th:W2SEP_NPC} is to demonstrate how the bins and elements from BPP can be implemented with $2$-star graphs.
We can apply the ideas of Theorems~\ref{th:WLLEP_NPC} and~\ref{th:W2SEP_NPC} to other cases.
For example, with the similar BPP reductions, the following observations are true:
\begin{itemize}
    \item wVNE of $2$-star VN on line PN is NP-complete.
        
    Items are ``line sections'' from Theorem~\ref{th:WLLEP_NPC} and bins are from Theorem~\ref{th:W2SEP_NPC}.
    Note that we need an additional ``singleton section'' in VN to account for the root of PN.
        
    \item wVNE of line VN on $2$-star PN is NP-complete.
    
    This time, the items are subtrees from Theorem~\ref{th:W2SEP_NPC} and bins are ``line sections'' from Theorem~\ref{th:WLLEP_NPC}.
    The two above observations are true since the described line and $2$-star topologies have similar structures.
    They allow to create subgraphs of given size, such that every path between each one of them has cost zero and each path in-between is not.
        
    \item wVNE of weighted tree/line VN on a weighted line/tree is NP-complete since $2$-star is a tree.

    \item All the above NP-complete results for graphs with linear topology are true for the graphs with cycle topology since we can close a line with a $0$-weight edge.
\end{itemize}

Interestingly enough, the results of Theorem~\ref{th:W2SEP_NPC} are true for the uniform PN.
Namely, we show that in wVNE embedding of a $2$-star on \textbf{uniform} $2$-star is NP-complete.

\begin{theorem}\label{th:WU_2SEP_NPC}
    wVNE of $2$-star VN on uniform $2$-star PN is NP-complete.
\end{theorem}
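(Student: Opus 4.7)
The plan is to reduce BPP (which is strongly NP-complete in unary) to this variant, closely mirroring the construction used in the proof of Theorem~\ref{th:W2SEP_NPC}. The principal difficulty is that the PN is now uniform, so we lose the zero-weight intra-bin edges that, in the earlier proof, allowed any within-bin placement of an item's vertices to incur zero cost; every PN edge now contributes to the cost, and we must instead exploit the gap between PN distances $1$, $2$, and $3$ to separate ``within-bin'' from ``across-bin'' placements.

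Given a BPP instance $\langle A = [a_1, \ldots, a_n], B, K\rangle$, I construct the PN $G$ as a uniform $2$-star: a root $r$ attached to $K$ bin middles $m_1, \ldots, m_K$, each carrying $B$ leaves, with every edge of weight $1$. I construct the VN $S$ with root $R$ attached by weight-$0$ edges to $n$ item middles $M_1, \ldots, M_n$ (one per $a_i$, each carrying $a_i$ leaves attached by heavy weight-$W$ edges) plus enough singleton middles (no leaves, weight-$0$ edges to $R$) to make $|V_S| = |V_G|$. The target cost $\theta$ will be a polynomially computable function of $A$, $B$, $K$, and $W$.

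The correctness argument proceeds in three steps. First, I would argue that in any optimal embedding $R$ maps to $r$; since the $R$-edges carry weight $0$, this is a cost-neutral assumption that simplifies the analysis of the item stars. Second, I would show that each item's heavy star is placed entirely within a single bin subtree of PN: within a bin one ``primary'' item has $M_i$ at the bin middle (heavy edges of PN length $1$, cost $W$ each) while any ``secondary'' items have their middles at bin leaves (heavy edges of length $2$, cost $2W$ each), whereas any heavy edge crossing between two bin subtrees has length at least $3$ and therefore costs at least $3W$, which is at least $W$ more than the worst within-bin placement. Third, I would set $\theta$ equal to the cost of an optimal within-bin embedding that respects a valid BPP partition, and show that any embedding forced to route even one heavy edge across bins costs strictly more than $\theta$.

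The main obstacle will be the dependence of the within-bin cost on the primary-versus-secondary assignment within each bin: primaries pay $W \cdot a_i$ while secondaries pay $2W \cdot a_i$, so the minimum feasible cost depends on the partition in a non-trivial way. To handle this cleanly I would either (i) choose $W$ large enough that the $2W$ penalty for a cross-bin heavy edge dominates the spread of within-bin costs across all primary choices over all feasible BPP partitions, or (ii) augment the VN with small forcing gadgets (for example, a distinguished heavy leaf per bin) that uniquely pin down the primary item in each bin. Once that structural invariant is established, the threshold comparison proceeds by the same routine bookkeeping as in Theorem~\ref{th:W2SEP_NPC}, yielding a valid BPP partition iff an embedding of cost at most $\theta$ exists.
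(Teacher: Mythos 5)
Your overall strategy (a BPP reduction exploiting the distance gaps $1,2,3,4$ in a uniform $2$-star) is the same as the paper's, but your construction leaves two holes, and the fix you sketch for the second one does not work. First, the claim that every cross-bin heavy edge has length at least $3$ is false: if an item middle $M_i$ is mapped to the PN root, every bin leaf is at distance $2$ from it, and if $M_i$ sits at a bin middle, every \emph{other} bin middle is also at distance $2$; in both cases an item can be split across bins at exactly the secondary within-bin rate of $2W$ per edge. Your step~1 is supposed to rule out the first case, but "$R$ maps to $r$ cost-neutrally" is not an argument --- $R$'s own edges have weight $0$, so its placement never affects the cost, and relocating whatever vertex currently occupies $r$ is \emph{not} cost-neutral; nothing in your construction forces $r$ (let alone the bin middles) to be unavailable to item middles. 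Second, the primary/secondary asymmetry you correctly identify as "the main obstacle" breaks the threshold test, and your fix~(i) cannot repair it: promoting an item of size $a_i$ to primary saves $\Theta(W a_i)$ while the penalty for routing one heavy edge across bins is only $\Theta(W)$; both quantities scale linearly in $W$, so no choice of $W$ makes the penalty dominate, and a NO instance may admit an embedding under any partition-independent threshold by trading one cross-bin edge for a better primary. (A smaller issue: with $a_i$ leaves per item middle your VN has $n+\sum a_i$ non-padding, non-root vertices, which can exceed the $K+KB$ available PN slots; the paper uses $a_i$ vertices per item in total.)

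The paper closes both holes with one device, which is essentially your fix~(ii) carried to completion: it attaches $K+B$ ``dummy'' leaves to the VN root with edge weight $x=2BK$, and gives the PN root $B$ extra leaf children in addition to the $K$ bin middles. Since $x$ exceeds the total cost of any item placement, any embedding within the budget $\theta=(K+B)x+2\sum(a_i-1)$ must put every dummy at distance exactly $1$ from the image of the VN root, which forces the VN root onto the unique PN vertex of degree $K+B$ (the root) and the dummies onto \emph{all} of its children. Every item vertex is thereby confined to the bin leaves, where all within-bin distances are exactly $2$ and all cross-bin distances are exactly $4$; the primary/secondary distinction disappears, the optimal within-bin cost becomes the partition-independent quantity $2\sum(a_i-1)$, and the two directions of the reduction then go through by routine bookkeeping. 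To salvage your write-up you would need to add such forcing gadgets (or an equivalent mechanism) and redo the counting; as it stands the reduction is not sound.
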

\begin{proof}
    For this theorem we modify the proof of Theorem~\ref{th:W2SEP_NPC}.
    We reduce BPP to our problem.
    
    Consider some instance of BPP with $n \geq 2$ items and $K \geq 2$ bins of size $B$.
    Without loss of generality, assume $\sum a_i = B \cdot K$ (we can always add $1$-weight items to make it true).
    That is required so that sizes of VN and PN in this reduction will be equal.
    
    For described BPP instance we construct VN and PN by the following rules (Figure~\ref{fig:BPP_to_UW2S}):
    \begin{itemize}
        \item \textbf{Virtual Network:}
        For every item $a_i$ we add an ``element'' subtree $A_i$: a star with $a_i$ vertices.
        Edges inside $A_i$ have weights $1$ and the edge from $A_i$ to the root has weights $0$.
        
        Additionally, $(K + B)$ ``dummy'' leaves are connected to the root.
        The edge from a ``dummy'' leaf to the root is $x = 2 \cdot B \cdot K$.
        Overall, there are $1 + K + B + \sum a_i$ nodes in the VN.
        
        \item \textbf{Physical Network:}
        For each of $K$ bins we add a ``bin'' subtree: a star with $B+1$ vertices.
        We connect $B$ extra leaves to the root.
        
        All edge weights in PN are $1$, since it is uniform.
    \end{itemize}
    
    Further, we show that we can distribute $n$ items in $K$ bins iff there is an embedding of cost not greater than $\theta = (K + B) \cdot x + 2 \cdot \sum (a_i - 1)$.
    
    $\Rightarrow$: If there is a way to partition items in $K$ bins, then we can place all vertices from ``element'' subtrees $A_i$ onto the leaves of ``bin'' subtrees in PN.
    That will cost $2 \cdot \sum (a_i - 1)$ since the distance between in one ``bin'' two leaves is $2$ and there are $\sum (a_i - 1)$ edges in ``element'' subtrees.
    The root of VN can be placed in the root $r$ of PN and $(K + B)$ ``dummy'' leaves are placed in $(K + B)$ children of $r$, costing $(K + B) \cdot x$.
    This embedding is marked with red in Figure~\ref{fig:BPP_to_UW2S}.
    
    $\Leftarrow$: Now, we assume that we have some embedding of the VN onto the PN of cost not greater than $\theta$.
    Note that every ``dummy'' leaf has to be placed in exactly distance $1$ from the VN root placement.
    Otherwise, even if one ``dummy'' leaf is placed two edges away, the cost will be too large:
    
    $2 \cdot x + (K + B - 1) \cdot x = (K + B) \cdot x + x = (K + B) \cdot x + (2 \cdot B \cdot K) > (K + B) \cdot x + 2 \cdot \sum (a_i - 1) = \theta$
    
    The only vertex in PN that can satisfy this condition is the root, since other vertices have degree less than $K + B$, assuming $K \geq 2$.
    Thus, every correct embedding places VN root in PN root $r$ and ``dummy'' vertices in the children of $r$.
    Now, vertices from ``element'' subtrees can only be placed in the leaves of ``bin'' subtrees.
    The distance between two leaves is always at least $2$ and the distance between two leaves from different ``bin'' subtrees is always $4$.
    From that, vertices of the same ``element'' subtree are always placed in the same ``bin'' subtree in order to not exceed the cost $\theta$.
\end{proof}

Concerning oversubscribed $2$-star VN, we discuss it more in Sections~\ref{sec:oversub_2satr}~and~\ref{sec:oversub_on_tree}.
In Section~\ref{sec:oversub_2satr} we prove that cVNE of oversubscribed $2$-star VN is NP-complete.
In Section~\ref{sec:oversub_on_tree} we show that the VNE of this topology is actually in P for tree PN.
Regardless, in Theorem~\ref{th:oversub_2Star_wVNE_NP} we state that wVNE of oversubscribed $2$-star VN on an arbitrary uniform SN is NP-complete.

\begin{theorem}\label{th:oversub_2Star_wVNE_NP}
Oversubscribed $2$-star VN on the uniform PN wVNE is NP-complete.
\end{theorem}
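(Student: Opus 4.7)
The plan is to prove NP-hardness by reduction from the classical NP-complete problem of \emph{perfect $P_3$-packing on cubic graphs}: given a cubic graph $G$ with $|V(G)| = 3k$, decide whether $V(G)$ can be partitioned into $k$ triples each inducing a path $P_3$, equivalently, each triple has a vertex adjacent in $G$ to the other two. This is known to be NP-complete even under the cubic restriction.

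Given such an instance with $k \geq 5$, I would construct a uniform PN $H$ and a VN $S$ as follows. For $H$, take $G$ and add a single new vertex $r$ connected to every vertex of $V(G)$; all edges of $H$ have weight~$1$. For $S$, take the oversubscribed $2$-star with $k$ groups, $s = 2$ leaves per group, and oversubscription factor $o = s = 2$, so every edge of $S$ has weight~$1$. The sizes match: $|V_S| = 1 + k(s+1) = 1 + 3k = |V_H|$. Set the threshold to $\theta = k(s+1) = 3k$, which equals the total number of edges of $S$ and is therefore a trivial lower bound on any embedding cost.

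The \textbf{if} direction is routine: given a $P_3$-packing with triples $\{c_i, l_i^1, l_i^2\}$ (with $c_i$ adjacent in $G$ to $l_i^1, l_i^2$), define $f(\mathrm{root}_{VN}) = r$, $f(g_i) = c_i$, and $f(l_{i,j}) = l_i^j$. Every VN edge then maps to a single PN edge, so the embedding cost is exactly $3k = \theta$.

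The \textbf{only if} direction is the main technical step. An embedding of cost $\theta$ must map every VN edge to a length-$1$ PN path, i.e., $f$ is a spanning subgraph isomorphism from $S$ into $H$. The crucial step is a degree argument forcing $f(\mathrm{root}_{VN}) = r$: the VN root has degree $k$ in $S$, whereas every $v \in V(G)$ has degree $\deg_G(v) + 1 \leq 4 < k$ in $H$ for $k \geq 5$, so only $r$ has enough neighbors to host the VN root. Consequently each $f(g_i) \in V(G)$, and since $r$ is already occupied, each of the two leaves of group $i$ must map to a $G$-neighbor of $f(g_i)$. The resulting triples $\{f(g_i), f(l_{i,1}), f(l_{i,2})\}_{i=1}^{k}$ are precisely a perfect $P_3$-packing of $G$.

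The main obstacle I expect is the degree argument in the only-if direction, specifically ruling out placements of the VN root inside $V(G)$; this is clean once the input graph is large enough, and since perfect $P_3$-packing remains NP-hard on cubic graphs of arbitrarily large size, restricting to $k \geq 5$ is harmless. A minor additional check is that the construction is polynomial in the input, which is immediate since $|V_H| = |V_S| = 1 + 3k$.
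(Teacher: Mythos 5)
Your reduction is correct, but it is not the route the paper has in mind: the paper omits this proof entirely and tells the reader to adapt the 3DM-based construction of Theorem~\ref{th:oversub_2Star_cVNE_NP} (with its substituting vertex sets $J_u$, filler subtrees, and capacity bookkeeping) to the weighted setting, reusing the ideas of Section~\ref{sec:wVNE}. You instead reduce from perfect $P_3$-partition of a bounded-degree graph, attach a universal vertex $r$ to obtain a uniform PN, and use the tight threshold $\theta = |E_S|$ to force every VN edge onto a single PN edge; after that a degree count pins the VN root on $r$, and the groups become exactly the triples of the partition. This is substantially shorter and more transparent than the paper's intended argument: there are no filler gadgets, no capacity accounting, and the VN is simply the uniform $2$-star with $s=o=2$, which the paper's definition explicitly admits as an oversubscribed $2$-star, so hardness of this special case yields the theorem. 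Two small points to tighten. First, state the source problem as $P_3$-partition on graphs of maximum degree three (NP-complete, e.g., by Monnot and Toulouse) rather than insisting on $3$-regularity: your degree bound $\deg_G(v)+1\le 4 < k$ only uses the degree bound, and the subcubic version is the safely citable one. Second, note that membership in NP is inherited from generic wVNE, as the paper observes at the start of Section~\ref{sec:wVNE}, so only hardness needs proving. Neither point affects the validity of the argument.
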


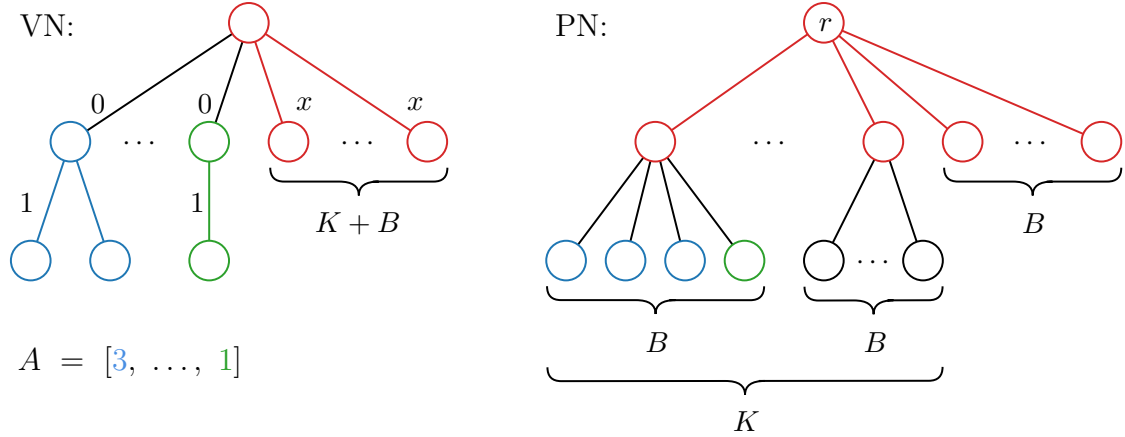
\begin{figure}[!ht]
    \centering

\tikzset{every picture/.style={line width=0.75pt}} %set default line width to 0.75pt        

\begin{tikzpicture}[x=0.75pt,y=0.75pt,yscale=-1,xscale=1]
%uncomment if require: \path (0,448); %set diagram left start at 0, and has height of 448

%Straight Lines [id:da505835513327126] 
\draw    (60,120) -- (150,60) ;
%Straight Lines [id:da7877914358340699] 
\draw    (150,60) -- (130,120) ;
%Straight Lines [id:da056285414252218446] 
\draw [color={rgb, 255:red, 214; green, 39; blue, 40 }  ,draw opacity=1 ]   (150,60) -- (170,120) ;
%Straight Lines [id:da023873657848511476] 
\draw [color={rgb, 255:red, 214; green, 39; blue, 40 }  ,draw opacity=1 ]   (150,60) -- (240,120) ;
%Shape: Brace [id:dp6632391967543927] 
\draw   (160.8,133.13) .. controls (160.8,137.8) and (163.13,140.13) .. (167.8,140.13) -- (195.47,140.13) .. controls (202.14,140.13) and (205.47,142.46) .. (205.47,147.13) .. controls (205.47,142.46) and (208.8,140.13) .. (215.47,140.13)(212.47,140.13) -- (243.13,140.13) .. controls (247.8,140.13) and (250.13,137.8) .. (250.13,133.13) ;
%Straight Lines [id:da4009909375774794] 
\draw [color={rgb, 255:red, 31; green, 119; blue, 180 }  ,draw opacity=1 ]   (60,120) -- (40,180) ;
%Straight Lines [id:da6835624433893608] 
\draw [color={rgb, 255:red, 31; green, 119; blue, 180 }  ,draw opacity=1 ]   (60,120) -- (80,180) ;
%Straight Lines [id:da23072039489259089] 
\draw [color={rgb, 255:red, 44; green, 160; blue, 44 }  ,draw opacity=1 ]   (130,120) -- (130,180) ;
%Shape: Brace [id:dp5666432403419166] 
\draw   (500.67,133.13) .. controls (500.67,137.8) and (503,140.13) .. (507.67,140.13) -- (535.33,140.13) .. controls (542,140.13) and (545.33,142.46) .. (545.33,147.13) .. controls (545.33,142.46) and (548.66,140.13) .. (555.33,140.13)(552.33,140.13) -- (583,140.13) .. controls (587.67,140.13) and (590,137.8) .. (590,133.13) ;
%Straight Lines [id:da3838168123592234] 
\draw [color={rgb, 255:red, 214; green, 39; blue, 40 }  ,draw opacity=1 ]   (440,60) -- (355,120) ;
%Straight Lines [id:da39914221566879626] 
\draw [color={rgb, 255:red, 214; green, 39; blue, 40 }  ,draw opacity=1 ]   (440,60) -- (470,120) ;
%Straight Lines [id:da06939137991206734] 
\draw [color={rgb, 255:red, 214; green, 39; blue, 40 }  ,draw opacity=1 ]   (440,60) -- (510,120) ;
%Straight Lines [id:da5745115154397327] 
\draw [color={rgb, 255:red, 214; green, 39; blue, 40 }  ,draw opacity=1 ]   (440,60) -- (580,120) ;
%Straight Lines [id:da16944916997779536] 
\draw    (470,120) -- (490,180) ;
%Straight Lines [id:da959372318311766] 
\draw    (470,120) -- (440,180) ;
%Straight Lines [id:da5561157840522537] 
\draw    (355,120) -- (310,180) ;
%Straight Lines [id:da2990303297002268] 
\draw    (355,120) -- (340,180) ;
%Straight Lines [id:da328337262787177] 
\draw    (355,120) -- (370,180) ;
%Straight Lines [id:da28172329842278776] 
\draw    (355,120) -- (400,180) ;
%Shape: Brace [id:dp9557012211193285] 
\draw   (300.4,192.94) .. controls (300.4,197.61) and (302.73,199.94) .. (307.4,199.95) -- (345.13,199.97) .. controls (351.8,199.97) and (355.13,202.3) .. (355.13,206.97) .. controls (355.13,202.3) and (358.46,199.97) .. (365.13,199.98)(362.13,199.98) -- (402.86,200) .. controls (407.53,200) and (409.86,197.67) .. (409.87,193) ;
%Shape: Brace [id:dp2763943646392084] 
\draw   (430.25,193.19) .. controls (430.26,197.86) and (432.6,200.18) .. (437.27,200.17) -- (455.08,200.12) .. controls (461.75,200.11) and (465.09,202.43) .. (465.1,207.1) .. controls (465.09,202.43) and (468.41,200.09) .. (475.08,200.07)(472.08,200.08) -- (492.89,200.02) .. controls (497.56,200.01) and (499.88,197.67) .. (499.87,193) ;
%Shape: Brace [id:dp05722507685827671] 
\draw   (300.6,233.87) .. controls (300.61,238.54) and (302.94,240.87) .. (307.61,240.86) -- (390.24,240.77) .. controls (396.91,240.76) and (400.24,243.09) .. (400.25,247.76) .. controls (400.24,243.09) and (403.57,240.76) .. (410.24,240.75)(407.24,240.75) -- (492.87,240.67) .. controls (497.54,240.66) and (499.87,238.33) .. (499.87,233.66) ;
%Shape: Circle [id:dp6214833809169815] 
\draw  [color={rgb, 255:red, 214; green, 39; blue, 40 }  ,draw opacity=1 ][fill={rgb, 255:red, 255; green, 255; blue, 255 }  ,fill opacity=1 ] (140,60) .. controls (140,54.48) and (144.48,50) .. (150,50) .. controls (155.52,50) and (160,54.48) .. (160,60) .. controls (160,65.52) and (155.52,70) .. (150,70) .. controls (144.48,70) and (140,65.52) .. (140,60) -- cycle ;
%Shape: Circle [id:dp29896508181929526] 
\draw  [color={rgb, 255:red, 44; green, 160; blue, 44 }  ,draw opacity=1 ][fill={rgb, 255:red, 255; green, 255; blue, 255 }  ,fill opacity=1 ] (120,120) .. controls (120,114.48) and (124.48,110) .. (130,110) .. controls (135.52,110) and (140,114.48) .. (140,120) .. controls (140,125.52) and (135.52,130) .. (130,130) .. controls (124.48,130) and (120,125.52) .. (120,120) -- cycle ;
%Shape: Circle [id:dp2881614688484131] 
\draw  [color={rgb, 255:red, 31; green, 119; blue, 180 }  ,draw opacity=1 ][fill={rgb, 255:red, 255; green, 255; blue, 255 }  ,fill opacity=1 ] (50,120) .. controls (50,114.48) and (54.48,110) .. (60,110) .. controls (65.52,110) and (70,114.48) .. (70,120) .. controls (70,125.52) and (65.52,130) .. (60,130) .. controls (54.48,130) and (50,125.52) .. (50,120) -- cycle ;
%Shape: Circle [id:dp6766284296673821] 
\draw  [color={rgb, 255:red, 214; green, 39; blue, 40 }  ,draw opacity=1 ][fill={rgb, 255:red, 255; green, 255; blue, 255 }  ,fill opacity=1 ] (230,120) .. controls (230,114.48) and (234.48,110) .. (240,110) .. controls (245.52,110) and (250,114.48) .. (250,120) .. controls (250,125.52) and (245.52,130) .. (240,130) .. controls (234.48,130) and (230,125.52) .. (230,120) -- cycle ;
%Shape: Circle [id:dp06739615914531671] 
\draw  [color={rgb, 255:red, 214; green, 39; blue, 40 }  ,draw opacity=1 ][fill={rgb, 255:red, 255; green, 255; blue, 255 }  ,fill opacity=1 ] (160,120) .. controls (160,114.48) and (164.48,110) .. (170,110) .. controls (175.52,110) and (180,114.48) .. (180,120) .. controls (180,125.52) and (175.52,130) .. (170,130) .. controls (164.48,130) and (160,125.52) .. (160,120) -- cycle ;
%Shape: Circle [id:dp9328942683735884] 
\draw  [color={rgb, 255:red, 31; green, 119; blue, 180 }  ,draw opacity=1 ][fill={rgb, 255:red, 255; green, 255; blue, 255 }  ,fill opacity=1 ] (70,180) .. controls (70,174.48) and (74.48,170) .. (80,170) .. controls (85.52,170) and (90,174.48) .. (90,180) .. controls (90,185.52) and (85.52,190) .. (80,190) .. controls (74.48,190) and (70,185.52) .. (70,180) -- cycle ;
%Shape: Circle [id:dp30642446218438013] 
\draw  [color={rgb, 255:red, 31; green, 119; blue, 180 }  ,draw opacity=1 ][fill={rgb, 255:red, 255; green, 255; blue, 255 }  ,fill opacity=1 ] (30,180) .. controls (30,174.48) and (34.48,170) .. (40,170) .. controls (45.52,170) and (50,174.48) .. (50,180) .. controls (50,185.52) and (45.52,190) .. (40,190) .. controls (34.48,190) and (30,185.52) .. (30,180) -- cycle ;
%Shape: Circle [id:dp32519196887081403] 
\draw  [color={rgb, 255:red, 44; green, 160; blue, 44 }  ,draw opacity=1 ][fill={rgb, 255:red, 255; green, 255; blue, 255 }  ,fill opacity=1 ] (120,180) .. controls (120,174.48) and (124.48,170) .. (130,170) .. controls (135.52,170) and (140,174.48) .. (140,180) .. controls (140,185.52) and (135.52,190) .. (130,190) .. controls (124.48,190) and (120,185.52) .. (120,180) -- cycle ;
%Shape: Circle [id:dp2544001771280089] 
\draw  [color={rgb, 255:red, 214; green, 39; blue, 40 }  ,draw opacity=1 ][fill={rgb, 255:red, 255; green, 255; blue, 255 }  ,fill opacity=1 ] (460,120) .. controls (460,114.48) and (464.48,110) .. (470,110) .. controls (475.52,110) and (480,114.48) .. (480,120) .. controls (480,125.52) and (475.52,130) .. (470,130) .. controls (464.48,130) and (460,125.52) .. (460,120) -- cycle ;
%Shape: Circle [id:dp6372434726180467] 
\draw  [color={rgb, 255:red, 214; green, 39; blue, 40 }  ,draw opacity=1 ][fill={rgb, 255:red, 255; green, 255; blue, 255 }  ,fill opacity=1 ] (345,120) .. controls (345,114.48) and (349.48,110) .. (355,110) .. controls (360.52,110) and (365,114.48) .. (365,120) .. controls (365,125.52) and (360.52,130) .. (355,130) .. controls (349.48,130) and (345,125.52) .. (345,120) -- cycle ;
%Shape: Circle [id:dp07151877149090824] 
\draw  [color={rgb, 255:red, 214; green, 39; blue, 40 }  ,draw opacity=1 ][fill={rgb, 255:red, 255; green, 255; blue, 255 }  ,fill opacity=1 ] (430,60) .. controls (430,54.48) and (434.48,50) .. (440,50) .. controls (445.52,50) and (450,54.48) .. (450,60) .. controls (450,65.52) and (445.52,70) .. (440,70) .. controls (434.48,70) and (430,65.52) .. (430,60) -- cycle ;
%Shape: Circle [id:dp448271365685057] 
\draw  [color={rgb, 255:red, 214; green, 39; blue, 40 }  ,draw opacity=1 ][fill={rgb, 255:red, 255; green, 255; blue, 255 }  ,fill opacity=1 ] (570,120) .. controls (570,114.48) and (574.48,110) .. (580,110) .. controls (585.52,110) and (590,114.48) .. (590,120) .. controls (590,125.52) and (585.52,130) .. (580,130) .. controls (574.48,130) and (570,125.52) .. (570,120) -- cycle ;
%Shape: Circle [id:dp21846727625902362] 
\draw  [color={rgb, 255:red, 214; green, 39; blue, 40 }  ,draw opacity=1 ][fill={rgb, 255:red, 255; green, 255; blue, 255 }  ,fill opacity=1 ] (500,120) .. controls (500,114.48) and (504.48,110) .. (510,110) .. controls (515.52,110) and (520,114.48) .. (520,120) .. controls (520,125.52) and (515.52,130) .. (510,130) .. controls (504.48,130) and (500,125.52) .. (500,120) -- cycle ;
%Shape: Circle [id:dp7652246387614292] 
\draw  [color={rgb, 255:red, 31; green, 119; blue, 180 }  ,draw opacity=1 ][fill={rgb, 255:red, 255; green, 255; blue, 255 }  ,fill opacity=1 ] (330,180) .. controls (330,174.48) and (334.48,170) .. (340,170) .. controls (345.52,170) and (350,174.48) .. (350,180) .. controls (350,185.52) and (345.52,190) .. (340,190) .. controls (334.48,190) and (330,185.52) .. (330,180) -- cycle ;
%Shape: Circle [id:dp8752403352537053] 
\draw  [color={rgb, 255:red, 31; green, 119; blue, 180 }  ,draw opacity=1 ][fill={rgb, 255:red, 255; green, 255; blue, 255 }  ,fill opacity=1 ] (300,180) .. controls (300,174.48) and (304.48,170) .. (310,170) .. controls (315.52,170) and (320,174.48) .. (320,180) .. controls (320,185.52) and (315.52,190) .. (310,190) .. controls (304.48,190) and (300,185.52) .. (300,180) -- cycle ;
%Shape: Circle [id:dp7806912531113079] 
\draw  [fill={rgb, 255:red, 255; green, 255; blue, 255 }  ,fill opacity=1 ] (480,180) .. controls (480,174.48) and (484.48,170) .. (490,170) .. controls (495.52,170) and (500,174.48) .. (500,180) .. controls (500,185.52) and (495.52,190) .. (490,190) .. controls (484.48,190) and (480,185.52) .. (480,180) -- cycle ;
%Shape: Circle [id:dp06757178464761404] 
\draw  [color={rgb, 255:red, 44; green, 160; blue, 44 }  ,draw opacity=1 ][fill={rgb, 255:red, 255; green, 255; blue, 255 }  ,fill opacity=1 ] (390,180) .. controls (390,174.48) and (394.48,170) .. (400,170) .. controls (405.52,170) and (410,174.48) .. (410,180) .. controls (410,185.52) and (405.52,190) .. (400,190) .. controls (394.48,190) and (390,185.52) .. (390,180) -- cycle ;
%Shape: Circle [id:dp4085730007197239] 
\draw  [color={rgb, 255:red, 31; green, 119; blue, 180 }  ,draw opacity=1 ][fill={rgb, 255:red, 255; green, 255; blue, 255 }  ,fill opacity=1 ] (360,180) .. controls (360,174.48) and (364.48,170) .. (370,170) .. controls (375.52,170) and (380,174.48) .. (380,180) .. controls (380,185.52) and (375.52,190) .. (370,190) .. controls (364.48,190) and (360,185.52) .. (360,180) -- cycle ;
%Shape: Circle [id:dp6005149932738769] 
\draw  [color={rgb, 255:red, 0; green, 0; blue, 0 }  ,draw opacity=1 ][fill={rgb, 255:red, 255; green, 255; blue, 255 }  ,fill opacity=1 ] (430,180) .. controls (430,174.48) and (434.48,170) .. (440,170) .. controls (445.52,170) and (450,174.48) .. (450,180) .. controls (450,185.52) and (445.52,190) .. (440,190) .. controls (434.48,190) and (430,185.52) .. (430,180) -- cycle ;

% Text Node
\draw (96,121) node    {$\cdots $};
% Text Node
\draw (206,121) node    {$\cdots $};
% Text Node
\draw (205,161) node    {$K+B$};
% Text Node
\draw (32,231) node [anchor=west] [inner sep=0.75pt]  [font=\large]  {$A\ =\ [\textcolor[rgb]{0.29,0.56,0.89}{3} ,\ \dotsc ,\ \textcolor[rgb]{0.17,0.63,0.17}{1}]$};
% Text Node
\draw (413.5,121) node    {$\cdots $};
% Text Node
\draw (545.23,121) node    {$\cdots $};
% Text Node
\draw (546,161) node    {$B$};
% Text Node
\draw (466,181) node    {$\cdots $};
% Text Node
\draw (356,221) node    {$B$};
% Text Node
\draw (466,221) node    {$B$};
% Text Node
\draw (401,261) node    {$K$};
% Text Node
\draw (441,61) node  [font=\normalsize]  {$r$};
% Text Node
\draw (74,101) node  [font=\normalsize]  {$0$};
% Text Node
\draw (128,101) node  [font=\normalsize]  {$0$};
% Text Node
\draw (178,101) node  [font=\normalsize]  {$x$};
% Text Node
\draw (234,101) node  [font=\normalsize]  {$x$};
% Text Node
\draw (124,151) node  [font=\normalsize]  {$1$};
% Text Node
\draw (38,151) node  [font=\normalsize]  {$1$};
% Text Node
\draw (33,61) node [anchor=west] [inner sep=0.75pt]  [font=\large] [align=left] {VN:};
% Text Node
\draw (303,61) node [anchor=west] [inner sep=0.75pt]  [font=\large] [align=left] {PN:};

\end{tikzpicture}

    \caption{Reducing BPP to $2$-star on Uniform $2$-star wVNE.
    Here $B = 4$, $x = 2 \cdot B \cdot K$.}
    \label{fig:BPP_to_UW2S}
\end{figure}

We do not provide the proof of Theorem~\ref{th:oversub_2Star_wVNE_NP} as it is too similar to the proof of Theorem~\ref{th:oversub_2Star_cVNE_NP} and makes use of ideas previously shown in Section~\ref{sec:wVNE}.
We leave it as an exercise for a curious reader.

\section{VNE without weights and with capacities}\label{sec:cVNE}

Sometimes we are not interested in the minimization of the cost as in we are the previous section~--- the cost difference between different paths is negligible.
Instead, networks could have other restrictions, for example, the commonly studied one is capacity~\cite{FischerBBMH13_survey, BallaniCKR11_virtualNet, FigielKNR0Z21_Tree, LiZWGZ15}.
In this section, we focus on physical networks with capacities.
We denote this problem as cVNE.

The complexity of cVNE without restrictions on topologies was comprehensively studied in~\cite{Amaldi16_complexity, RostS20_complexity}, where it was shown to be NP-complete.
Thus, we conclude that all NP-hard cVNE variants are also NP-complete.

\subsection{cVNE of Linear and $2$-star VN}

Previously we showed that wVNE of linear/$2$-star VN on linear/$2$-star PN is NP-complete (Theorems~\ref{th:WLLEP_NPC}~and~\ref{th:W2SEP_NPC}).
These results still hold for cVNE.
We prove that in Lemma~\ref{lem:01_cVNE_reduction} using the simple reduction.
We use the fact that we allow $0$ weights in wVNE NP-completeness proof to show that these proofs apply to cVNE.

\begin{lemma}\label{lem:01_cVNE_reduction}
    If some variant of wVNE ($\mathcal{S}$ on $\mathcal{G}$) with desired cost $\theta=0$ is NP-complete, then this variant of cVNE ($\mathcal{S}$ on $\mathcal{G}$) is also NP-complete.
\end{lemma}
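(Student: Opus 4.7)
The plan is to give a direct polynomial reduction from the wVNE variant to the corresponding cVNE variant on the same topology pair, preserving the VN, the PN, and hence the topological restrictions $\mathcal{S}$ and $\mathcal{G}$. Given a wVNE instance consisting of $S \in \mathcal{S}$ with demands $W$, $G \in \mathcal{G}$ with cost function $T$, and $\theta = 0$, I would construct a cVNE instance on the same graphs $S$ and $G$ with the same demands $W$, defining the PN capacity function $C$ by $c_e = 0$ if $t_e > 0$ and $c_e = M$ otherwise, where $M = \sum_{x \in E_S} w_x$ (any larger value would do). This transformation is computable in time linear in the input size, and since the underlying graphs are not touched, the new instance still lies in the $(\mathcal{S}, \mathcal{G})$ class.

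For correctness I would use the key observation that in the constructed instance, a positive-demand VN edge can legally traverse a PN edge $e$ iff $c_e > 0$, i.e.\ iff $t_e = 0$. In the forward direction, assume $f$ achieves $\mathrm{Cost}(f) = 0$ in the wVNE instance. Because all $w_x$ and all $t_e$ are non-negative, every $x \in E_S$ with $w_x > 0$ must be mapped to a path $f(x)$ of zero total length, i.e.\ using only PN edges with $t_e = 0$. These edges carry capacity $M \ge \sum_x w_x$ in the cVNE instance, while edges with $t_e > 0$ (capacity $0$) carry no positive demand at all, so every constraint $\sum_{x\colon e \in f(x)} w_x \le c_e$ holds. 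Thus $f$ is a feasible cVNE embedding.

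In the backward direction, assume $f$ is feasible for the cVNE instance. For any $x$ with $w_x > 0$, the path $f(x)$ cannot use a PN edge with $c_e = 0$, so it consists entirely of edges with $t_e = 0$, giving $\mathrm{len}(f(x)) = 0$ and therefore $w_x \cdot \mathrm{len}(f(x)) = 0$. Edges $x$ with $w_x = 0$ contribute nothing regardless of routing, so $\mathrm{Cost}(f) = 0$. Membership of cVNE in NP is immediate: given a candidate $f$, the capacity inequalities can be verified in polynomial time. Together with the polynomial-time reduction and the assumed NP-completeness of the wVNE variant with $\theta = 0$ (instances arising, for example, in Theorem~\ref{th:WLLEP_NPC} and Theorem~\ref{th:W2SEP_NPC}), this proves the claim.

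I do not anticipate a substantive obstacle here. The only point that needs attention is that the reduction must respect the topology classes $\mathcal{S}$ and $\mathcal{G}$, which it does because the graphs themselves are unchanged and only the numerical labels are reinterpreted (costs become capacities, with a two-valued schema $\{0, M\}$ in place of $\{0, t_e\}$). A secondary subtlety is that the scheme relies on being allowed zero-demand edges and zero-capacity edges in the model, which is exactly what the paper's Section~\ref{subsec:model} permits.
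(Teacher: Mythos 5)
Your proposal is correct and follows essentially the same route as the paper: both translate the PN cost function into a two-valued capacity function (zero capacity on positive-cost edges, a sufficiently large capacity on zero-cost edges) and observe that feasibility under these capacities is equivalent to achieving embedding cost $\theta = 0$. The only cosmetic difference is that the paper first normalizes weights to $\{0,1\}$ and uses the edge count $m$ as the large capacity, whereas you use the total demand $\sum_x w_x$ directly; both choices work for the same reason.
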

\begin{proof}
    The wVNE is reduced to cVNE as follows.
    
    Assume an instance of wVNE: VN $S_B$ with $m$ edges and PN $G_B$.
    Without loss of generality, we can assume all edge weights are either $0$ or $1$, since the embedding cost is $\theta = 0$ and, thus, all non-zero weights behave the same.
    The instance for cVNE is obtained by changing edge weights to capacities in PN $G_B$ as follows: 
    ``Cheap'' $0$-weight edges now have capacity $m$ and ``expensive'' $1$-weight edges have capacity $0$.
    
    With such capacities, ``cheap'' edges in PN can fit any $m$ edges from VN and ``expensive'' edges can only fit edges from VN with $0$ demand.
    That simulates the behavior of wVNE with embedding cost $\theta = 0$: $0$-weight edges of VN are embedded anywhere and $1$-weighted edges of VN are only embedded on $0$-cost paths.
\end{proof}

Since the proof of Theorems~\ref{th:WLLEP_NPC}~and~\ref{th:W2SEP_NPC} reduces BPP to wVNE with desired cost $\theta = 0$, we can apply Lemma~\ref{lem:01_cVNE_reduction} to these results.
In other words, cVNE of line/$2$-star on line/$2$-star is NP-complete.
On the contrary, other variations of cVNE must be considered separately since Lemma~\ref{lem:01_cVNE_reduction} does not apply on them.

\subsection{cVNE of Star VN}\label{sec:star_cVNE}

The star is a particularly interesting example of Virtual Network topology for cVNE.
Rost et al.~\cite{RostFS15_stars} provided a polynomial time algorithm to embed a uniform star VN on any graph with capacities.
As for non-uniform star, Theorem~\ref{th:w_star} shows that wVNE with a star VN can be solved in polynomial time.
However, surprisingly, it is NP-hard in the case of cVNE.

\paragraph{Non-uniform star VN}

Consider a star virtual network in cVNE.
When non-uniform, this problem is NP-complete even when restricting PN to be a $2$-star or a line.

To prove it we use 3PP~--- strongly NP-complete problem~\cite{GareyJ79_NP}, i.e., the input can be given in the unary notation.

\begin{problem}[3-Partition Problem (3PP)]
    Given an array of $n = 3 \cdot m$ positive integers $A = [a_1, a_2, \ldots, a_n]$ which sum is $m \cdot T$:

    Is it possible to partition $A$ into $m$ disjoint subsets of size three, such that the sum of the elements in each subset equals exactly $T$?
\end{problem}

\begin{theorem}\label{th:CSTEP_NPC}
cVNE of star VN on $2$-star PN is NP-complete.
\end{theorem}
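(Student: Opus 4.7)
My plan is to reduce the strongly NP-complete \emph{3-Partition Problem} (3PP) to cVNE of a star VN on a 2-star PN; membership in NP is immediate, since a candidate embedding can be verified edge-by-edge in polynomial time. I will rely on the standard strengthening of 3PP that assumes $T/4 < a_i < T/2$ for every $i$, so that any subset of $A$ summing to $T$ contains exactly three elements; this variant remains strongly NP-complete.

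For the construction, given an instance $\langle A = [a_1, \ldots, a_{3m}], T \rangle$, the virtual network is a star with center $c$, ``element leaves'' $v_1, \ldots, v_{3m}$ with demand $w_{c v_i} = a_i$, and $m$ ``dummy leaves'' $u_1, \ldots, u_m$ of demand $0$. The physical network is a 2-star with root $r$, second-tier ``bin'' vertices $b_1, \ldots, b_m$, and three leaves $l_{j,1}, l_{j,2}, l_{j,3}$ attached to each $b_j$. Capacities are $c_{(r,b_j)} = T$ and $c_{(b_j, l_{j,k})} = M := \max_i a_i$. Both networks have $4m+1$ vertices, so the bijection requirement is met, and I will assume $m \geq 3$ since 3PP is trivial for smaller $m$.

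The forward direction is straightforward: a 3-partition $I_1, \ldots, I_m$ induces the embedding $c \mapsto r$, $u_j \mapsto b_j$, and the three elements of $I_j$ mapped onto $l_{j,1}, l_{j,2}, l_{j,3}$; every edge $(r, b_j)$ then carries exactly $T$ and every leaf edge carries a single demand at most $M$. For the backward direction I will first argue that $c$ must be placed at $r$: if $c$ sat at some $l_{j,k}$, the only edge out of it, $(l_{j,k}, b_j)$ of capacity $M < T/2$, would have to carry the entire demand $mT$; if $c$ sat at some $b_j$, then all demand not hidden inside $\{l_{j,1}, l_{j,2}, l_{j,3}\}$~--- at least $mT - 3M$~--- would have to cross $(b_j, r)$ of capacity $T$, forcing $m \leq 2$. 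Once $c = r$, summing the capacities $c_{(r,b_j)} = T$ equals the total demand $mT$, so each bin-subtree must carry exactly $T$ of demand; the bound $T/4 < a_i < T/2$ then pins the number of element leaves inside each bin to exactly three, recovering a valid 3-partition.

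The main obstacle will be the first step of the backward direction: cleanly ruling out every alternative placement of the star's center. The leaf-edge capacity $M$ must be small enough that displacing $c$ into a bin overflows a single edge of capacity $T$, yet large enough for the intended embedding to remain feasible; the choice $M = \max_i a_i$ together with $m \geq 3$ strikes this balance. The remaining ingredients~--- the polynomial size of the reduction under the unary input encoding and the vertex-count matching via the $m$ dummy leaves~--- are routine.
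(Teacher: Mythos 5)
Your reduction is correct and is essentially the paper's own argument: both reduce 3-Partition by turning elements into star leaves of demand $a_i$, bins into second-tier subtrees of a $2$-star PN with top-level capacity $T$, and both force the star's center onto the PN root by an outgoing-capacity count. The only differences are cosmetic gadget choices — you use four-vertex bin subtrees with dummy VN leaves, leaf capacities $M=\max_i a_i$, $m\ge 3$, and the $T/4<a_i<T/2$ variant of 3PP, whereas the paper uses three-vertex bin subtrees with all capacities $T$ and $m\ge 4$, letting the bijection itself pin exactly three elements per bin.
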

\begin{proof}
    We reduce 3PP to this problem, showing its NP-hardness.
    We assume that $m$ is at least~$4$.
    VN has $n - 1$ leaves, such that an edge to a leaf $i$ has demand $a_i$.
    PN has $m$ subtrees with three vertices each.
    All of the edges in PN have capacity of exactly $T$ (from the definition of 3PP problem).
    See example on Figure~\ref{fig:3PP_to_WS2S}.

    $\Rightarrow$:
    If there is a 3-partition for a given array $A$, there exist a following correct embedding: the root of VN is mapped to a root of PN and leaves are mapped corresponding to the partitioning.
    This is more clearly shown on Figure~\ref{fig:3PP_to_WS2S}.
    It is easy to see that all capacities are satisfied since the sum of $a_i$ in each subset is exactly $T$.

    $\Leftarrow$:
    Consider some correct mapping.
    If the root of VN is not placed on the root of PN, it has at most three outgoing edges with capacity not-exceeding $T$. 
    However, due to the assumption that $m \geq 4$, we need at least $\sum a_i \geq 4 T$ outgoing capacity for the root, which is only true for the root of PN.
    Thus, the root of VN must be mapped to the root of PN.
    It follows that the capacity of each adjacent edge to the root of PN is maxed at $T$ and each subtree will have exactly three vertices placed in it. 
    Thus, this mapping forms a desirable partition for $A$.

\end{proof}

\begin{figure}[!ht]
    \centering
    
    \tikzset{every picture/.style={line width=0.75pt}} %set default line width to 0.75pt        
\begin{tikzpicture}[x=0.75pt,y=0.75pt,yscale=-1,xscale=1]
%uncomment if require: \path (0,448); %set diagram left start at 0, and has height of 448

%Straight Lines [id:da5632256298822647] 
\draw    (348,70) -- (408,130) ;
%Straight Lines [id:da5225749962083261] 
\draw [color={rgb, 255:red, 31; green, 119; blue, 180 }  ,draw opacity=1 ]   (115,70) -- (190,180) ;
%Straight Lines [id:da3928608230680457] 
\draw [color={rgb, 255:red, 214; green, 39; blue, 40 }  ,draw opacity=1 ]   (115,70) -- (160,180) ;
%Straight Lines [id:da15429422546432847] 
\draw [color={rgb, 255:red, 214; green, 39; blue, 40 }  ,draw opacity=1 ]   (115,70) -- (40,180) ;
%Straight Lines [id:da9703110137692894] 
\draw [color={rgb, 255:red, 31; green, 119; blue, 180 }  ,draw opacity=1 ]   (115,70) -- (70,180) ;
%Straight Lines [id:da3055505236288696] 
\draw [color={rgb, 255:red, 31; green, 119; blue, 180 }  ,draw opacity=1 ]   (115,70) -- (100,180) ;
%Straight Lines [id:da01922353483186634] 
\draw [color={rgb, 255:red, 214; green, 39; blue, 40 }  ,draw opacity=1 ]   (115,70) -- (130,180) ;
%Shape: Circle [id:dp30269599318492] 
\draw  [color={rgb, 255:red, 214; green, 39; blue, 40 }  ,draw opacity=1 ][fill={rgb, 255:red, 255; green, 255; blue, 255 }  ,fill opacity=1 ] (30,180) .. controls (30,174.48) and (34.48,170) .. (40,170) .. controls (45.52,170) and (50,174.48) .. (50,180) .. controls (50,185.52) and (45.52,190) .. (40,190) .. controls (34.48,190) and (30,185.52) .. (30,180) -- cycle ;
%Shape: Circle [id:dp5139313684913711] 
\draw  [color={rgb, 255:red, 31; green, 119; blue, 180 }  ,draw opacity=1 ][fill={rgb, 255:red, 255; green, 255; blue, 255 }  ,fill opacity=1 ] (60,180) .. controls (60,174.48) and (64.48,170) .. (70,170) .. controls (75.52,170) and (80,174.48) .. (80,180) .. controls (80,185.52) and (75.52,190) .. (70,190) .. controls (64.48,190) and (60,185.52) .. (60,180) -- cycle ;
%Shape: Circle [id:dp5209697837460421] 
\draw  [color={rgb, 255:red, 31; green, 119; blue, 180 }  ,draw opacity=1 ][fill={rgb, 255:red, 255; green, 255; blue, 255 }  ,fill opacity=1 ] (90,180) .. controls (90,174.48) and (94.48,170) .. (100,170) .. controls (105.52,170) and (110,174.48) .. (110,180) .. controls (110,185.52) and (105.52,190) .. (100,190) .. controls (94.48,190) and (90,185.52) .. (90,180) -- cycle ;
%Shape: Circle [id:dp7530983416441599] 
\draw  [color={rgb, 255:red, 214; green, 39; blue, 40 }  ,draw opacity=1 ][fill={rgb, 255:red, 255; green, 255; blue, 255 }  ,fill opacity=1 ] (120,180) .. controls (120,174.48) and (124.48,170) .. (130,170) .. controls (135.52,170) and (140,174.48) .. (140,180) .. controls (140,185.52) and (135.52,190) .. (130,190) .. controls (124.48,190) and (120,185.52) .. (120,180) -- cycle ;
%Shape: Circle [id:dp5987533167109145] 
\draw  [color={rgb, 255:red, 214; green, 39; blue, 40 }  ,draw opacity=1 ][fill={rgb, 255:red, 255; green, 255; blue, 255 }  ,fill opacity=1 ] (150,180) .. controls (150,174.48) and (154.48,170) .. (160,170) .. controls (165.52,170) and (170,174.48) .. (170,180) .. controls (170,185.52) and (165.52,190) .. (160,190) .. controls (154.48,190) and (150,185.52) .. (150,180) -- cycle ;
%Shape: Circle [id:dp07797673832564733] 
\draw  [color={rgb, 255:red, 31; green, 119; blue, 180 }  ,draw opacity=1 ][fill={rgb, 255:red, 255; green, 255; blue, 255 }  ,fill opacity=1 ] (180,180) .. controls (180,174.48) and (184.48,170) .. (190,170) .. controls (195.52,170) and (200,174.48) .. (200,180) .. controls (200,185.52) and (195.52,190) .. (190,190) .. controls (184.48,190) and (180,185.52) .. (180,180) -- cycle ;
%Shape: Circle [id:dp1131170495733993] 
\draw  [color={rgb, 255:red, 0; green, 0; blue, 0 }  ,draw opacity=1 ][fill={rgb, 255:red, 255; green, 255; blue, 255 }  ,fill opacity=1 ] (105,70) .. controls (105,64.48) and (109.48,60) .. (115,60) .. controls (120.52,60) and (125,64.48) .. (125,70) .. controls (125,75.52) and (120.52,80) .. (115,80) .. controls (109.48,80) and (105,75.52) .. (105,70) -- cycle ;
%Straight Lines [id:da5779764191050916] 
\draw    (348,70) -- (288,130) ;
%Straight Lines [id:da28780708028080126] 
\draw [color={rgb, 255:red, 214; green, 39; blue, 40 }  ,draw opacity=1 ]   (288,130) -- (258,180) ;
%Straight Lines [id:da4720794719208128] 
\draw [color={rgb, 255:red, 214; green, 39; blue, 40 }  ,draw opacity=1 ]   (288,130) -- (318,180) ;
%Shape: Circle [id:dp9938456784497727] 
\draw  [color={rgb, 255:red, 214; green, 39; blue, 40 }  ,draw opacity=1 ][fill={rgb, 255:red, 255; green, 255; blue, 255 }  ,fill opacity=1 ] (278,130) .. controls (278,124.48) and (282.48,120) .. (288,120) .. controls (293.52,120) and (298,124.48) .. (298,130) .. controls (298,135.52) and (293.52,140) .. (288,140) .. controls (282.48,140) and (278,135.52) .. (278,130) -- cycle ;
%Shape: Circle [id:dp523624182087415] 
\draw  [color={rgb, 255:red, 214; green, 39; blue, 40 }  ,draw opacity=1 ][fill={rgb, 255:red, 255; green, 255; blue, 255 }  ,fill opacity=1 ] (248,180) .. controls (248,174.48) and (252.48,170) .. (258,170) .. controls (263.52,170) and (268,174.48) .. (268,180) .. controls (268,185.52) and (263.52,190) .. (258,190) .. controls (252.48,190) and (248,185.52) .. (248,180) -- cycle ;
%Shape: Circle [id:dp5430155933786012] 
\draw  [color={rgb, 255:red, 214; green, 39; blue, 40 }  ,draw opacity=1 ][fill={rgb, 255:red, 255; green, 255; blue, 255 }  ,fill opacity=1 ] (308,180) .. controls (308,174.48) and (312.48,170) .. (318,170) .. controls (323.52,170) and (328,174.48) .. (328,180) .. controls (328,185.52) and (323.52,190) .. (318,190) .. controls (312.48,190) and (308,185.52) .. (308,180) -- cycle ;
%Shape: Circle [id:dp02186990184133686] 
\draw  [color={rgb, 255:red, 0; green, 0; blue, 0 }  ,draw opacity=1 ][fill={rgb, 255:red, 255; green, 255; blue, 255 }  ,fill opacity=1 ] (338,70) .. controls (338,64.48) and (342.48,60) .. (348,60) .. controls (353.52,60) and (358,64.48) .. (358,70) .. controls (358,75.52) and (353.52,80) .. (348,80) .. controls (342.48,80) and (338,75.52) .. (338,70) -- cycle ;
%Straight Lines [id:da49637996476817636] 
\draw [color={rgb, 255:red, 31; green, 119; blue, 180 }  ,draw opacity=1 ]   (408,130) -- (378,180) ;
%Straight Lines [id:da1941675298136134] 
\draw [color={rgb, 255:red, 31; green, 119; blue, 180 }  ,draw opacity=1 ]   (408,130) -- (438,180) ;
%Shape: Circle [id:dp7534423536990942] 
\draw  [color={rgb, 255:red, 31; green, 119; blue, 180 }  ,draw opacity=1 ][fill={rgb, 255:red, 255; green, 255; blue, 255 }  ,fill opacity=1 ] (398,130) .. controls (398,124.48) and (402.48,120) .. (408,120) .. controls (413.52,120) and (418,124.48) .. (418,130) .. controls (418,135.52) and (413.52,140) .. (408,140) .. controls (402.48,140) and (398,135.52) .. (398,130) -- cycle ;
%Shape: Circle [id:dp39306492628191037] 
\draw  [color={rgb, 255:red, 31; green, 119; blue, 180 }  ,draw opacity=1 ][fill={rgb, 255:red, 255; green, 255; blue, 255 }  ,fill opacity=1 ] (368,180) .. controls (368,174.48) and (372.48,170) .. (378,170) .. controls (383.52,170) and (388,174.48) .. (388,180) .. controls (388,185.52) and (383.52,190) .. (378,190) .. controls (372.48,190) and (368,185.52) .. (368,180) -- cycle ;
%Shape: Circle [id:dp4906605304672782] 
\draw  [color={rgb, 255:red, 31; green, 119; blue, 180 }  ,draw opacity=1 ][fill={rgb, 255:red, 255; green, 255; blue, 255 }  ,fill opacity=1 ] (428,180) .. controls (428,174.48) and (432.48,170) .. (438,170) .. controls (443.52,170) and (448,174.48) .. (448,180) .. controls (448,185.52) and (443.52,190) .. (438,190) .. controls (432.48,190) and (428,185.52) .. (428,180) -- cycle ;

% Text Node
\draw (32,73.5) node [anchor=west] [inner sep=0.75pt]  [font=\large] [align=left] {VN:};
% Text Node
\draw (250,73.5) node [anchor=west] [inner sep=0.75pt]  [font=\large] [align=left] {PN:};
% Text Node
\draw (305,91) node    {$8/8$};
% Text Node
\draw (171,161) node    {$1$};
% Text Node
\draw (145,161) node    {$1$};
% Text Node
\draw (121,161) node    {$2$};
% Text Node
\draw (95,161) node    {$4$};
% Text Node
\draw (71,161) node    {$3$};
% Text Node
\draw (45,161) node    {$5$};
% Text Node
\draw (348,71) node    {$c$};
% Text Node
\draw (395,91) node    {$8/8$};
% Text Node
\draw (253,151) node    {$1/8$};
% Text Node
\draw (445,151) node    {$1/8$};
% Text Node
\draw (323,151) node    {$5/8$};
% Text Node
\draw (373,151) node    {$3/8$};
% Text Node
\draw (115,71) node    {$c$};

\end{tikzpicture}

    \caption{Reducing 3PP to star on $2$-star with capacities for $A = [5, 3, 4, 2, 1 , 1]$.}
    \label{fig:3PP_to_WS2S}
\end{figure}
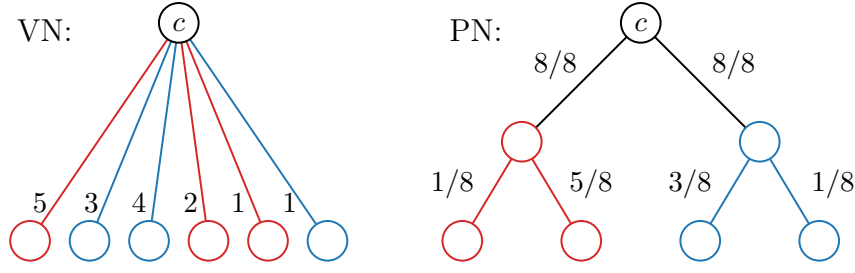

For the next proof, we need an additional NP-complete problem~\cite{Karp72_NP}.
\begin{problem}[Partition Problem (PP)]
    Given an array of $n$ positive integers $A = [a_1, a_2, \ldots, a_n]$ that sum to $2 \cdot T$:

    Is it possible to partition $A$ into two disjoint subsets, such that the sum of the elements in each subset equals exactly $T$?
\end{problem}

\begin{theorem}\label{th:CSLEP_NPC}
cVNE of star VN on linear PN is NP-complete.
\end{theorem}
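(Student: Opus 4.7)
The plan is to reduce the Partition Problem (PP) to cVNE of star VN on line PN. Given a PP instance $A = [a_1,\ldots,a_n]$ with $\sum_i a_i = 2T$, I construct a star VN with $n$ leaves, where the edge from leaf $i$ to the center $c$ carries demand $a_i$. The PN is a linear graph $L$ on $n+1$ vertices with every edge having capacity $T$. Then I will show that $A$ admits a balanced partition iff the resulting cVNE instance has a feasible embedding.

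The main structural observation driving correctness is: in any embedding, the center $c$ is mapped to some vertex $k$ of the line, and the $n$ leaves fill the remaining $n$ positions. Because $S$ is a star, the VN edge from any leaf placed to the left of $k$ is routed entirely through the line edge $(k-1,k)$, and any leaf placed to the right is routed through $(k,k+1)$. Since demands are nonnegative, the cumulative demand carried on each line edge is monotonically nondecreasing as one approaches $k$, so the binding capacity constraints are exactly the two edges incident to position $k$. Writing $L$ for the sum of demands of leaves placed to the left of $k$ and $R$ for those to the right, feasibility is equivalent to $L \leq T$ and $R \leq T$.

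For the forward direction ($\Rightarrow$), given a partition $(S_1,S_2)$ with $\sum_{i\in S_j} a_i = T$, I place $c$ at any interior position $k$ chosen so that exactly $|S_1|$ slots lie to its left, assign the leaves of $S_1$ to those slots in any order, and assign the leaves of $S_2$ to the $|S_2|$ slots on the right. Then $L = R = T$ and by monotonicity every line edge carries total demand at most $T$. For the backward direction ($\Leftarrow$), from a feasible embedding I read off $k$, $L$, and $R$; capacity gives $L,R\leq T$, while $L + R = \sum_i a_i = 2T$, forcing $L = R = T$, so the bipartition of the leaves across $k$ yields a valid PP solution. The reduction is clearly polynomial (in fact linear in the unary input size), and cVNE membership in NP is inherited from the generic problem, so NP-completeness follows.

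The only real subtlety, and what I expect to be the one place that needs explicit justification, is ruling out capacity violations on edges far from the center: once one phrases the demand on edge $(j,j+1)$ (for $j<k$) as the partial sum of demands of leaves mapped to positions $1,\ldots,j$, monotonicity in $j$ is obvious from $a_i\geq 0$, and the bound at the critical edge $(k-1,k)$ propagates outward automatically. Beyond this, I must also handle the trivial edge cases $T = 0$ and the endpoint placements of $c$ (which force one side to carry the full $2T$, exceeding capacity whenever $T > 0$) to confirm that the center must lie in the interior of the line, so that the construction is well defined.
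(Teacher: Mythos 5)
Your proposal is correct and is essentially the paper's own proof: the same reduction from the Partition Problem, with a star VN whose leaf edges carry demands $a_i$ and a line PN on $n+1$ vertices with uniform edge capacity $\tfrac{1}{2}\sum a_i$, and the same two-direction argument that a feasible embedding splits the leaves around the center's position into two sets each summing to exactly $T$. Your additional remarks on the monotonicity of the partial sums (so only the two edges incident to the center are binding) and on the degenerate endpoint placements merely make explicit what the paper leaves implicit.
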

\begin{proof}

    To show NP-hardness, we reduce PP to the problem.
    VN is constructed the same way as in the proof of Theorem~\ref{th:CSTEP_NPC}, containing $n$ leaves with demand $a_i$ on edges.
    PN is a line with $n + 1$ vertices and the capacity of each edge is $\frac{1}{2} \sum a_i$.
    See example on Figure~\ref{fig:PP_to_WSL}.

    $\Rightarrow$:
    Suppose that there exists a correct partition into two sets $S_1$ and $S_2$ of equal sum, we can place the root of VN in the position $x = |S_1| + 1$.
    Then, we map corresponding vertices of $S_1$ to the left of $x$ and vertices of $S_2$ to the right of it.
    Since the total capacity out of $S_i$ is exactly $\frac{1}{2} \sum a_i$ to each side, the constraints are satisfied.

    $\Leftarrow$:
    Now, consider the opposite: there exists a correct mapping of VN onto PN.
    Then, independent on the position of the root of the $VN$, vertices are partitioned into two sets: on the left and on the right from the root.
    Due to the capacity constraints, each of the two subsets sum to $\frac{1}{2} \sum a_i$.
    
\end{proof}

\begin{figure}[!ht]
    \centering

\tikzset{every picture/.style={line width=0.75pt}} %set default line width to 0.75pt        

\begin{tikzpicture}[x=0.75pt,y=0.75pt,yscale=-1,xscale=1]
%uncomment if require: \path (0,448); %set diagram left start at 0, and has height of 448

%Straight Lines [id:da616725455625817] 
\draw [color={rgb, 255:red, 44; green, 160; blue, 44 }  ,draw opacity=1 ]   (160,90) -- (200,90) ;
%Shape: Arc [id:dp9456454853349912] 
\draw  [draw opacity=0] (90.2,87.47) .. controls (92.78,72.1) and (118.56,60.02) .. (149.98,60.01) .. controls (183.12,60) and (209.99,73.41) .. (210,89.97) .. controls (210,90) and (210,90.03) .. (210,90.07) -- (149.99,89.99) -- cycle ; \draw  [color={rgb, 255:red, 31; green, 119; blue, 180 }  ,draw opacity=1 ] (90.2,87.47) .. controls (92.78,72.1) and (118.56,60.02) .. (149.98,60.01) .. controls (183.12,60) and (209.99,73.41) .. (210,89.97) .. controls (210,90) and (210,90.03) .. (210,90.07) ;  
%Shape: Circle [id:dp3171866232432379] 
\draw  [color={rgb, 255:red, 74; green, 144; blue, 226 }  ,draw opacity=1 ][fill={rgb, 255:red, 255; green, 255; blue, 255 }  ,fill opacity=1 ] (80,160) .. controls (80,154.48) and (84.48,150) .. (90,150) .. controls (95.52,150) and (100,154.48) .. (100,160) .. controls (100,165.52) and (95.52,170) .. (90,170) .. controls (84.48,170) and (80,165.52) .. (80,160) -- cycle ;
%Shape: Circle [id:dp28202401617476425] 
\draw  [color={rgb, 255:red, 44; green, 160; blue, 44 }  ,draw opacity=1 ][fill={rgb, 255:red, 255; green, 255; blue, 255 }  ,fill opacity=1 ] (140,160) .. controls (140,154.48) and (144.48,150) .. (150,150) .. controls (155.52,150) and (160,154.48) .. (160,160) .. controls (160,165.52) and (155.52,170) .. (150,170) .. controls (144.48,170) and (140,165.52) .. (140,160) -- cycle ;
%Shape: Circle [id:dp9062604834348267] 
\draw  [color={rgb, 255:red, 0; green, 0; blue, 0 }  ,draw opacity=1 ][fill={rgb, 255:red, 255; green, 255; blue, 255 }  ,fill opacity=1 ] (200,160) .. controls (200,154.48) and (204.48,150) .. (210,150) .. controls (215.52,150) and (220,154.48) .. (220,160) .. controls (220,165.52) and (215.52,170) .. (210,170) .. controls (204.48,170) and (200,165.52) .. (200,160) -- cycle ;
%Straight Lines [id:da5306657738618721] 
\draw    (100,160) -- (140,160) ;
%Straight Lines [id:da4227455872462569] 
\draw [color={rgb, 255:red, 0; green, 0; blue, 0 }  ,draw opacity=1 ]   (160,160) -- (200,160) ;
%Shape: Circle [id:dp6643078634239794] 
\draw  [color={rgb, 255:red, 31; green, 119; blue, 180 }  ,draw opacity=1 ][fill={rgb, 255:red, 255; green, 255; blue, 255 }  ,fill opacity=1 ] (80,90) .. controls (80,84.48) and (84.48,80) .. (90,80) .. controls (95.52,80) and (100,84.48) .. (100,90) .. controls (100,95.52) and (95.52,100) .. (90,100) .. controls (84.48,100) and (80,95.52) .. (80,90) -- cycle ;
%Shape: Circle [id:dp7023265910735668] 
\draw  [color={rgb, 255:red, 44; green, 160; blue, 44 }  ,draw opacity=1 ][fill={rgb, 255:red, 255; green, 255; blue, 255 }  ,fill opacity=1 ] (140,90) .. controls (140,84.48) and (144.48,80) .. (150,80) .. controls (155.52,80) and (160,84.48) .. (160,90) .. controls (160,95.52) and (155.52,100) .. (150,100) .. controls (144.48,100) and (140,95.52) .. (140,90) -- cycle ;
%Shape: Circle [id:dp6048035039854345] 
\draw  [color={rgb, 255:red, 0; green, 0; blue, 0 }  ,draw opacity=1 ][fill={rgb, 255:red, 255; green, 255; blue, 255 }  ,fill opacity=1 ] (200,90) .. controls (200,84.48) and (204.48,80) .. (210,80) .. controls (215.52,80) and (220,84.48) .. (220,90) .. controls (220,95.52) and (215.52,100) .. (210,100) .. controls (204.48,100) and (200,95.52) .. (200,90) -- cycle ;
%Shape: Circle [id:dp0793345260037428] 
\draw  [color={rgb, 255:red, 214; green, 39; blue, 40 }  ,draw opacity=1 ][fill={rgb, 255:red, 255; green, 255; blue, 255 }  ,fill opacity=1 ] (260,90) .. controls (260,84.48) and (264.48,80) .. (270,80) .. controls (275.52,80) and (280,84.48) .. (280,90) .. controls (280,95.52) and (275.52,100) .. (270,100) .. controls (264.48,100) and (260,95.52) .. (260,90) -- cycle ;
%Shape: Circle [id:dp0949216591392108] 
\draw  [color={rgb, 255:red, 214; green, 39; blue, 40 }  ,draw opacity=1 ][fill={rgb, 255:red, 255; green, 255; blue, 255 }  ,fill opacity=1 ] (260,160) .. controls (260,154.48) and (264.48,150) .. (270,150) .. controls (275.52,150) and (280,154.48) .. (280,160) .. controls (280,165.52) and (275.52,170) .. (270,170) .. controls (264.48,170) and (260,165.52) .. (260,160) -- cycle ;
%Straight Lines [id:da24370100459915411] 
\draw [color={rgb, 255:red, 0; green, 0; blue, 0 }  ,draw opacity=1 ]   (220,160) -- (260,160) ;
%Straight Lines [id:da24340523424351357] 
\draw [color={rgb, 255:red, 214; green, 39; blue, 40 }  ,draw opacity=1 ]   (220,90) -- (260,90) ;

% Text Node
\draw (209,90) node  [font=\normalsize]  {$c$};
% Text Node
\draw (32,88.5) node [anchor=west] [inner sep=0.75pt]  [font=\large] [align=left] {VN:};
% Text Node
\draw (32,158.5) node [anchor=west] [inner sep=0.75pt]  [font=\large] [align=left] {PN:};
% Text Node
\draw (209,160) node  [font=\normalsize]  {$c$};
% Text Node
\draw (150.99,51) node    {$3$};
% Text Node
\draw (181,81) node    {$2$};
% Text Node
\draw (241,81) node    {$5$};
% Text Node
\draw (181,151) node    {$5/5$};
% Text Node
\draw (241,151) node    {$5/5$};
% Text Node
\draw (121,151) node    {$3/5$};

\end{tikzpicture}

    \caption{Reducing PP to star on line cVNE for $A = [5, 3, 2]$.}
    \label{fig:PP_to_WSL}
    
\end{figure}
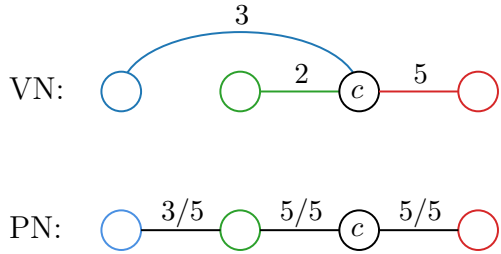

\subsection{cVNE of Oversubscribed $2$-star VN}\label{sec:oversub_2satr}

One of the problems with the uniform star VN is that the bandwidth required for the central node is high.
To relax the load on the central vertex in some applications, the \emph{oversubscribed} $2$-star VN topology is used~\cite{BallaniCKR11_virtualNet}.
This topology is a special case of a $2$-star graph.
An example of an oversubscribed $2$-star appears in Figure~\ref{fig:3DM_to_Oversub_2Star}.

The cVNE problem with oversubscribed $2$-star VN is easier than cVNE with a regular $2$-star VN.
For example, cVNE of an oversubscribed $2$-star VN on a tree PN has a polynomial algorithm as will be shown later in Section~\ref{sec:oversub_on_tree}.
However, for non-tree PN, embedding of an oversubscribed $2$-star VN remains NP-hard.
We show that by the reduction of the 3-dimensional matching problem.
The 3-dimensional matching is the extension of a bipartite matching.

\begin{problem}[3-Dimensional Matching (3DM)]
    Let $X$, $Y$, $Z$ be three disjoint sets, such that $|X| = |Y| = |Z| = q$ and $V = X \cup Y \cup Z$.
    Let $T \subseteq X \times Y \times Z$ be a set of triplets. 
    
    Determine whether there exist $M$: a subset of $T$ of size $q$, such that any two hyperedges $(x_i, y_i, z_i), (x_j, y_j, z_j) \in M$ do not intersect, i.e., $x_i \ne x_j$, $y_i \ne y_j$ and $z_i \ne z_j$.
\end{problem}

3DM is NP-complete~\cite{GareyJ79_NP}.

\begin{figure}[!ht]
    \centering

\tikzset{every picture/.style={line width=0.75pt}} %set default line width to 0.75pt        

\begin{tikzpicture}[x=0.75pt,y=0.75pt,yscale=-1,xscale=1]
%uncomment if require: \path (0,454); %set diagram left start at 0, and has height of 454

%Straight Lines [id:da9510058619018449] 
\draw  [dash pattern={on 4.5pt off 4.5pt}]  (560,60) -- (542.5,91.88) ;
%Straight Lines [id:da590456836600755] 
\draw  [dash pattern={on 4.5pt off 4.5pt}]  (460,60) -- (462.5,91.88) ;
%Straight Lines [id:da770948040611342] 
\draw  [dash pattern={on 4.5pt off 4.5pt}]  (380,70) -- (437.5,91.88) ;
%Straight Lines [id:da9865888520327224] 
\draw  [dash pattern={on 4.5pt off 4.5pt}]  (300,60) -- (317.5,91.88) ;
%Straight Lines [id:da35933357505916264] 
\draw  [dash pattern={on 4.5pt off 4.5pt}]  (360,60) -- (342.5,91.88) ;
%Flowchart: Alternative Process [id:dp8061973967963645] 
\draw  [fill={rgb, 255:red, 255; green, 255; blue, 255 }  ,fill opacity=1 ] (300,10.5) .. controls (300,4.7) and (304.7,0) .. (310.5,0) -- (349.5,0) .. controls (355.3,0) and (360,4.7) .. (360,10.5) -- (360,59.5) .. controls (360,65.3) and (355.3,70) .. (349.5,70) -- (310.5,70) .. controls (304.7,70) and (300,65.3) .. (300,59.5) -- cycle ;
%Flowchart: Alternative Process [id:dp22293097712231646] 
\draw  [fill={rgb, 255:red, 255; green, 255; blue, 255 }  ,fill opacity=1 ] (370,12.25) .. controls (370,5.48) and (375.48,0) .. (382.25,0) -- (447.75,0) .. controls (454.52,0) and (460,5.48) .. (460,12.25) -- (460,57.75) .. controls (460,64.52) and (454.52,70) .. (447.75,70) -- (382.25,70) .. controls (375.48,70) and (370,64.52) .. (370,57.75) -- cycle ;
%Flowchart: Alternative Process [id:dp4050705279556259] 
\draw  [fill={rgb, 255:red, 255; green, 255; blue, 255 }  ,fill opacity=1 ] (470,12.25) .. controls (470,5.48) and (475.48,0) .. (482.25,0) -- (547.75,0) .. controls (554.52,0) and (560,5.48) .. (560,12.25) -- (560,57.75) .. controls (560,64.52) and (554.52,70) .. (547.75,70) -- (482.25,70) .. controls (475.48,70) and (470,64.52) .. (470,57.75) -- cycle ;
%Straight Lines [id:da7126059896793397] 
\draw [line width=2.25]    (370,140) -- (429.48,180) ;
%Straight Lines [id:da8259463487739092] 
\draw [line width=2.25]    (409.48,140) -- (429.48,180) ;
%Straight Lines [id:da6488319768624349] 
\draw [line width=2.25]    (449.48,140) -- (429.48,180) ;
%Straight Lines [id:da7135392496848261] 
\draw [line width=2.25]    (490,140) -- (430,180) ;
%Straight Lines [id:da601153504279466] 
\draw [color={rgb, 255:red, 214; green, 39; blue, 40 }  ,draw opacity=1 ][line width=2.25]    (100,7.5) -- (200,7.5) ;
%Straight Lines [id:da10054556933039249] 
\draw [color={rgb, 255:red, 44; green, 160; blue, 44 }  ,draw opacity=1 ][line width=2.25]    (100,62.5) -- (200,62.5) ;
%Straight Lines [id:da5421368573367336] 
\draw [color={rgb, 255:red, 31; green, 119; blue, 180 }  ,draw opacity=1 ][line width=2.25]    (150,12.5) -- (200,12.5) ;
%Straight Lines [id:da016360824487964942] 
\draw [color={rgb, 255:red, 31; green, 119; blue, 180 }  ,draw opacity=1 ][line width=2.25]    (93,62.5) -- (148,12.5) ;
%Straight Lines [id:da8469314873015892] 
\draw [color={rgb, 255:red, 255; green, 127; blue, 14 }  ,draw opacity=1 ][line width=2.25]    (100,62.5) -- (150,17.46) ;
%Straight Lines [id:da8155104829131228] 
\draw [color={rgb, 255:red, 255; green, 127; blue, 14 }  ,draw opacity=1 ][line width=2.25]    (200,62.5) -- (150,17.46) ;
%Shape: Circle [id:dp5049802623586597] 
\draw  [color={rgb, 255:red, 0; green, 0; blue, 0 }  ,draw opacity=1 ][fill={rgb, 255:red, 255; green, 255; blue, 255 }  ,fill opacity=1 ] (90,12.5) .. controls (90,6.98) and (94.48,2.5) .. (100,2.5) .. controls (105.52,2.5) and (110,6.98) .. (110,12.5) .. controls (110,18.02) and (105.52,22.5) .. (100,22.5) .. controls (94.48,22.5) and (90,18.02) .. (90,12.5) -- cycle ;
%Shape: Circle [id:dp4165212780579368] 
\draw  [color={rgb, 255:red, 0; green, 0; blue, 0 }  ,draw opacity=1 ][fill={rgb, 255:red, 255; green, 255; blue, 255 }  ,fill opacity=1 ] (140,12.5) .. controls (140,6.98) and (144.48,2.5) .. (150,2.5) .. controls (155.52,2.5) and (160,6.98) .. (160,12.5) .. controls (160,18.02) and (155.52,22.5) .. (150,22.5) .. controls (144.48,22.5) and (140,18.02) .. (140,12.5) -- cycle ;
%Shape: Circle [id:dp8381888445722296] 
\draw  [color={rgb, 255:red, 0; green, 0; blue, 0 }  ,draw opacity=1 ][fill={rgb, 255:red, 255; green, 255; blue, 255 }  ,fill opacity=1 ] (190,12.5) .. controls (190,6.98) and (194.48,2.5) .. (200,2.5) .. controls (205.52,2.5) and (210,6.98) .. (210,12.5) .. controls (210,18.02) and (205.52,22.5) .. (200,22.5) .. controls (194.48,22.5) and (190,18.02) .. (190,12.5) -- cycle ;
%Shape: Circle [id:dp6642339240926793] 
\draw  [color={rgb, 255:red, 0; green, 0; blue, 0 }  ,draw opacity=1 ][fill={rgb, 255:red, 255; green, 255; blue, 255 }  ,fill opacity=1 ] (90,62.5) .. controls (90,56.98) and (94.48,52.5) .. (100,52.5) .. controls (105.52,52.5) and (110,56.98) .. (110,62.5) .. controls (110,68.02) and (105.52,72.5) .. (100,72.5) .. controls (94.48,72.5) and (90,68.02) .. (90,62.5) -- cycle ;
%Shape: Circle [id:dp7461655311979967] 
\draw  [color={rgb, 255:red, 0; green, 0; blue, 0 }  ,draw opacity=1 ][fill={rgb, 255:red, 255; green, 255; blue, 255 }  ,fill opacity=1 ] (140,62.5) .. controls (140,56.98) and (144.48,52.5) .. (150,52.5) .. controls (155.52,52.5) and (160,56.98) .. (160,62.5) .. controls (160,68.02) and (155.52,72.5) .. (150,72.5) .. controls (144.48,72.5) and (140,68.02) .. (140,62.5) -- cycle ;
%Shape: Circle [id:dp7296385391493183] 
\draw  [color={rgb, 255:red, 0; green, 0; blue, 0 }  ,draw opacity=1 ][fill={rgb, 255:red, 255; green, 255; blue, 255 }  ,fill opacity=1 ] (190,62.5) .. controls (190,56.98) and (194.48,52.5) .. (200,52.5) .. controls (205.52,52.5) and (210,56.98) .. (210,62.5) .. controls (210,68.02) and (205.52,72.5) .. (200,72.5) .. controls (194.48,72.5) and (190,68.02) .. (190,62.5) -- cycle ;
%Straight Lines [id:da896334185226701] 
\draw [line width=2.25]    (100,220) -- (150,180) ;
%Straight Lines [id:da2057304881995896] 
\draw    (100,220) -- (130,260) ;
%Straight Lines [id:da34862708655448427] 
\draw    (100,220) -- (70,260) ;
%Straight Lines [id:da8834435304974229] 
\draw [line width=2.25]    (150,180) -- (200,220) ;
%Straight Lines [id:da7528887301156717] 
\draw    (200,220) -- (230,260) ;
%Straight Lines [id:da143336864222811] 
\draw    (200,220) -- (170,260) ;
%Shape: Circle [id:dp5366874222024625] 
\draw  [color={rgb, 255:red, 0; green, 0; blue, 0 }  ,draw opacity=1 ][fill={rgb, 255:red, 255; green, 255; blue, 255 }  ,fill opacity=1 ] (140,180) .. controls (140,174.48) and (144.48,170) .. (150,170) .. controls (155.52,170) and (160,174.48) .. (160,180) .. controls (160,185.52) and (155.52,190) .. (150,190) .. controls (144.48,190) and (140,185.52) .. (140,180) -- cycle ;
%Shape: Circle [id:dp9202356806575345] 
\draw  [color={rgb, 255:red, 0; green, 0; blue, 0 }  ,draw opacity=1 ][fill={rgb, 255:red, 255; green, 255; blue, 255 }  ,fill opacity=1 ] (120,260) .. controls (120,254.48) and (124.48,250) .. (130,250) .. controls (135.52,250) and (140,254.48) .. (140,260) .. controls (140,265.52) and (135.52,270) .. (130,270) .. controls (124.48,270) and (120,265.52) .. (120,260) -- cycle ;
%Shape: Circle [id:dp11698839666643579] 
\draw  [color={rgb, 255:red, 0; green, 0; blue, 0 }  ,draw opacity=1 ][fill={rgb, 255:red, 255; green, 255; blue, 255 }  ,fill opacity=1 ] (160,260) .. controls (160,254.48) and (164.48,250) .. (170,250) .. controls (175.52,250) and (180,254.48) .. (180,260) .. controls (180,265.52) and (175.52,270) .. (170,270) .. controls (164.48,270) and (160,265.52) .. (160,260) -- cycle ;
%Shape: Circle [id:dp6178942241371059] 
\draw  [color={rgb, 255:red, 0; green, 0; blue, 0 }  ,draw opacity=1 ][fill={rgb, 255:red, 255; green, 255; blue, 255 }  ,fill opacity=1 ] (220,260) .. controls (220,254.48) and (224.48,250) .. (230,250) .. controls (235.52,250) and (240,254.48) .. (240,260) .. controls (240,265.52) and (235.52,270) .. (230,270) .. controls (224.48,270) and (220,265.52) .. (220,260) -- cycle ;
%Shape: Circle [id:dp5456584338258379] 
\draw  [color={rgb, 255:red, 0; green, 0; blue, 0 }  ,draw opacity=1 ][fill={rgb, 255:red, 255; green, 255; blue, 255 }  ,fill opacity=1 ] (60,260) .. controls (60,254.48) and (64.48,250) .. (70,250) .. controls (75.52,250) and (80,254.48) .. (80,260) .. controls (80,265.52) and (75.52,270) .. (70,270) .. controls (64.48,270) and (60,265.52) .. (60,260) -- cycle ;
%Shape: Circle [id:dp9618129137468949] 
\draw  [color={rgb, 255:red, 0; green, 0; blue, 0 }  ,draw opacity=1 ][fill={rgb, 255:red, 255; green, 255; blue, 255 }  ,fill opacity=1 ] (90,220) .. controls (90,214.48) and (94.48,210) .. (100,210) .. controls (105.52,210) and (110,214.48) .. (110,220) .. controls (110,225.52) and (105.52,230) .. (100,230) .. controls (94.48,230) and (90,225.52) .. (90,220) -- cycle ;
%Shape: Circle [id:dp22237771661324524] 
\draw  [color={rgb, 255:red, 0; green, 0; blue, 0 }  ,draw opacity=1 ][fill={rgb, 255:red, 255; green, 255; blue, 255 }  ,fill opacity=1 ] (190,220) .. controls (190,214.48) and (194.48,210) .. (200,210) .. controls (205.52,210) and (210,214.48) .. (210,220) .. controls (210,225.52) and (205.52,230) .. (200,230) .. controls (194.48,230) and (190,225.52) .. (190,220) -- cycle ;
%Shape: Brace [id:dp7999532979127237] 
\draw   (60.14,273.29) .. controls (60.15,277.96) and (62.48,280.29) .. (67.15,280.28) -- (90.22,280.28) .. controls (96.89,280.27) and (100.22,282.6) .. (100.22,287.27) .. controls (100.22,282.6) and (103.55,280.27) .. (110.22,280.27)(107.22,280.27) -- (133.29,280.26) .. controls (137.96,280.26) and (140.29,277.93) .. (140.29,273.26) ;
%Shape: Brace [id:dp35780054032072517] 
\draw   (160,273.43) .. controls (160,278.1) and (162.33,280.43) .. (167,280.43) -- (190.07,280.42) .. controls (196.74,280.41) and (200.07,282.74) .. (200.08,287.41) .. controls (200.07,282.74) and (203.4,280.41) .. (210.07,280.41)(207.07,280.41) -- (233.15,280.4) .. controls (237.82,280.4) and (240.15,278.07) .. (240.14,273.4) ;
%Shape: Brace [id:dp8207425735090208] 
\draw   (59.64,303.29) .. controls (59.64,307.96) and (61.97,310.29) .. (66.64,310.29) -- (139.9,310.29) .. controls (146.57,310.29) and (149.9,312.62) .. (149.9,317.29) .. controls (149.9,312.62) and (153.23,310.29) .. (159.9,310.29)(156.9,310.29) -- (233.17,310.29) .. controls (237.84,310.29) and (240.17,307.96) .. (240.17,303.29) ;
%Straight Lines [id:da7211025756734175] 
\draw [line width=2.25]    (379.48,220) -- (429.48,180) ;
%Straight Lines [id:da5189713826285649] 
\draw    (379.48,220) -- (409.48,260) ;
%Straight Lines [id:da029260543751456147] 
\draw    (379.48,220) -- (349.48,260) ;
%Straight Lines [id:da6569191420226945] 
\draw [line width=2.25]    (429.48,180) -- (479.48,220) ;
%Straight Lines [id:da7571092166617424] 
\draw    (479.48,220) -- (509.48,260) ;
%Straight Lines [id:da06800621402155604] 
\draw    (479.48,220) -- (449.48,260) ;
%Shape: Circle [id:dp2342831991857226] 
\draw  [color={rgb, 255:red, 0; green, 0; blue, 0 }  ,draw opacity=1 ][fill={rgb, 255:red, 255; green, 255; blue, 255 }  ,fill opacity=1 ] (419.48,180) .. controls (419.48,174.48) and (423.95,170) .. (429.48,170) .. controls (435,170) and (439.48,174.48) .. (439.48,180) .. controls (439.48,185.52) and (435,190) .. (429.48,190) .. controls (423.95,190) and (419.48,185.52) .. (419.48,180) -- cycle ;
%Shape: Circle [id:dp08206453899602262] 
\draw  [color={rgb, 255:red, 0; green, 0; blue, 0 }  ,draw opacity=1 ][fill={rgb, 255:red, 255; green, 255; blue, 255 }  ,fill opacity=1 ] (399.48,260) .. controls (399.48,254.48) and (403.95,250) .. (409.48,250) .. controls (415,250) and (419.48,254.48) .. (419.48,260) .. controls (419.48,265.52) and (415,270) .. (409.48,270) .. controls (403.95,270) and (399.48,265.52) .. (399.48,260) -- cycle ;
%Shape: Circle [id:dp3711705738656561] 
\draw  [color={rgb, 255:red, 0; green, 0; blue, 0 }  ,draw opacity=1 ][fill={rgb, 255:red, 255; green, 255; blue, 255 }  ,fill opacity=1 ] (439.48,260) .. controls (439.48,254.48) and (443.95,250) .. (449.48,250) .. controls (455,250) and (459.48,254.48) .. (459.48,260) .. controls (459.48,265.52) and (455,270) .. (449.48,270) .. controls (443.95,270) and (439.48,265.52) .. (439.48,260) -- cycle ;
%Shape: Circle [id:dp02000907766400717] 
\draw  [color={rgb, 255:red, 0; green, 0; blue, 0 }  ,draw opacity=1 ][fill={rgb, 255:red, 255; green, 255; blue, 255 }  ,fill opacity=1 ] (499.48,260) .. controls (499.48,254.48) and (503.95,250) .. (509.48,250) .. controls (515,250) and (519.48,254.48) .. (519.48,260) .. controls (519.48,265.52) and (515,270) .. (509.48,270) .. controls (503.95,270) and (499.48,265.52) .. (499.48,260) -- cycle ;
%Shape: Circle [id:dp3086372712144747] 
\draw  [color={rgb, 255:red, 0; green, 0; blue, 0 }  ,draw opacity=1 ][fill={rgb, 255:red, 255; green, 255; blue, 255 }  ,fill opacity=1 ] (339.48,260) .. controls (339.48,254.48) and (343.95,250) .. (349.48,250) .. controls (355,250) and (359.48,254.48) .. (359.48,260) .. controls (359.48,265.52) and (355,270) .. (349.48,270) .. controls (343.95,270) and (339.48,265.52) .. (339.48,260) -- cycle ;
%Shape: Circle [id:dp22051828467793144] 
\draw  [color={rgb, 255:red, 0; green, 0; blue, 0 }  ,draw opacity=1 ][fill={rgb, 255:red, 255; green, 255; blue, 255 }  ,fill opacity=1 ] (369.48,220) .. controls (369.48,214.48) and (373.95,210) .. (379.48,210) .. controls (385,210) and (389.48,214.48) .. (389.48,220) .. controls (389.48,225.52) and (385,230) .. (379.48,230) .. controls (373.95,230) and (369.48,225.52) .. (369.48,220) -- cycle ;
%Shape: Circle [id:dp875747870304624] 
\draw  [color={rgb, 255:red, 0; green, 0; blue, 0 }  ,draw opacity=1 ][fill={rgb, 255:red, 255; green, 255; blue, 255 }  ,fill opacity=1 ] (469.48,220) .. controls (469.48,214.48) and (473.95,210) .. (479.48,210) .. controls (485,210) and (489.48,214.48) .. (489.48,220) .. controls (489.48,225.52) and (485,230) .. (479.48,230) .. controls (473.95,230) and (469.48,225.52) .. (469.48,220) -- cycle ;
%Shape: Brace [id:dp9237454180694848] 
\draw   (339.64,273.29) .. controls (339.65,277.96) and (341.98,280.29) .. (346.65,280.28) -- (369.72,280.28) .. controls (376.39,280.27) and (379.72,282.6) .. (379.72,287.27) .. controls (379.72,282.6) and (383.05,280.27) .. (389.72,280.27)(386.72,280.27) -- (412.79,280.26) .. controls (417.46,280.26) and (419.79,277.93) .. (419.79,273.26) ;
%Shape: Brace [id:dp08135296059502228] 
\draw   (439.83,273.43) .. controls (439.84,278.1) and (442.17,280.43) .. (446.84,280.43) -- (469.91,280.42) .. controls (476.58,280.41) and (479.91,282.74) .. (479.91,287.41) .. controls (479.91,282.74) and (483.24,280.41) .. (489.91,280.41)(486.91,280.41) -- (512.98,280.4) .. controls (517.65,280.4) and (519.98,278.07) .. (519.98,273.4) ;
%Shape: Brace [id:dp3056156293418395] 
\draw   (339.48,302.95) .. controls (339.48,307.62) and (341.81,309.95) .. (346.48,309.95) -- (419.74,309.95) .. controls (426.41,309.95) and (429.74,312.28) .. (429.74,316.95) .. controls (429.74,312.28) and (433.07,309.95) .. (439.74,309.95)(436.74,309.95) -- (513,309.95) .. controls (517.67,309.95) and (520,307.62) .. (520,302.95) ;
%Straight Lines [id:da6134534718485312] 
\draw [color={rgb, 255:red, 214; green, 39; blue, 40 }  ,draw opacity=1 ]   (370,140) -- (365,100) ;
%Straight Lines [id:da21199535036630057] 
\draw [color={rgb, 255:red, 214; green, 39; blue, 40 }  ,draw opacity=1 ]   (370,140) -- (375,100) ;
%Straight Lines [id:da23622185270712537] 
\draw [color={rgb, 255:red, 31; green, 119; blue, 180 }  ,draw opacity=1 ]   (410,140) -- (405,100) ;
%Straight Lines [id:da36111908787862523] 
\draw [color={rgb, 255:red, 31; green, 119; blue, 180 }  ,draw opacity=1 ]   (410,140) -- (415,100) ;
%Straight Lines [id:da23529147700244413] 
\draw [color={rgb, 255:red, 255; green, 127; blue, 14 }  ,draw opacity=1 ]   (450,140) -- (445,100) ;
%Straight Lines [id:da6368297650179324] 
\draw [color={rgb, 255:red, 255; green, 127; blue, 14 }  ,draw opacity=1 ]   (450,140) -- (455,100) ;
%Straight Lines [id:da5067735698439086] 
\draw [color={rgb, 255:red, 44; green, 160; blue, 44 }  ,draw opacity=1 ]   (490,140) -- (485,100) ;
%Straight Lines [id:da6740410138965691] 
\draw [color={rgb, 255:red, 44; green, 160; blue, 44 }  ,draw opacity=1 ]   (490,140) -- (495,100) ;
%Straight Lines [id:da07745903899355411] 
\draw [color={rgb, 255:red, 214; green, 39; blue, 40 }  ,draw opacity=1 ]   (370,140) -- (415,100) ;
%Straight Lines [id:da4834656940364246] 
\draw [color={rgb, 255:red, 214; green, 39; blue, 40 }  ,draw opacity=1 ]   (370,140) -- (405,100) ;
%Straight Lines [id:da6222189033074146] 
\draw [color={rgb, 255:red, 214; green, 39; blue, 40 }  ,draw opacity=1 ]   (335,100) -- (370,140) ;
%Straight Lines [id:da6284357644010319] 
\draw [color={rgb, 255:red, 214; green, 39; blue, 40 }  ,draw opacity=1 ]   (325,100) -- (370,140) ;
%Straight Lines [id:da00297225764902187] 
\draw [color={rgb, 255:red, 31; green, 119; blue, 180 }  ,draw opacity=1 ]   (409.48,140) -- (365,100) ;
%Straight Lines [id:da30620207725941606] 
\draw [color={rgb, 255:red, 31; green, 119; blue, 180 }  ,draw opacity=1 ]   (410,140) -- (375,100) ;
%Straight Lines [id:da17339306857169512] 
\draw [color={rgb, 255:red, 31; green, 119; blue, 180 }  ,draw opacity=1 ]   (410,140) -- (445,100) ;
%Straight Lines [id:da3673120275547934] 
\draw [color={rgb, 255:red, 31; green, 119; blue, 180 }  ,draw opacity=1 ]   (409.48,140) -- (455,100) ;
%Straight Lines [id:da6865514261011398] 
\draw [color={rgb, 255:red, 255; green, 127; blue, 14 }  ,draw opacity=1 ]   (450,140) -- (525,100) ;
%Straight Lines [id:da016673930984399954] 
\draw [color={rgb, 255:red, 255; green, 127; blue, 14 }  ,draw opacity=1 ]   (535,100) -- (450,140) ;
%Straight Lines [id:da5775260876746955] 
\draw [color={rgb, 255:red, 255; green, 127; blue, 14 }  ,draw opacity=1 ]   (365,100) -- (449.48,140) ;
%Straight Lines [id:da5536110831829206] 
\draw [color={rgb, 255:red, 255; green, 127; blue, 14 }  ,draw opacity=1 ]   (375,100) -- (450,140) ;
%Straight Lines [id:da573870201308118] 
\draw [color={rgb, 255:red, 44; green, 160; blue, 44 }  ,draw opacity=1 ]   (490,140) -- (445,100) ;
%Straight Lines [id:da23246134410931463] 
\draw [color={rgb, 255:red, 44; green, 160; blue, 44 }  ,draw opacity=1 ]   (490,140) -- (470.44,117.64) -- (455,100) ;
%Straight Lines [id:da3769081768672229] 
\draw [color={rgb, 255:red, 44; green, 160; blue, 44 }  ,draw opacity=1 ]   (490,140) -- (525,100) ;
%Straight Lines [id:da303477487313079] 
\draw [color={rgb, 255:red, 44; green, 160; blue, 44 }  ,draw opacity=1 ]   (490,140) -- (535,100) ;
%Shape: Circle [id:dp577237781555632] 
\draw  [color={rgb, 255:red, 31; green, 119; blue, 180 }  ,draw opacity=1 ][fill={rgb, 255:red, 255; green, 255; blue, 255 }  ,fill opacity=1 ] (400,140) .. controls (400,134.48) and (404.48,130) .. (410,130) .. controls (415.52,130) and (420,134.48) .. (420,140) .. controls (420,145.52) and (415.52,150) .. (410,150) .. controls (404.48,150) and (400,145.52) .. (400,140) -- cycle ;
%Shape: Circle [id:dp8759046785261764] 
\draw  [color={rgb, 255:red, 255; green, 127; blue, 14 }  ,draw opacity=1 ][fill={rgb, 255:red, 255; green, 255; blue, 255 }  ,fill opacity=1 ] (440,140) .. controls (440,134.48) and (444.48,130) .. (450,130) .. controls (455.52,130) and (460,134.48) .. (460,140) .. controls (460,145.52) and (455.52,150) .. (450,150) .. controls (444.48,150) and (440,145.52) .. (440,140) -- cycle ;
%Shape: Circle [id:dp0926614938949697] 
\draw  [color={rgb, 255:red, 44; green, 160; blue, 44 }  ,draw opacity=1 ][fill={rgb, 255:red, 255; green, 255; blue, 255 }  ,fill opacity=1 ] (480,140) .. controls (480,134.48) and (484.48,130) .. (490,130) .. controls (495.52,130) and (500,134.48) .. (500,140) .. controls (500,145.52) and (495.52,150) .. (490,150) .. controls (484.48,150) and (480,145.52) .. (480,140) -- cycle ;
%Shape: Circle [id:dp3790211541521542] 
\draw  [color={rgb, 255:red, 214; green, 39; blue, 40 }  ,draw opacity=1 ][fill={rgb, 255:red, 255; green, 255; blue, 255 }  ,fill opacity=1 ] (360,140) .. controls (360,134.48) and (364.48,130) .. (370,130) .. controls (375.52,130) and (380,134.48) .. (380,140) .. controls (380,145.52) and (375.52,150) .. (370,150) .. controls (364.48,150) and (360,145.52) .. (360,140) -- cycle ;
%Flowchart: Alternative Process [id:dp1157194584480925] 
\draw  [fill={rgb, 255:red, 255; green, 255; blue, 255 }  ,fill opacity=1 ] (317.5,91.88) .. controls (317.5,89.46) and (319.46,87.5) .. (321.88,87.5) -- (338.13,87.5) .. controls (340.54,87.5) and (342.5,89.46) .. (342.5,91.88) -- (342.5,108.13) .. controls (342.5,110.54) and (340.54,112.5) .. (338.13,112.5) -- (321.88,112.5) .. controls (319.46,112.5) and (317.5,110.54) .. (317.5,108.13) -- cycle ;
%Flowchart: Alternative Process [id:dp7647645737678288] 
\draw  [fill={rgb, 255:red, 255; green, 255; blue, 255 }  ,fill opacity=1 ] (357.5,91.88) .. controls (357.5,89.46) and (359.46,87.5) .. (361.88,87.5) -- (378.13,87.5) .. controls (380.54,87.5) and (382.5,89.46) .. (382.5,91.88) -- (382.5,108.13) .. controls (382.5,110.54) and (380.54,112.5) .. (378.13,112.5) -- (361.88,112.5) .. controls (359.46,112.5) and (357.5,110.54) .. (357.5,108.13) -- cycle ;
%Flowchart: Alternative Process [id:dp6350692992654154] 
\draw  [fill={rgb, 255:red, 255; green, 255; blue, 255 }  ,fill opacity=1 ] (397.6,91.88) .. controls (397.6,89.46) and (399.56,87.5) .. (401.98,87.5) -- (418.23,87.5) .. controls (420.64,87.5) and (422.6,89.46) .. (422.6,91.88) -- (422.6,108.13) .. controls (422.6,110.54) and (420.64,112.5) .. (418.23,112.5) -- (401.98,112.5) .. controls (399.56,112.5) and (397.6,110.54) .. (397.6,108.13) -- cycle ;
%Flowchart: Alternative Process [id:dp9078619540945791] 
\draw  [fill={rgb, 255:red, 255; green, 255; blue, 255 }  ,fill opacity=1 ] (437.5,91.88) .. controls (437.5,89.46) and (439.46,87.5) .. (441.88,87.5) -- (458.13,87.5) .. controls (460.54,87.5) and (462.5,89.46) .. (462.5,91.88) -- (462.5,108.13) .. controls (462.5,110.54) and (460.54,112.5) .. (458.13,112.5) -- (441.88,112.5) .. controls (439.46,112.5) and (437.5,110.54) .. (437.5,108.13) -- cycle ;
%Flowchart: Alternative Process [id:dp091413880362619] 
\draw  [fill={rgb, 255:red, 255; green, 255; blue, 255 }  ,fill opacity=1 ] (477.5,91.88) .. controls (477.5,89.46) and (479.46,87.5) .. (481.88,87.5) -- (498.13,87.5) .. controls (500.54,87.5) and (502.5,89.46) .. (502.5,91.88) -- (502.5,108.13) .. controls (502.5,110.54) and (500.54,112.5) .. (498.13,112.5) -- (481.88,112.5) .. controls (479.46,112.5) and (477.5,110.54) .. (477.5,108.13) -- cycle ;
%Flowchart: Alternative Process [id:dp8347576452430385] 
\draw  [fill={rgb, 255:red, 255; green, 255; blue, 255 }  ,fill opacity=1 ] (517.5,91.88) .. controls (517.5,89.46) and (519.46,87.5) .. (521.88,87.5) -- (538.13,87.5) .. controls (540.54,87.5) and (542.5,89.46) .. (542.5,91.88) -- (542.5,108.13) .. controls (542.5,110.54) and (540.54,112.5) .. (538.13,112.5) -- (521.88,112.5) .. controls (519.46,112.5) and (517.5,110.54) .. (517.5,108.13) -- cycle ;
%Straight Lines [id:da9511521013783475] 
\draw  [dash pattern={on 4.5pt off 4.5pt}]  (480,70) -- (517.5,91.88) ;
%Straight Lines [id:da6635344526193945] 
\draw [color={rgb, 255:red, 214; green, 39; blue, 40 }  ,draw opacity=1 ]   (315,15) -- (330,60) ;
%Straight Lines [id:da0104732010666857] 
\draw [color={rgb, 255:red, 214; green, 39; blue, 40 }  ,draw opacity=1 ]   (345,15) -- (330,60) ;
%Straight Lines [id:da7580924600496635] 
\draw [color={rgb, 255:red, 31; green, 119; blue, 180 }  ,draw opacity=1 ]   (385,15) -- (385,60) ;
%Straight Lines [id:da5355620014543958] 
\draw [color={rgb, 255:red, 31; green, 119; blue, 180 }  ,draw opacity=1 ]   (415,15) -- (385,60) ;
%Straight Lines [id:da7205739861540919] 
\draw [color={rgb, 255:red, 255; green, 127; blue, 14 }  ,draw opacity=1 ]   (415,60) -- (385,15) ;
%Straight Lines [id:da7734545078343991] 
\draw [color={rgb, 255:red, 255; green, 127; blue, 14 }  ,draw opacity=1 ]   (415,60) -- (445,15) ;
%Straight Lines [id:da8167906258250794] 
\draw [color={rgb, 255:red, 44; green, 160; blue, 44 }  ,draw opacity=1 ]   (445,60) -- (415,15) ;
%Straight Lines [id:da2233685074738554] 
\draw [color={rgb, 255:red, 44; green, 160; blue, 44 }  ,draw opacity=1 ]   (445,60) -- (445,15) ;
%Straight Lines [id:da7186136321408183] 
\draw [color={rgb, 255:red, 255; green, 127; blue, 14 }  ,draw opacity=1 ]   (500,60) -- (515,15) ;
%Straight Lines [id:da64787777607541] 
\draw [color={rgb, 255:red, 255; green, 127; blue, 14 }  ,draw opacity=1 ]   (500,60) -- (485,15) ;
%Straight Lines [id:da7548108236753914] 
\draw [color={rgb, 255:red, 44; green, 160; blue, 44 }  ,draw opacity=1 ]   (530,60) -- (545,15) ;
%Straight Lines [id:da3329853340211246] 
\draw [color={rgb, 255:red, 44; green, 160; blue, 44 }  ,draw opacity=1 ]   (530,60) -- (515,15) ;
%Shape: Circle [id:dp10549143533266125] 
\draw  [draw opacity=0][fill={rgb, 255:red, 255; green, 255; blue, 255 }  ,fill opacity=1 ] (320,55) .. controls (320,49.48) and (324.48,45) .. (330,45) .. controls (335.52,45) and (340,49.48) .. (340,55) .. controls (340,60.52) and (335.52,65) .. (330,65) .. controls (324.48,65) and (320,60.52) .. (320,55) -- cycle ;
%Shape: Circle [id:dp663691102994435] 
\draw  [draw opacity=0][fill={rgb, 255:red, 255; green, 255; blue, 255 }  ,fill opacity=1 ] (375,55) .. controls (375,49.48) and (379.48,45) .. (385,45) .. controls (390.52,45) and (395,49.48) .. (395,55) .. controls (395,60.52) and (390.52,65) .. (385,65) .. controls (379.48,65) and (375,60.52) .. (375,55) -- cycle ;
%Shape: Circle [id:dp10362674829283325] 
\draw  [draw opacity=0][fill={rgb, 255:red, 255; green, 255; blue, 255 }  ,fill opacity=1 ] (405,55) .. controls (405,49.48) and (409.48,45) .. (415,45) .. controls (420.52,45) and (425,49.48) .. (425,55) .. controls (425,60.52) and (420.52,65) .. (415,65) .. controls (409.48,65) and (405,60.52) .. (405,55) -- cycle ;
%Shape: Circle [id:dp7722980362338807] 
\draw  [draw opacity=0][fill={rgb, 255:red, 255; green, 255; blue, 255 }  ,fill opacity=1 ] (435,55) .. controls (435,49.48) and (439.48,45) .. (445,45) .. controls (450.52,45) and (455,49.48) .. (455,55) .. controls (455,60.52) and (450.52,65) .. (445,65) .. controls (439.48,65) and (435,60.52) .. (435,55) -- cycle ;
%Shape: Circle [id:dp5637926722306983] 
\draw  [draw opacity=0][fill={rgb, 255:red, 255; green, 255; blue, 255 }  ,fill opacity=1 ] (490,55) .. controls (490,49.48) and (494.48,45) .. (500,45) .. controls (505.52,45) and (510,49.48) .. (510,55) .. controls (510,60.52) and (505.52,65) .. (500,65) .. controls (494.48,65) and (490,60.52) .. (490,55) -- cycle ;
%Shape: Circle [id:dp4830254441485764] 
\draw  [draw opacity=0][fill={rgb, 255:red, 255; green, 255; blue, 255 }  ,fill opacity=1 ] (520,55) .. controls (520,49.48) and (524.48,45) .. (530,45) .. controls (535.52,45) and (540,49.48) .. (540,55) .. controls (540,60.52) and (535.52,65) .. (530,65) .. controls (524.48,65) and (520,60.52) .. (520,55) -- cycle ;
%Shape: Circle [id:dp047720211154304915] 
\draw  [color={rgb, 255:red, 0; green, 0; blue, 0 }  ,draw opacity=1 ][fill={rgb, 255:red, 255; green, 255; blue, 255 }  ,fill opacity=1 ] (305,15) .. controls (305,9.48) and (309.48,5) .. (315,5) .. controls (320.52,5) and (325,9.48) .. (325,15) .. controls (325,20.52) and (320.52,25) .. (315,25) .. controls (309.48,25) and (305,20.52) .. (305,15) -- cycle ;
%Shape: Circle [id:dp36411093283995166] 
\draw  [color={rgb, 255:red, 0; green, 0; blue, 0 }  ,draw opacity=1 ][fill={rgb, 255:red, 255; green, 255; blue, 255 }  ,fill opacity=1 ] (335,15) .. controls (335,9.48) and (339.48,5) .. (345,5) .. controls (350.52,5) and (355,9.48) .. (355,15) .. controls (355,20.52) and (350.52,25) .. (345,25) .. controls (339.48,25) and (335,20.52) .. (335,15) -- cycle ;
%Shape: Circle [id:dp8192725563530641] 
\draw  [color={rgb, 255:red, 0; green, 0; blue, 0 }  ,draw opacity=1 ][fill={rgb, 255:red, 255; green, 255; blue, 255 }  ,fill opacity=1 ] (375,15) .. controls (375,9.48) and (379.48,5) .. (385,5) .. controls (390.52,5) and (395,9.48) .. (395,15) .. controls (395,20.52) and (390.52,25) .. (385,25) .. controls (379.48,25) and (375,20.52) .. (375,15) -- cycle ;
%Shape: Circle [id:dp2242247556676591] 
\draw  [color={rgb, 255:red, 0; green, 0; blue, 0 }  ,draw opacity=1 ][fill={rgb, 255:red, 255; green, 255; blue, 255 }  ,fill opacity=1 ] (405,15) .. controls (405,9.48) and (409.48,5) .. (415,5) .. controls (420.52,5) and (425,9.48) .. (425,15) .. controls (425,20.52) and (420.52,25) .. (415,25) .. controls (409.48,25) and (405,20.52) .. (405,15) -- cycle ;
%Shape: Circle [id:dp8505565794610606] 
\draw  [color={rgb, 255:red, 0; green, 0; blue, 0 }  ,draw opacity=1 ][fill={rgb, 255:red, 255; green, 255; blue, 255 }  ,fill opacity=1 ] (435,15) .. controls (435,9.48) and (439.48,5) .. (445,5) .. controls (450.52,5) and (455,9.48) .. (455,15) .. controls (455,20.52) and (450.52,25) .. (445,25) .. controls (439.48,25) and (435,20.52) .. (435,15) -- cycle ;
%Shape: Circle [id:dp7794755805895754] 
\draw  [color={rgb, 255:red, 0; green, 0; blue, 0 }  ,draw opacity=1 ][fill={rgb, 255:red, 255; green, 255; blue, 255 }  ,fill opacity=1 ] (475,15) .. controls (475,9.48) and (479.48,5) .. (485,5) .. controls (490.52,5) and (495,9.48) .. (495,15) .. controls (495,20.52) and (490.52,25) .. (485,25) .. controls (479.48,25) and (475,20.52) .. (475,15) -- cycle ;
%Shape: Circle [id:dp48606364215758435] 
\draw  [color={rgb, 255:red, 0; green, 0; blue, 0 }  ,draw opacity=1 ][fill={rgb, 255:red, 255; green, 255; blue, 255 }  ,fill opacity=1 ] (505,15) .. controls (505,9.48) and (509.48,5) .. (515,5) .. controls (520.52,5) and (525,9.48) .. (525,15) .. controls (525,20.52) and (520.52,25) .. (515,25) .. controls (509.48,25) and (505,20.52) .. (505,15) -- cycle ;
%Shape: Circle [id:dp9589778629507311] 
\draw  [color={rgb, 255:red, 0; green, 0; blue, 0 }  ,draw opacity=1 ][fill={rgb, 255:red, 255; green, 255; blue, 255 }  ,fill opacity=1 ] (535,15) .. controls (535,9.48) and (539.48,5) .. (545,5) .. controls (550.52,5) and (555,9.48) .. (555,15) .. controls (555,20.52) and (550.52,25) .. (545,25) .. controls (539.48,25) and (535,20.52) .. (535,15) -- cycle ;
%Shape: Circle [id:dp26671527898394776] 
\draw  [color={rgb, 255:red, 214; green, 39; blue, 40 }  ,draw opacity=1 ][dash pattern={on 5.25pt off 4.5pt}] (357.5,140) .. controls (357.5,133.1) and (363.1,127.5) .. (370,127.5) .. controls (376.9,127.5) and (382.5,133.1) .. (382.5,140) .. controls (382.5,146.9) and (376.9,152.5) .. (370,152.5) .. controls (363.1,152.5) and (357.5,146.9) .. (357.5,140) -- cycle ;
%Shape: Circle [id:dp619821830954022] 
\draw  [color={rgb, 255:red, 44; green, 160; blue, 44 }  ,draw opacity=1 ][dash pattern={on 5.25pt off 4.5pt}] (477.5,140) .. controls (477.5,133.1) and (483.1,127.5) .. (490,127.5) .. controls (496.9,127.5) and (502.5,133.1) .. (502.5,140) .. controls (502.5,146.9) and (496.9,152.5) .. (490,152.5) .. controls (483.1,152.5) and (477.5,146.9) .. (477.5,140) -- cycle ;
%Shape: Circle [id:dp1151933110016139] 
\draw  [color={rgb, 255:red, 214; green, 39; blue, 40 }  ,draw opacity=1 ][dash pattern={on 5.25pt off 4.5pt}] (332.5,15) .. controls (332.5,8.1) and (338.1,2.5) .. (345,2.5) .. controls (351.9,2.5) and (357.5,8.1) .. (357.5,15) .. controls (357.5,21.9) and (351.9,27.5) .. (345,27.5) .. controls (338.1,27.5) and (332.5,21.9) .. (332.5,15) -- cycle ;
%Shape: Circle [id:dp1522768719627201] 
\draw  [color={rgb, 255:red, 214; green, 39; blue, 40 }  ,draw opacity=1 ][dash pattern={on 5.25pt off 4.5pt}] (302.5,15) .. controls (302.5,8.1) and (308.1,2.5) .. (315,2.5) .. controls (321.9,2.5) and (327.5,8.1) .. (327.5,15) .. controls (327.5,21.9) and (321.9,27.5) .. (315,27.5) .. controls (308.1,27.5) and (302.5,21.9) .. (302.5,15) -- cycle ;
%Shape: Circle [id:dp00863752569589571] 
\draw  [color={rgb, 255:red, 44; green, 160; blue, 44 }  ,draw opacity=1 ][dash pattern={on 5.25pt off 4.5pt}] (502.5,15) .. controls (502.5,8.1) and (508.1,2.5) .. (515,2.5) .. controls (521.9,2.5) and (527.5,8.1) .. (527.5,15) .. controls (527.5,21.9) and (521.9,27.5) .. (515,27.5) .. controls (508.1,27.5) and (502.5,21.9) .. (502.5,15) -- cycle ;
%Shape: Circle [id:dp484485637983078] 
\draw  [color={rgb, 255:red, 44; green, 160; blue, 44 }  ,draw opacity=1 ][dash pattern={on 5.25pt off 4.5pt}] (532.5,15) .. controls (532.5,8.1) and (538.1,2.5) .. (545,2.5) .. controls (551.9,2.5) and (557.5,8.1) .. (557.5,15) .. controls (557.5,21.9) and (551.9,27.5) .. (545,27.5) .. controls (538.1,27.5) and (532.5,21.9) .. (532.5,15) -- cycle ;
%Shape: Circle [id:dp27647307839953816] 
\draw  [color={rgb, 255:red, 44; green, 160; blue, 44 }  ,draw opacity=1 ][dash pattern={on 5.25pt off 4.5pt}] (432.5,15) .. controls (432.5,8.1) and (438.1,2.5) .. (445,2.5) .. controls (451.9,2.5) and (457.5,8.1) .. (457.5,15) .. controls (457.5,21.9) and (451.9,27.5) .. (445,27.5) .. controls (438.1,27.5) and (432.5,21.9) .. (432.5,15) -- cycle ;
%Shape: Circle [id:dp2850074610844966] 
\draw  [color={rgb, 255:red, 44; green, 160; blue, 44 }  ,draw opacity=1 ][dash pattern={on 5.25pt off 4.5pt}] (402.5,15) .. controls (402.5,8.1) and (408.1,2.5) .. (415,2.5) .. controls (421.9,2.5) and (427.5,8.1) .. (427.5,15) .. controls (427.5,21.9) and (421.9,27.5) .. (415,27.5) .. controls (408.1,27.5) and (402.5,21.9) .. (402.5,15) -- cycle ;

% Text Node
\draw (101,13.5) node    {$v_{1}$};
% Text Node
\draw (151,13.5) node    {$v_{2}$};
% Text Node
\draw (201,13.5) node    {$v_{3}$};
% Text Node
\draw (201,63.5) node    {$v_{6}$};
% Text Node
\draw (151,63.5) node    {$v_{5}$};
% Text Node
\draw (101,63.5) node    {$v_{4}$};
% Text Node
\draw (33,13.5) node [anchor=west] [inner sep=0.75pt]  [font=\large] [align=left] {3DM:};
% Text Node
\draw (151,221) node    {$\cdots $};
% Text Node
\draw (101,261) node    {$\cdots $};
% Text Node
\draw (201,261) node    {$\cdots $};
% Text Node
\draw (33,181) node [anchor=west] [inner sep=0.75pt]  [font=\large] [align=left] {VN:};
% Text Node
\draw (101,289.4) node [anchor=north] [inner sep=0.75pt]    {$s$};
% Text Node
\draw (201,289.4) node [anchor=north] [inner sep=0.75pt]    {$s$};
% Text Node
\draw (151,319.4) node [anchor=north] [inner sep=0.75pt]    {$g+q$};
% Text Node
\draw (430.48,221) node    {$\cdots $};
% Text Node
\draw (380.48,261) node    {$\cdots $};
% Text Node
\draw (480.48,261) node    {$\cdots $};
% Text Node
\draw (380.48,289.4) node [anchor=north] [inner sep=0.75pt]    {$s-1$};
% Text Node
\draw (480.48,289.4) node [anchor=north] [inner sep=0.75pt]    {$s-1$};
% Text Node
\draw (430.48,319.4) node [anchor=north] [inner sep=0.75pt]    {$g$};
% Text Node
\draw (430.48,181) node    {$r$};
% Text Node
\draw (379.5,220.5) node    {$f_{1}^{0}$};
% Text Node
\draw (349.5,260.5) node    {$f_{1}^{1}$};
% Text Node
\draw (409.5,260.5) node    {$f_{1}^{5}$};
% Text Node
\draw (479.5,220.5) node    {$f_{6}^{0}$};
% Text Node
\draw (449.5,260.5) node    {$f_{6}^{1}$};
% Text Node
\draw (509.5,260.5) node    {$f_{6}^{5}$};
% Text Node
\draw (331,101) node    {$J_{1}$};
% Text Node
\draw (371,101) node    {$J_{2}$};
% Text Node
\draw (411,101) node    {$J_{3}$};
% Text Node
\draw (451,101) node    {$J_{4}$};
% Text Node
\draw (491,101) node    {$J_{5}$};
% Text Node
\draw (531,101) node    {$J_{6}$};
% Text Node
\draw (371,141) node    {$t_{1}$};
% Text Node
\draw (411,141) node    {$t_{2}$};
% Text Node
\draw (451,141) node    {$t_{3}$};
% Text Node
\draw (491,141) node    {$t_{4}$};
% Text Node
\draw (331,56) node    {$t_{1}$};
% Text Node
\draw (386,56) node    {$t_{2}$};
% Text Node
\draw (416,56) node    {$t_{3}$};
% Text Node
\draw (446,56) node    {$t_{4}$};
% Text Node
\draw (501,56) node    {$t_{3}$};
% Text Node
\draw (531,56) node    {$t_{4}$};
% Text Node
\draw (253,13.5) node [anchor=west] [inner sep=0.75pt]  [font=\large] [align=left] {PN:};

\end{tikzpicture}

    \caption{
        Example of reducing 3DM to cVNE of oversubscribed $2$-star VN. 
        1)~3DM:~$6$ vertices $v_i$ and $4$ colored hyperedges $t_j$ are given.
        The maximum vertex degree in this example is $B = 3$.
        Green and red hyperedges ($t_1$ and $t_4$) is the matching of size $q = 2$.
        2)~PN:~The sets of vertices $J_i$ that substitute $v_i$ are shown as rounded rectangles.
        The number of edges from each $t_i$ to $J_u$ for $u \in t_i$ is $B - 1 = 2$.
        The structure of $J_1$, $J_4$ and $J_6$ is shown in detail.
        The total number of edges from $t_i$ to vertices of $J$ is $s = 6$.
        Choosing a matching $t_1$, $t_4$ occupies $14$ vertices from $T \cup J$ (shown with dashed outline).
        There are $g = 6$ vertices in $T \cup J$ that remain unoccupied.
        For each such vertex there is a filler subtree.
        3)~VN:~ The oversubscribed tree has $g + q$ groups with $s$ leaves in each of them.
        Edges with capacity or weight greater than one are thick.
    }
    \label{fig:3DM_to_Oversub_2Star}
    
\end{figure}
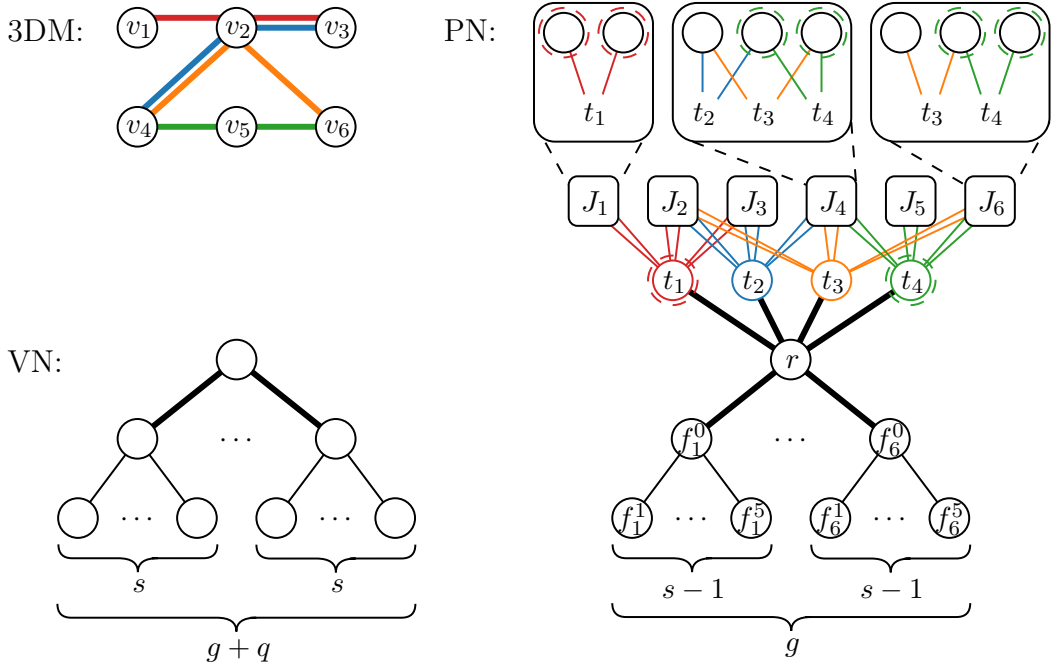

\begin{theorem}\label{th:oversub_2Star_cVNE_NP}
Oversubscribed $2$-star VN on the set of all graph cVNE is NP-complete.
\end{theorem}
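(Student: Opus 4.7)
The plan is to reduce 3DM to this problem, following the blueprint in Figure~\ref{fig:3DM_to_Oversub_2Star}. Given a 3DM instance with vertex set $V = X \cup Y \cup Z$ of total size $3q$, triple set $T$, and $B$ the maximum number of triples in which any $v_i \in V$ participates, I would build the PN as follows. For each $v_i \in V$ introduce a ``vertex gadget'' $J_i$ of size $B$; for each triple $t_j=(x,y,z) \in T$ introduce a ``triple root'' $t_j$ connected by $B-1$ capacity-$1$ edges to each of $J_x$, $J_y$, $J_z$, so that $t_j$ has exactly $s=3(B-1)$ leaves among the $J$-gadgets. To equalize node counts while preserving the capacity argument, attach a ``filler'' oversubscribed star of $s+1$ nodes to each so-far-unused vertex; let $g$ denote the number of these fillers. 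The VN is then an oversubscribed $2$-star with $q+g$ groups of size $s$ each, with oversubscription factor $o=s$ so that every root-to-group edge carries demand exactly $1$. All PN edges inside gadgets/fillers get capacity $s$, root-to-triple and root-to-filler edges get capacity $1$, and edges between $t_j$ and $J_u$ gadgets get capacity $1$.

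For the forward direction, if $M \subseteq T$ is a matching of size $q$, map the root of the VN to the PN root, and for each $t_j \in M$ map one group to $t_j$: its $s$ leaves fill the $B-1$ neighbors of $t_j$ in each of the three disjoint gadgets $J_x, J_y, J_z$. Disjointness of $M$ guarantees no gadget $J_u$ is used twice. The remaining $g$ groups are mapped to the $g$ fillers. All capacities are satisfied by construction, so the embedding is valid.

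For the backward direction, the key is capacity-induced rigidity. Since every root-to-subtree edge has capacity $1$, each group of the VN must lie entirely inside one subtree of the PN (either a $t_j$-rooted region or a filler). A filler absorbs exactly one group by size. A group placed under some $t_j$ routes $s$ units of demand out of $t_j$, and since all $t_j$-to-$J$ edges have capacity $1$, it must occupy \emph{all} $B-1$ neighbors of $t_j$ in each of $J_x$, $J_y$, $J_z$; because gadget-internal edges have capacity $s$ and gadgets of distinct $v_i$'s are only joined through triple roots, a single gadget $J_u$ cannot accommodate leaves coming from two different groups. Hence the $q$ triples used as hosts are pairwise vertex-disjoint, yielding a 3DM matching of size~$q$. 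Membership in NP is immediate since any embedding can be checked in polynomial time.

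The main obstacle is calibrating the parameters $B$, $s$, $g$, the edge capacities, and the filler structure so that (i) the $n$ from VN equals $n$ from PN regardless of which $q$ triples are selected, (ii) a group is forced to live in a single PN subtree (this is why we pick $o=s$, making root-to-group edges saturate at $1$), and (iii) overlapping triples in $V$ cannot both host groups simultaneously. Once these parameters are set consistently, the reduction is polynomial in $|T|+q$, and the argument mirrors the wVNE analogue of Theorem~\ref{th:oversub_2Star_wVNE_NP}.
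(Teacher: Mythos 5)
Your high-level plan is the paper's plan (3DM, triple vertices hosting group roots, vertex gadgets blocking intersecting triples, fillers for the node count), but two components of your construction do not work as specified. First, the vertex gadget: you take $|J_i|=B$ and attach each incident triple to ``some'' $B-1$ of its vertices. The whole backward direction rests on the claim that a selected triple must \emph{occupy} all $B-1$ of its neighbours in each $J_u$, so that two intersecting triples collide. But a leaf whose path enters a $J_u$-vertex of degree $\ge 3$ (and your gadget vertices have degree up to $d_u$, the number of triples containing $u$; moreover you give gadget-internal edges capacity $s$, adding further routes) can pass through and land elsewhere, so $J_u$ need not fill up and the pigeonhole ``$2(B-1)>B$'' argument does not close. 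The paper avoids exactly this by engineering every $J$-vertex to have degree at most $2$: a dedicated vertex $p_u^{ab}$ for each conflicting pair $a,b\ni u$ (degree $2$, so when both $a$ and $b$ are selected each of its two edges is saturated by the other side and both leaves are forced to \emph{stop} there --- an immediate contradiction with injectivity), padded by degree-$1$ vertices $c_u^{aj}$ to bring each triple's $J_u$-degree up to $B-1$. Without that degree bound, or at least a specified adjacency pattern plus an argument ruling out pass-through routings, your reduction admits potential false positives.

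Second, the counting does not balance. Your fillers have $s+1$ nodes and ``absorb exactly one group by size,'' so equating $|V_S|=|V_G|$ gives $1+(q+g)(s+1)=1+m+|J|+g(s+1)$, i.e.\ $m+|J|=q(s+1)$ --- the filler count $g$ cancels and cannot be tuned, so the sizes match only for accidental instances. The paper's fillers have $s$ nodes, absorb $s$ of the $s+1$ vertices of a group, and deliberately push one leaf out to occupy a leftover vertex of $T\cup J$; that is what makes $g=m+|J|-q(s+1)$ both well defined and sufficient to cover every physical node. (Your forward direction also silently assumes the leftover $T\cup J$ vertices get covered.) Finally, you never argue that the VN root must be embedded on $r$; with your rescaled capacities (root edges of capacity $1$, internal edges of capacity $s$) the paper's ``only $r$ has outgoing capacity $s(g+q)$'' argument is no longer available verbatim and needs to be replaced, e.g.\ by a bottleneck argument on the capacity-$1$ edge separating any other candidate from the rest of the graph. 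Your choice $o=s$ (making the VN uniform) is legitimate and would even strengthen the statement, but only once the gadget and the counting are repaired.
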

\begin{proof}
    We reduce NP-complete problem 3DM to cVNE with oversubscribed $2$-star VN.
    Let $V = X \cup Y \cup Z$, $T \subseteq X \times Y \times Z$ be the instance of 3DM, with $|T| = m$ and $|X| = |Y| = |Z| = q \geq 3$.
    The degree of node $u \in V$ is the number of occurrences in $T$.
    Let $B$ be the maximum degree among all $u \in V$.
    By definition, $B \leq |T| = m$.

    Figure~\ref{fig:3DM_to_Oversub_2Star} shows an example of building PN and VN for a given instance of 3DM.
    First, we describe the general idea of the reduction.
    
    We use subtrees of VN to choose the hyperedges for the matching: 
    embedding of the root of each VN subtree determines which hyperedges are added to the matching and embedding of the leaves ensures that no two chosen hyperedges intersect.
    I.e., we use elements of $T$ as a vertices of PN and if some subtree root of VN is embedded on $t_i \in T$, it is considered as chosen for the matching (in Figure~\ref{fig:3DM_to_Oversub_2Star} these have dotted outline).
    For that we need exactly $q$ subtrees in VN.
    
    To prohibit selecting two intersecting hyperedges we use a set of vertices $J$ shown as rounded rectangle in Figure~\ref{fig:3DM_to_Oversub_2Star}.
    $J$ consist of subsets $J_i$, each substituting for vertex $v_i \in V$.
    Edges between $J_i$ and $t_i$ must be constructed such that if $v_i$ is a common vertex of two hyperedges $a$ and $b$, then both $a$ and $b$ cannot embed subtree roots of VN (i.e., be added to the matching).
    To achieve that, these properties of $J_i$ are of interest: 
    1)~each vertex inside $J$ has max degree $2$, which is required to limit the flow through these vertices; 
    2)~if $v_i$ is a common vertex of hyperedges $a$ and $b$, then there is $j \in J_i$ adjacent to both $a$ and $b$;
    3)~if the subtree root of VN occupies $t \in T$ then all edges between $t$ and $J$ are traversed by subtree leaves.
    
    We add a dedicated node $r$, such that the VN root is always forced to be embedded on $r$.
    
    The last concern is that we want to show that the number of vertices in VN and PN can be the same.
    However, in the current reduction there may be $g$ unoccupied vertices in PN.
    To counter that, we add $g$ filler subtrees in VN.
    For each filler subtree with $s$ leafs in VN there is a filler subtree with $s-1$ leafs in PN.
    Since the number of leaves in PN subtrees is $1$ less, all but one vertex of a filler subtree from VN can be embedded on a filler subtree of PN.
    The remaining leaf is embedded on some previously unoccupied vertex of PN.
    Thus, $g$ filler subtrees fill all unoccupied vertices of PN.
    The precise reduction is described below.
    
    First, we define PN in more detail.
    Let $r$ be a special node (for embedding the root of VN).
    Then, we create one vertex for each hyperedge from $T$.
    Moreover, for each $u \in V$ with the degree $d_u$, we introduce sets of substituting vertices, i.e., they will substitute $u$ in a specific manner described later:
    
        \[P_u = \{p_u^{ab} \, | \, \forall a, b \in T \text{ and } u \in a, u \in b, a \ne b \}\]
        \[C_u = \{c_u^{aj} \, | \, \forall a \in T, j \in \{1, 2, \ldots, B - d_u\} \text{ and } u \in a \}\]
        \[J_u = P_u \cup C_u\]
        \[J = \bigcup_{u=1}^{3 \cdot q} J_{u}\]
        
    Here, $P_u$ is a set of vertices for all pairs of hyperedges that are adjacent to $u$.
    $|P_u| = {\binom{d_u}{2}} = \frac{d_u \cdot (d_u - 1)}{2}$.
    $C_u$ is a set that contains $(B - d_u)$ vertices for every hyperedge that is adjacent to $u$.
    $|C_u| = d_u \cdot (B - d_u)$.
    Note that if the degree $d_u$ is $B$, then, $C_u = \{\}$.
    $J$ is the set of vertices that substitutes $V$ in PN.

    To achieve the same number of vertices in VN and PN, we add $g$ ($g$ is introduced later) filler vertices to PN and VN.
    The filler vertices of PN connected together become filler subtrees.
    $F_i$ is the set of vertices for $i$-th filler subtree $1 \leq i \leq g$ and contains $s = 3 \cdot (B - 1)$ vertices each.
    Set $F$ is the join of all subtrees $F_i$.
    
        \[F_i = \{f_i^0, f_i^1, f_i^2, \ldots, f_i^{s-1}\}\]
        \[F = \bigcup_{i=1}^{g} F_{i}\]
        
    Here $f_i^0$ will be the root of the filler subtree with $f_i^j$ being its child for $1 \leq j \leq (s - 1)$.

    Now, we introduce the edges and their capacities in PN:
    
    \begin{equation}\label{eq:A}
        \begin{cases}
            (r, t) : s & \text{for $t \in T$}.\\
            (t, p_u^{ab}) : 1 & \text{for $t \in T$ and $p_u^{ab} \in J$ with $t = a$ or $b$}.\\
            (t, c_u^{aj}) : 1 & \text{for $t \in T$ and $c_u^{aj} \in J$ with $t = a$}.\\
            (r, f_i^0) : s + 1 & \text{for $f_i^0 \in F$}.\\
            (f_i^0, f_i^j) : 1 & \text{for $f_i^0, f_i^j \in F$ and $1 \leq j \leq (s - 1)$}.
        \end{cases}
    \end{equation}

    Let $O$ be an oversubscribed $2$-star VN with group sizes (i.e., the number of leaves in each subtree) of $s = 3 \cdot (B - 1)$ and oversubscription factor $o = 1$.
    There are two types of groups:
    1)~$q$ effective groups devoted to select a perfect matching, and 2)~$g = m + |J| - q \cdot (s + 1)$ filler groups to match the number of vertices between VN and PN.
    In total, $O$ has $g + q$ second-level subtrees.
    The demand of edges from leaves in $O$ is $1$ and edges from root is $\frac{s}{o} = s$.
    Let us calculate the sizes of PN and VN.
    PN has $1 + m + |J|$ normal nodes and $g \cdot s$ filler nodes.
    VN has $1 + (q + g) \cdot (s + 1)$.
    The difference is $1 + m + |J| - 1 - q \cdot (s + 1) - g$ which becomes zero as we choose $g$.
    
    Now we show the correctness of our reduction from 3DM to cVNE.
    
    $\Rightarrow$: Suppose there exist a solution for 3DM~--- $M \subseteq T$ with $|M| = q$.
    
    Then, we construct an embedding $f$.
    It maps the root of VN onto the special node $r$.
    $f$ maps level-one vertices of $q$ subtrees of $O$ on nodes $t$ which are in $M$.
    For each level-one vertex $v \in O$ that is mapped to $t \in M$, its $s$ children are mapped to the adjacent vertices of $t$ that are from $J$.
    By construction $t$ is adjacent to $3 \cdot (B - 1) = s$ vertices in $J$: if $t = (x, y, z)$, then in each of $J_x$, $J_y$ and $J_z$ there are exactly $(B - 1)$ vertices adjacent to $t$.
    
    After that, $g = |J| + m - q \cdot (s + 1)$ vertices in $J \cup T$ and $g$ filler subtrees are left unmapped.
    We map roots of these $g$ subtrees in $O$ on root vertices $f_i^0$ of the $g$ filler subtrees of PN.
    In each of the groups $s - 1$ leaves are mapped on the leaves of filler subtrees $f_i^j$.
    That leaves $g$ vertices of $O$ unmapped: one from each remaining subtree of $O$.
    
    Now, the remaining capacity of $(r, f_i^0)$ is $1$.
    The remaining capacity of $(r, t)$, where $t \in T$, but $t \notin M$ is $s$.
    The remaining capacity of $(t, u)$ is $1$, where $t \in T \setminus M$ and $u \in J$.
    We map the remaining leaves of $O$ on remaining vertices of $J \cup T$ arbitrarily
    Suppose we have such a remaining leaf $\ell$ and its parent $p$ mapped on $f_i^0$.
    Then, we map an edge $(p, \ell)$ the following path:
    
    \begin{itemize}
        \item If $f(\ell) = v \in T$, then $(p, \ell)$ is mapped on the path
        $f_i^0 \rightarrow r \rightarrow  v$;
        \item If $f(\ell) = c_u^{aj} \in C_u$, then $(p, \ell)$ is mapped on the path
        $f_i^0 \rightarrow  r \rightarrow a \rightarrow c_u^{aj}$;
        \item If $f(\ell) = p_u^{ab} \in P_u$, then $(p, \ell)$ is mapped on the path
        $f_i^0 \rightarrow  r \rightarrow a \rightarrow p_u^{ab}$.
    \end{itemize}
    
    Finally, let us check that all the capacity bounds are satisfied.
    For every edge in PN, this is how many capacity is used by the remaining leaves:
    
    \begin{itemize}
        \item For each $1 \leq i \leq g$, we use capacity one of $f_i^0 \rightarrow r$.
        
        \item For every $t \in T\setminus M$, we use capacity at most $s-2$ of $r \rightarrow t$. 
        $1$ by some leaf mapped on $t$ itself and $(B - 2)$ for each $J_u$, that is adjacent to $t$, and there are exactly three of them.
        Note that since $M$ corresponds to a perfect matching, there exists $b \in M$ such that $p_u^{tb}$ is already occupied.
        Meaning, at most $(B - 2)$ vertices in $J_u$ are adjacent to $t$ and are occupied by remaining leaves.
        Giving us $1 + 3 \cdot (B - 2) = 3 \cdot (B - 1) - 2 = s - 2$.
        
        \item For every $t \in T \setminus M$ and $u \in J$, we use at most one capacity of $t \rightarrow u$.
    \end{itemize}
    
    The presented mapping $f$ is a bijection between VN and PN and does not violate the capacity constraints, meaning it is the correct solution for cVNE.
    
    $\Leftarrow$: Suppose there exists $\phi$~--- a correct mapping from VN to PN. We show that we can solve 3DM.
    
    \begin{itemize}
        \item The outgoing capacity of $r$ is $g \cdot (s + 1) + m \cdot s$: $g$ edges with capacity $(s + 1)$ to $F$ and $m$ edges with capacity $s$ to $T$.
        \item The outgoing capacity of $c_u^{aj} \in J$ is $1$:
            a single edge with capacity one to $a$.
        \item The outgoing capacity of $p_u^{ab} \in J$ is $2$: 
            two edges with capacity one to $a$ and $b$.
        \item The outgoing capacity of $t \in T$ is $2 \cdot s = s + 3 \cdot (B - 1)$: 
            one edge with capacity $s$ to $r$ 
            and $(B + 1)$ edges with capacity one to every $J_u$, that is adjacent to $t$, and there are exactly three of them.
        \item The outgoing capacity of $f_i^j$ for $1 \leq j \leq (s - 1)$ is $1$:
            a single edge with capacity one to $f_i^0$.
        \item The outgoing capacity of $f_i^0$ is $2 \cdot s$:
            one edge with capacity $s + 1$ to $r$ 
            and $s - 1$ edges with capacity one to $f_i^j$ for $1 \leq j \leq (s - 1)$.
    \end{itemize}

    The root of $O$ has $g + q$ subtrees, each with $s$ leaves.
    The weight of the edge between the root and any subtree is $s$.
    Thus, the outgoing demand of the root of $O$ is $s \cdot (g + q)$.
    $r$ is the only vertex in PN that has the outgoing capacity at least $s \cdot (g + q)$.
    None of the other vertices in PN satisfy us since $q \geq 3$.
    Hence, $\phi$ maps the root of $O$ on $r$.
    
    We write that the vertex $u$ in PN is \emph{selected} if $\phi$ maps a level one vertex of $O$ to $u$.
    
    Only the vertices $t \in T$ or $f_i^0 \in F$ can be selected since these are the only vertices (apart from $r$) with outgoing capacity at least $2 \cdot s$.
    
    If $u \in V$ is adjacent to hyperedges $a, b \in T$, then both $a$ and $b$ can not be selected.
    That follows from the two observations:
    if $a$ is selected, all the outgoing capacity of $a$ is maxed and one leaf travels the path starting with $a \rightarrow p_u^{ab}$; 
    if both $a$ and $b$ are selected, there should be a vertex $p_u^{ab} \in J$ that has to simultaneously contain two leaves of $O$, which is impossible.
    
    Since $|X| = q$, at most $q$ vertices of $T$ are selected.
    Otherwise, if at least $q + 1$ vertices of $T$ are selected, there must exist $u \in X$ that is adjacent to two selected hyperedges, which contradicts the previous statement.
    Since there are $g$ more vertices to be selected and at most $g$ vertices from $F$ can be selected, there are exactly $g$ vertices selected from $F$ and exactly $q$ from $T$.
        
    Thus, $q$ selected vertices of $T$ correspond to the correct complete matching $M$: two hyperedges of $M$ cannot be adjacent to the same vertex from $V$ due to the construction.
\end{proof}

Note that in the proof above the constructed PN is bipartite.
Thus, we can strengthen Theorem~\ref{th:oversub_2Star_cVNE_NP}.

\section{VNE with weights and capacities}
\label{sec:wcVNE}

Sometimes we are interested in VNE with both cost and capacity restrictions (wcVNE)~\cite{FischerBBMH13_survey, RostFS15_stars}.
wVNE and cVNE problems are its restrictions.
Thus, if we show a variant of wcVNE to be P, then wVNE and cVNE is also P.
The opposite is also true: if cVNE or wVNE variant is NP-complete, it remains NP-complete in wcVNE.

Obviously, the generic wcVNE lies in NP, since we can check the solution in polynomial time.
Thus, we can refer to NP-hard variants of wcVNE as NP-complete.
However, in this section we only discuss polynomial variants since all NP-complete variants follow from NP-completeness of this variant in either wVNE or cVNE form.
Note that some presented algorithms might not be optimal~--- our main goal is to show their polynomial complexity.

\subsection{wcVNE of Uniform Line VN on Tree PN}\label{sec:line_on_tree_DP}

Now, we again consider the uniform line as VN.
The problem is NP-complete for an arbitrary PN as shown in Theorem~\ref{th:ULEP_wVNE_NPC}~and by Wu et al.~\cite{Wu20_path}.
However, when PN is a tree, wcVNE can be solved in polynomial time.
Thus, more restricted problems, wVNE and cVNE, are also in P.

We consider the embedding of a line on a tree as the path that traverses the following nodes: $f(1) \rightarrow f(2) \rightarrow \ldots \rightarrow f(n)$. The total cost is the cost of this path.

\begin{lemma}\label{lem:uni_line_2}
    If there exists an optimal embedding $f$ of a uniform line on a tree $T$, there exists an optimal embedding $g$, such that in a single direction every edge is traversed at most once.
\end{lemma}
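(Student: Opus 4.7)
The plan is to take an arbitrary optimal embedding $f$ and, via a sequence of local exchanges that preserve optimality, transform it into an embedding $g$ with the desired property. I view the embedding as a walk $f(1) \to f(2) \to \cdots \to f(n)$ that concatenates the unique tree-paths between consecutive images, so the cost equals the total weighted length of this walk.

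Suppose some edge $e=(a,b)\in T$ is traversed in some direction, say $a\to b$, at least twice. Removing $e$ splits $T$ into components $A\ni a$ and $B\ni b$; labelling each position $i$ by the component containing $f(i)$ turns the number of traversals of $e$ into the number of changes in the resulting $A/B$ string. Two $a\to b$ transitions force a pattern of the form $\ldots A^{p_1}B^{q_1}A^{p_2}B^{q_2}\ldots$. Let $t_1$ be the last $A$-index before the first $B$-block, $s_1$ the last index of that $B$-block, and $t_2$ the last $A$-index of the second $A$-block. The exchange step swaps the blocks $f(t_1+1),\dots,f(s_1)$ and $f(s_1+1),\dots,f(t_2)$ to obtain a new bijection $g$. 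Only three transitions at the boundaries of this window change. Using the tree-identity $d(u,v)=d(u,a)+w(e)+d(b,v)$ for $u\in A$, $v\in B$ together with the triangle inequality for the two new same-side transitions, I would derive
\[
Cost(g)\;\leq\;Cost(f)-2\,w(e).
\]
Hence the swap never increases the cost and strictly decreases it when $w(e)>0$.

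Since $f$ is optimal, any edge with two or more traversals in the same direction must have weight zero. For such an edge the displayed inequality must be an equality, forcing the two triangle inequalities to be tight, i.e.\ $a$ lies on the new path from $f(t_1)$ to $f(s_1+1)$ and $b$ lies on the new path from $f(s_1)$ to $f(t_2+1)$. This rigidity is what guarantees that the new tree-paths re-use exactly the same edges in $A$ and $B$ as the corresponding pieces of the old boundary paths, so the swap changes traversal counts only at $e$ itself, reducing them by exactly two. Using the total unweighted path length $\Phi(g)=\sum_i len(g_E((i,i+1)))$ as an integer potential, each swap preserves optimality and strictly decreases $\Phi$ by two, so the process terminates after finitely many iterations at an optimal embedding with no edge traversed twice in the same direction. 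The main obstacle, and the step that takes real work, is precisely this rigidity claim: without tightness of the triangle inequalities in the $w(e)=0$ case, one could worry that reducing traversals of $e$ might inflate the traversal count of some other edge, and the potential argument would collapse.
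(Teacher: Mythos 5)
Your strategy coincides with the paper's: take an optimal embedding, view it as a walk on $T$, and eliminate a doubled same-direction traversal of an edge $e=(a,b)$ by a cost-non-increasing local exchange, iterating until no edge is traversed twice in one direction. The paper's move reverses the entire segment of the walk between the two $a\to b$ crossings, while yours transposes the intervening $B$-block with the following $A$-block; both moves remove two crossings of $e$, and your exchange inequality $Cost(g)\le Cost(f)-2\,w(e)$ is derived correctly.

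The step that fails is the rigidity claim. Optimality does give $w(e)=0$ and equality throughout, so the two weighted triangle inequalities are tight. But since edge costs may be zero, tightness of $d(f(t_1),f(s_1+1))=d(f(t_1),a)+d(a,f(s_1+1))$ does \emph{not} place $a$ on the tree path from $f(t_1)$ to $f(s_1+1)$: in a star with center $c$, leaves $x,y,a$ and all costs zero, one has $d(x,y)=0=d(x,a)+d(a,y)$ although $a$ is off the path $x\text{--}c\text{--}y$. So you cannot conclude that the new tree-paths reuse the old edges, nor that $\Phi$ drops by exactly two. The good news is that you never needed rigidity: repeat your three-transition computation with every edge cost replaced by $1$. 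The decomposition of each crossing transition as (hops to $a$) $+\,1\,+$ (hops from $b$), together with the hop-count triangle inequality for the two new same-side transitions, yields $\Phi(g)\le\Phi(f)-2$ unconditionally, while the weighted version yields $Cost(g)\le Cost(f)$. These two facts are all your potential argument requires, and with that substitution the proof closes.
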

\begin{proof}
    Consider some optimal embedding $f$ of a line on tree $T$ with edge $u \rightarrow v$ traversed at least twice.
    In other words, $f$ maps VN on a path $p = \ldots \rightarrow u \rightarrow v \rightarrow \sigma \rightarrow u \rightarrow v \rightarrow \ldots$, where $\sigma$ is a subpath of $p$.
    Since $T$ is undirected, there exist path $p' = \ldots \rightarrow u \rightarrow \overline{\sigma} \rightarrow v \rightarrow \ldots$, where $\overline{\sigma}$ is the reverse of $\sigma$.
    $p'$ visits the same vertices $p$ visits, but $p'$ has a smaller cost.
    Since $p$ is finite, we can repeat the path reduction until it has no repeated edges.
\end{proof}

\begin{theorem}
wcVNE of uniform linear VN on tree PN is in P.
\end{theorem}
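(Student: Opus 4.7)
The plan is a dynamic program on the tree PN. By Lemma~\ref{lem:uni_line_2}, we may restrict attention to embeddings whose associated walk $f(1) \to f(2) \to \cdots \to f(n)$ uses each directed edge of the tree at most once, so each undirected edge is traversed either once or twice. Since the VN weights are all $1$, the cost is simply the total number of directed-edge traversals, and the capacity constraint on an edge $e$ reduces to $k(e) \le c_e$ where $k(e) \in \{1,2\}$ is the number of times $e$ is traversed.

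Root the tree PN at an arbitrary vertex $r$. For each vertex $v$ with subtree $T_v$, observe that the positions of $f$ whose images lie in $T_v$ split into at most two maximal contiguous runs, since the parent edge of $v$ is used at most twice. Each such run has two ``ends'': each end is either a global line-endpoint (position $1$ or $n$) or a point where the line crosses the parent edge of $v$. We define $\mathrm{OPT}(v,a,b)$ to be the minimum total cost of a partial embedding of the induced segments into $T_v$, where $a \in \{0,1,2\}$ is the number of global line-endpoints lying in $T_v$, $b \in \{0,1,2\}$ is the number of crossings of the parent edge, and $a+b$ is even (so the number of runs is $(a+b)/2 \in \{0,1,2\}$); infeasible or capacity-violating configurations are given value $+\infty$. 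The answer to wcVNE is $\mathrm{OPT}(r,2,0)$: both line-endpoints inside the whole tree, no parent edge at $r$.

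At a leaf $v$ only a constant number of triples $(a,b)$ is feasible and each has cost $0$. At an internal node $v$ with children $c_1,\ldots,c_d$, we merge children one at a time, maintaining a running summary of constant size: how many segment ends at $v$ have already been produced and whether $v$ itself has been consumed as a line-vertex (as a global endpoint, as an internal vertex of the line, or not yet). For each child $c_i$ we try all its state triples $(a_i,b_i)$, charge $\mathrm{OPT}(c_i,a_i,b_i)+b_i\cdot t_{(v,c_i)}$, and require $b_i\le c_{(v,c_i)}$. Finally the segment ends amassed at $v$ (coming from the children plus the up-to-two coming from the parent edge of $v$ counted by $b$) are paired up through $v$, consistently with $v$'s line-degree being at most $2$, to produce the $(a+b)/2$ runs recorded in $\mathrm{OPT}(v,a,b)$. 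Since each child contributes $O(1)$ ends and the running state is of constant size, this is a standard tree DP running in time polynomial in $n$.

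The main obstacle is the bookkeeping of the merge step: one has to enumerate carefully how the incoming segment ends from children can be paired at $v$, whether $v$ participates as an internal line vertex or as an endpoint, and how many runs remain after incorporating each child, all while respecting degree consistency and edge capacities. Once this finite case analysis is written out, correctness is immediate: every embedding satisfying the conclusion of Lemma~\ref{lem:uni_line_2} restricts on each $T_v$ to exactly one configuration $(a,b)$, so the minimum over configurations computed by the DP equals the optimal cost of wcVNE.
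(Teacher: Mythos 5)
Your overall strategy is viable and genuinely different from the paper's. The paper does not set up a dynamic program at all: once Lemma~\ref{lem:uni_line_2} is applied, the walk $f(1)\to\cdots\to f(n)$ must traverse every tree edge exactly twice except for the edges on the unique simple path between $f(1)$ and $f(n)$, which are traversed exactly once (edges off that path are crossed an even, positive number of times; edges on it an odd number of times; and both are capped at two). Hence the cost is $2\sum_e t_e$ minus the weight of that path, and the capacity constraints merely force every capacity-$1$ edge onto the path (capacity-$0$ edges make the instance infeasible). The whole problem collapses to finding a maximum-weight simple path in the tree containing all capacity-$1$ edges, solvable in $O(n)$. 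Your subtree DP proves the same polynomial bound but pays for it with the merge machinery; the paper's observation buys a closed-form characterization of the optimum and a linear-time algorithm.

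That said, the step you defer is not as routine as you suggest, and the claim that a \emph{constant-size} running summary suffices at the merge is not justified as stated. When you fold in the children of $v$ one at a time, a single final run of $T_v$ may interleave pieces from the children in an arbitrary order ($c_1$, then $c_3$, then $c_2$, \dots), so after processing the first $i$ children you may be holding $\Theta(i)$ partial runs with open ends at $v$, not $O(1)$. Moreover, the summary must record not only how many open ends exist but how they are grouped into partial runs, since pairing the two ends of the same partial run through $v$ closes a cycle and must be forbidden; a bare count of ends cannot detect this. Both issues are repairable: either add an exchange argument (permuting the pieces of a run into increasing child order changes no edge's traversal count and hence no cost or capacity usage, after which each final run has at most one piece in progress and the state really is constant), or let the state carry per-group counts bounded by $\deg(v)$, which keeps the DP polynomial. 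Either fix works, but as written the merge step is where the proof actually lives, and it is precisely the part you have not supplied.
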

\begin{proof}
An optimal embedding of a line may not exist due to capacity restrictions.
However, when it exists, there is a direct consequence to Lemma~\ref{lem:uni_line_2}.
There exists an optimal embedding $g$ with every edge traversed at most twice (once in every direction). 

Assume an optimal embedding $g$ from Lemma~\ref{lem:uni_line_2} that starts traversing tree PN $T$ at vertex $s$ and ends at vertex $t$.
Then, by Lemma~\ref{lem:uni_line_2} every edge on a simple path between $s$ and $t$ is traversed once and all other edges are traversed twice.
Thus, the wcVNE problem of embedding a line on a tree $T$ is equivalent to finding a simple path with the maximal length in $T$ that includes all edges with capacity of $1$.
This can be solved in $O(n)$ time.
\end{proof}

The problem can be solved using dynamic programming with the polynomial complexity even when the sizes of VN and PN are different.

\subsection{wcVNE of Oversubscribed $2$-star on Tree}\label{sec:oversub_on_tree}

In Section~\ref{sec:oversub_2satr}, we showed that an embedding of an oversubscribed $2$-star VN on an arbitrary graph PN is NP-complete.
However, that does not necessarily mean that its embedding is NP-complete for all PN topologies.
For example, we can embed an oversubscribed $2$-star VN on tree PN in polynomial time.
Interestingly, when we remove the oversubscribed condition from $2$-star VN, the problem of embedding it on a tree becomes NP-complete for both wVNE and cVNE as shown earlier in Sections~\ref{sec:2star_wVNE}~and~\ref{sec:cVNE}.

\begin{theorem}
wcVNE of oversubscribed $2$-star VN on tree PN is in P.
\end{theorem}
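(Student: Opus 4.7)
The plan is to solve the problem by dynamic programming on the rooted tree PN, iterating over all possible placements of the VN root.

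First, I would enumerate each PN vertex $r \in V_G$ as a candidate image of the VN root, root the tree PN at $r$, and compute the minimum-cost valid embedding with the VN root fixed at $r$; the overall answer is the minimum over all $n$ choices of $r$. For a fixed $r$, I would define $DP[v][C][X]$ at each PN vertex $v$ to be the minimum cost of a partial embedding restricted to the subtree $T_v$ that places $C$ group centers inside $T_v$ and sends exactly $X$ of those centers' $sC$ leaves outside $T_v$. The bijection constraint then determines $F_v = |V(T_v)| - C(s+1) + X$, the number of ``foreign'' leaves placed in $T_v$ whose center sits outside $T_v$; $F_v$ must be nonnegative. The state space is $O(n \cdot g \cdot sg)$, polynomial in the input.

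The correctness of tracking only aggregate counts $(C, X)$ rests on the symmetry of the oversubscribed $2$-star: all $g$ groups are structurally identical, with every leaf having unit demand to its center and every center having demand $s/o$ to the root. Consequently, permuting the assignment of leaves to centers preserves both feasibility and total cost, so only counts matter. The DP transition at an internal vertex $v$ combines children tables by choosing $v$'s role (a group center with some of its $s$ leaves placed in children and some sent above; an own-leaf of a center in some child; or a foreign leaf whose center lies above $v$), and by distributing each child's $X_{u_i}$ out-leaves among valid destinations: above $v$, at $v$ itself, or cross-child into a sibling subtree as a foreign leaf there. The edge $(v, u_i)$ contributes cost $t(v, u_i) \cdot ((s/o) C_{u_i} + X_{u_i} + F_{u_i})$, and its capacity bound $c(v, u_i)$ is enforced at this step. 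At the root, $v = r$ hosts the VN root and combines its children under the boundary conditions $C_r = g$ and $X_r = 0$, reflecting that no center lies above $r$ and no leaf escapes $T_r$.

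The hard part will be the cross-child matching of out-leaves to foreign-leaf slots across siblings: such matchings must respect the per-center $s$-leaf cap and a transportation-style feasibility requirement (a child subtree cannot absorb its own out-leaves, so if $\sigma_i$ out-leaves leave $T_{u_i}$ and $\tau_i$ foreign-leaf slots remain to be filled in $T_{u_i}$, we need $\sigma_i + \tau_i \leq \sum_j \sigma_j$ for each $i$). Because all groups are symmetric, these constraints collapse to simple count-based conditions checkable in polynomial time. Summing the polynomially many children-combining choices over all vertices and all $n$ root candidates then gives a polynomial overall running time, establishing that wcVNE of an oversubscribed $2$-star on a tree PN lies in $P$, and therefore so do the more restricted wVNE and cVNE variants.
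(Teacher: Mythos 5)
Your proposal matches the paper's proof essentially step for step: both enumerate the image of the VN root, root the tree PN there, and run a dynamic program over subtrees whose state records the number of group centers placed inside and the number of leaves crossing the subtree boundary in each direction, combining the children one at a time with an inner knapsack-style DP and checking the edge capacity against the induced bandwidth $x\cdot\frac{s}{o}+\ell_{out}+\ell_{in}$. The only differences are cosmetic: you derive the count of incoming foreign leaves from the bijection constraint instead of carrying it as a third index, and you are somewhat more explicit than the paper (which only imposes a flow-conservation equation when merging children) about the transportation-feasibility conditions needed to realize the aggregate counts by an actual matching of out-leaves to foreign-leaf slots across sibling subtrees.
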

\begin{proof}
We can solve this problem using the dynamic programming approach.
Let $G$ be a given tree PN and VN $O$~---~an~oversubscribed $2$-star with $s$ leaves in each subtree.
The oversubscriptions factor is $o$.
At first, we try to map the root of $O$ to all possible vertices.
Suppose that it is mapped on a vertex $R$ in $G$ that is becoming a root.

Now, suppose the following:
\begin{itemize}
    \item $r \ne R$ is the root of some subtree $T_r$ in $G$;
    \item $c_r$ and $w_r$ are the capacity and the cost of an edge between $r$ and its parent;
    \item $x$ subtree roots of $O$ are embedded on vertices of $T_r$;
    \item $\ell_{out}$ leaves are leaving $T_r$, i.e., $\ell_{out}$ leaves of $O$ are embedded outside of $T_r$ while their parents are embedded in $T_r$;
    \item $\ell_{in}$ leaves embedded inside of $T_r$ while their parents are embedded outside of $T_r$.
\end{itemize}

We denote the dynamic programming $d(r, x, \ell_{out}, \ell_{in})$ as the minimal possible cost of such a configuration on $T_r$, including the cost of an edge from $r$ to its parent.
If this configuration is not achievable (for example, due to capacity constraints), we assume $d(\ldots) = \infty$.

For a given $r \ne R$ and $d(r, x, \ell_{out}, \ell_{in})$, the bandwidth required for the edge from $r$ to its parent is $b = x \cdot \frac{s}{o} + \ell_{out} + \ell_{in}$.
If $b$ is bigger than the capacity $c_r$ of the edge, then $d(r, x, \ell_{out}, \ell_{in}) = \infty$.
Otherwise, we compute $d(r, x, \ell_{out}, \ell_{in})$.

Suppose the children of $r$ are denoted as $c_1, c_2, \ldots, c_k$.
We denote $d_i(r, x, \ell_{out}, \ell_{in})$ as the minimal possible cost of $d(r, x, \ell_{out}, \ell_{in})$, such that only children $c_1, c_2, \ldots, c_i$ are used.
$d(r, x, \ell_{out}, \ell_{in}) = b \cdot w_r + d_k(r, x, \ell_{out}, \ell_{in})$:
the minimal embedding cost of all children subtrees plus the cost of an edge from $r$ to its parent.
$d_0$ means no children are used.
\begin{align*}
d_0(r, 0, 0, 1) = 0 && \text{$r$ embeds the leaf of $O$} \\
d_0(r, 1, s, 0) = 0 && \text{$r$ embeds the subtree root of $O$} \\
d_0(r,  x, \ell_{out}, \ell_{in}) = \infty && \text{otherwise}
\end{align*}

These equations show that $r$ either 1)~``consumes'' a leaf of $O$ or 2)~``consumes'' a subtree root of $O$ and ``generates'' $s$ new leaves.

When adding a new $c_i$ into consideration we have to choose how many vertices the subtree of $c_i$ ``consumes'': we need to choose $x^1, x^2, \ell_{out}^1, \ell_{out}^2, \ell_{in}^1, \ell_{in}^2$, with $x^2, \ell_{out}^2, \ell_{in}^2$ being the parameters for the subtree of $c_i$.
\begin{align*}
    x & = x^1 + x^2 \\
    \ell_{out} + \ell_{in}^1 + \ell_{in}^2 & = \ell_{in} + \ell_{out}^1 + \ell_{out}^2 \\
    d_i(r,  x, \ell_{out}, \ell_{in}) & = min (d_{i - 1}(r,  x^1, \ell_{out}^1, \ell_{in}^1) \\
    &\phantom{= min(} + d(c_i, x^2, \ell_{out}^2, \ell_{in}^2))
\end{align*}

By that, we choose how many leaves and subtree roots of $O$ go to the $c_i$ and the remaining must be embedded on the rest of $T_r$.

For the root $R$ the formulas are similar.
$d_0$ is only defined at $d_0(R, 0, 0, 0) = 0$ since $R$ already embeds the root of $O$ and $w_R = 0$, since $R$ has no parent.
$d(R, g, 0, 0)$ is the minimal possible cost of embedding $O$ on $G$ (with $g$ being the number of subtrees in $O$).

It is easy to verify that this algorithm is polynomial: 
choosing of $R$ is linear;
four dimensions of $d(r, x, \ell_{out}, \ell_{in})$ are restricted by $n$;
each $d(r, x, \ell_{out}, \ell_{in})$ is calculated in polynomial time with another dynamic programming by $d_i(r, x, \ell_{out}, \ell_{in})$;
each $d_i(r, x, \ell_{out}, \ell_{in})$ is calculated in polynomial time since all the parameters are restricted by $n$.

The correctness of the algorithm follows by the construction.
\end{proof}

\subsection{wcVNE of Arbitrary Graph on Star}\label{sec:star_PN_Poly}

In Section~\ref{sec:star_wVNE} we discussed that wVNE with star VN is polynomial.
Similarly, when PN is a star, the VNE can be reduced to solving an \emph{assignment problem}~\cite{Kuhn55_hungarian}.
However, that only holds with the assumption of one-to-one vertex embedding.
If we allow multiple VN nodes to be embedded on a single PN node, even cVNE on star PN is NP-hard~\cite{LiZWGZ15}.

\begin{theorem}
wcVNE of arbitrary VN on star PN is in P.
\end{theorem}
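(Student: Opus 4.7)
The plan is to show that, once we decide which VN vertex is mapped to the center of the star PN, the remaining problem is a classical assignment problem and can be solved by the Hungarian algorithm.

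First I would observe that in a star PN the routing is forced: the shortest (and only simple) path between two distinct leaves passes through the center, and the path between a leaf and the center is a single edge. Hence for every node embedding $f_V$ there is a unique edge embedding, and both the embedding cost and the capacity usage on each link are completely determined by $f_V$.

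Next I would enumerate the $n$ choices for the vertex $c^\ast \in V_S$ that is mapped to the center of the star, and for each choice rewrite the total cost in a convenient form. Let $\deg_w(u) := \sum_{(u,v) \in E_S} w_{(u,v)}$ denote the total demand incident to $u$ in VN. For a fixed $c^\ast$, splitting the sum in Definition~\ref{def:Cost} into edges incident to $c^\ast$ and edges among the non-central VN vertices gives
\begin{equation*}
Cost(f) \;=\; \sum_{u \in V_S \setminus \{c^\ast\}} \deg_w(u) \cdot t_{(\text{center},\, f_V(u))},
\end{equation*}
because every edge $(u,v)$ with $u,v \neq c^\ast$ contributes its demand once to the coefficient of $t(\text{center}, f_V(u))$ and once to the coefficient of $t(\text{center}, f_V(v))$, while edges incident to $c^\ast$ contribute their demand only once. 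Moreover, the capacity constraint on the PN edge between the center and a leaf $\ell$ becomes the simple inequality $\deg_w(f_V^{-1}(\ell)) \le c_{(\text{center},\ell)}$, because every VN edge incident to the VN vertex placed on $\ell$ must traverse that PN edge and no other VN edge uses it.

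The remaining task, for each fixed $c^\ast$, is therefore to find a bijection between $V_S \setminus \{c^\ast\}$ and the leaves of the star that minimises the linear cost above while obeying the per-leaf capacity inequalities. I would encode this as an $(n-1) \times (n-1)$ assignment problem with cost matrix
\begin{equation*}
M(u, \ell) \;=\; \begin{cases} \deg_w(u) \cdot t_{(\text{center},\ell)} & \text{if } \deg_w(u) \le c_{(\text{center},\ell)},\\ +\infty & \text{otherwise,}\end{cases}
\end{equation*}
and solve it by the Hungarian algorithm in $O(n^3)$ time. Taking the minimum over the $n$ choices of $c^\ast$ yields an overall $O(n^4)$ algorithm; a feasible embedding exists iff one of these sub-problems admits a finite-cost matching.

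The main obstacle is the capacity side: one has to check carefully that the only VN edges crossing the PN link $(\text{center},\ell)$ are precisely those incident to the VN vertex placed on $\ell$, and in particular that nothing else (such as edges entirely among other leaves) adds load to that link. The forced-routing observation above takes care of this, but it is the step where the argument could otherwise go wrong if one allowed, for example, many-to-one node embeddings; indeed, as noted in~\cite{LiZWGZ15} and remarked just before the theorem, dropping the bijection turns the problem NP-hard.
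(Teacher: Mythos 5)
Your proof is correct and follows essentially the same route as the paper: on a star the routing is forced, the cost and the load on each center--leaf link decompose into a per-vertex term \(\deg_w(u)\cdot t_{(\mathrm{center},\,f_V(u))}\) (resp.\ the constraint \(\deg_w(u)\le c_{(\mathrm{center},\ell)}\)), so the whole problem becomes an assignment problem solvable by the Hungarian algorithm. The only difference is cosmetic: you enumerate the \(n\) candidates for the center vertex and solve \(n\) assignment problems of size \(n-1\) (giving \(O(n^4)\)), whereas the paper folds the center into a single \(n\times n\) assignment by assigning the central PN node distance \(0\) and capacity \(\infty\), yielding \(O(n^3)\).
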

\begin{proof}
Consider an instance of VNE with VN $G$ and a star PN $S$, with $|S| = |G| = n$.
In PN $S$, we denote the weight of edge $e$ from node $j \leq n$ to the central node $n$ as $d_j$ and the capacity of $e$ is $c_j$, when $d_n = 0$ and $c_n = \infty$.
In VN $G$, the demand of an edge $uv$ is $w_{uv}$.
Additionally, we denote the sum of the costs of adjacent edges of $u$ as $w_u$.

Since $S$ is a star, for any embedding $f$ its cost is:
$$Cost(f) = \sum_{uv \in E(G)} w_{uv} \cdot (d_{f(u)} + d_{f(v)}) = \sum_{u = 1}^{n} d_{f(u)} \cdot w_{u}$$

Hence, the cost of an embedding of vertex $u \in V(G)$ on vertex $x \in V(S)$ is $d_{x} \cdot w_{u}$.
In the same way, we can show that $u$ can be embedded on $x$ if and only if $w_{u} \leq c_x$.
The way we define $d_n$ and $c_n$ accounts for the 
uniqueness of the root in PN.

Notice that we reduced VNE to the problem of matching vertices of VN to vertices of PN with an independent cost of mapping vertex $u$ on vertex $x$.
That is exactly the assignment problem that has a polynomial solution.
For example, it can be solved with the Hungarian algorithm~\cite{Kuhn55_hungarian}.
\end{proof}

\section{Related Work}
\label{sec:related}
% 
% If the capacity of vertices $> 1$ VNE is always NP-hard~\cite{FigielKNR0Z21_Tree}.
% 
Over the past few decades, several variations of the VNE problem have been studied.
Fischer et al.~\cite{FischerBBMH13_survey} provided a classification of VNE instances based on the algorithmic approach.
%
% While, the survey~\cite{Belbekkouche12_resource} focused on resource allocations, such as capacity restrictions, \va{and}, in VNE.

\paragraph{Problem statements}
VNE problem is commonly used to reserve bandwidth and process requests in a network.
Hence, the main parameters of the problem are capacities defined on the physical network~\cite{FischerBBMH13_survey}, i.e., each link can handle only some fixed number of connections.
Most of the studies try to satisfy the requirements posed by capacities and minimize some cost of an embedding at the same time~\cite{RostFS15_stars, HouidiLBZ11}.
There exist several definitions of the cost~\cite{FischerBBMH13_survey}, for example: the total length of requests~\cite{RostFS15_stars, Fuerst17_replication, DiazPS02_survey}, the revenue~\cite{Chowdhury11_vineyard}, an acceptance ratio~\cite{HouidiLBZ11}, an energy usage~\cite{Botero12_energy}, etc.
%
% Some tasks tolerate only optimizing the cost~\cite{} or satisfying the capacity~\cite{}.

Regardless of the variation, both optimizing the cost and satisfying the capacity requirements are known to be NP-hard~\cite{FischerBBMH13_survey} on generic graphs.
Rost and Schmid~\cite{RostS20_complexity} proved the NP-completeness of VNE not only for enforcing the capacity of a physical network, but also under placement, routing, or latency restrictions.
Introducing server capacities restrictions (i.e. allowing servers to embed multiple virtual actors) is hopeless for non-uniform demand.
For that formulation, Figiel et al.~\cite{FigielKNR0Z21_Tree} showed that VNE is NP-hard even for arbitrary VN and PN of constant size.

\paragraph{Approaches to solve general VNE}
As discussed above, the generic VNE is NP-hard.
For small networks, efficient exact algorithms can be designed by using Integer linear Programming~\cite{HouidiLBZ11, Botero12_energy}.
Heuristic or meta-heuristic approaches are more appropriate for larger instances where not the exact answer is necessary.
For example, the VNE can be reduced to the Subgraph Isomorphism Detection problem as was shown by Lischka and Karl~\cite{Lischka09_heuristic}.
Several other works provide heuristic solutions, including the Max-Min-Ant Colony meta-heuristic used by Fajjari et al.~\cite{Fajjari11_vneAnts}. 
Nevertheless, there remain few applications where the optimal solution is possible to compute on large instances, for example, for stars~\cite{RostFS15_stars}.

\paragraph{Topologies restrictions}
A popular way to simplify the VNE problem is to consider restricted topologies of physical and virtual networks.

Most PNs have a prescribed topology: e.g., FatTree~\cite{Leiserson85_fat}, BCube~\cite{Guo09_bcube}. 
At first, we overview a line of work that assumes restricted topologies of PN.
VNE with an arbitrary VN remains NP-hard even for linear PN.
This result follows from the Minimal Linear Arrangement~\cite{DiazPS02_survey} problem that embeds VN on a uniform linear network.
Nevertheless, there are works considering different topologies.
For example, an exact exponential DP algorithm for PN with tree topology was proposed by Figiel et al.~\cite{FigielKNR0Z21_Tree}.
Some authors also assume a topology of the PN when conducting their study: e.g., tree~\cite{Fuerst17_replication}, cycle~\cite{Wu20_path}.

Other papers restrict the topology of the VN.
For example, some VN abstractions are introduced by Ballani et al.~\cite{BallaniCKR11_virtualNet}: a ``virtual cluster'' (also called star~\cite{RostFS15_stars}) and its oversubscribed version.
Rost et al.~\cite{RostFS15_stars} provide a polynomial algorithm that embeds a star VN on an arbitrary graph.
Using a similar algorithm, they solve the hose embedding problem on physical networks without capacity restrictions.
The VNE for planar~\cite{RostS20_complexity}, linear~\cite{Wu20_path}, and cycle~\cite{Wu20_path} VNs are shown to be NP-hard.
However, as provided by Wu et al.~\cite{Wu20_path}, there are polynomial algorithms for uniform path-on-path and uniform cycle-on-cycle VNE.
Also, there exist several approximation algorithms for chain~\cite{Even16_chainApproximation} and cactus~\cite{Rost19_cactusVirtual} virtual networks.

\section{Conclusion}
\label{sec:conclusion}

In this paper, we provided an extensive set of results on the complexity of VNE with the constraints of popular tree-like topologies on VN and PN.
Similar to the general VNE, many variations remain NP-hard.
One of our most important results is that VNE of oversubscribed $2$-star VN on a generic PN is NP-complete, while it is polynomial for tree PN.

Our work motivates the further study of the polynomial variants of VNE.
In particular, we would be interested in the generalization of our version:
not assuming the same size of VN and PN.
Additionally, it would be interesting to optimize the presented polynomial algorithms.

Finally, it is interesting to look into other topologies, for example, a FatTree which are widely deployed in datacenters.
 
\bibliographystyle{abbrv}
\bibliography{references}
 
\end{document}